\newcommand{\ourtitle}{Chase Termination Beyond Polynomial Time}
\declaretheorem[sharenumber=dummytheorem]{lemma}
\declaretheorem[sharenumber=dummytheorem,title=Lemma]{applemma} %
\declaretheorem[sharenumber=dummytheorem]{example}
\declaretheorem[sharenumber=dummytheorem]{definition}
\definecolor{darkgreen}{HTML}{36AB14}
\newcommand{\defeq}{:=}
\newcommand{\opfont}[1]{\text{\sf{#1}}} %
\renewcommand{\vec}[1]{\bm{#1}}
\newcommand{\ltuple}{\langle}
\newcommand{\rtuple}{\rangle}
\newcommand{\tuple}[1]{\ltuple{#1}\rtuple}
\newcommand{\card}[1]{\vert #1 \vert}
\newcommand{\quantify}{\reflectbox{\texttt{Q}}}
\providecommand*{\cupdot}{%
	\mathbin{%
		\mathpalette\@cupdot{}%
	}%
}
\newcommand*{\@cupdot}[2]{%
	\ooalign{%
		$\m@th#1\cup$\cr
		\hidewidth$\m@th#1\cdot$\hidewidth
	}%
}
\newcommand{\arity}{\opfont{ar}} %
\newcommand{\Dnter}{\mathcal{D}} %
\newcommand{\Inter}{\mathcal{I}} %
\newcommand{\nulls}[1]{\sigNull(#1)} %
\newcommand{\ppos}[2]{\tuple{#1,#2}} %
\newcommand{\arule}{\rho} %
\newcommand{\rulebody}{\opfont{body}} %
\newcommand{\rulehead}{\opfont{head}} %
\newcommand{\aprogram}{\Sigma} %
\newcommand{\adatalogprogram}{\aprogram_{\opfont{DL}}} %
\newcommand{\ledgraph}{\opfont{LXG}} %
\newcommand{\ledgeto}[1]{\stackrel{#1}{\to}} %
\newcommand{\ledge}[3]{#1\ledgeto{#3}#2} %
\newcommand{\lchaseedgeto}[1]{\stackrel{#1}{\twoheadrightarrow}} %
\newcommand{\lchaseedge}[3]{#1\lchaseedgeto{#3}#2} %
\newcommand{\scc}[1]{\opfont{SCC}(#1)} %
\newcommand{\confluence}[1]{\opfont{conf}(#1)} %
\newcommand{\rank}[1]{\opfont{rank}(#1)} %
\newcommand{\fnChase}[1]{\opfont{chase}(#1)} %
\newcommand{\nullvar}[1]{\opfont{var}(#1)} %
\newcommand{\steprule}[1]{\opfont{tgd}[{#1}]} %
\newcommand{\stepmatch}[1]{\sigma[{#1}]} %
\newcommand{\stepmatchex}[1]{\sigma^+[{#1}]} %
\newcommand{\locdec}{\opfont{L}} %
\newcommand{\nufoedge}{\mathrel{\hat{\twoheadrightarrow}}} %
\newcommand{\fanufoedge}{\mathrel{\tilde{\twoheadrightarrow}}} %
\newcommand{\fanufoedgeStar}{\mathrel{\tilde{\twoheadrightarrow}^\ast}} %
\newcommand{\fLineagePath}{\opfont{path}} %
\newcommand{\complclass}[1]{\text{{\sc #1}}\xspace} %
\newcommand{\PTime}{\complclass{P}}
\newcommand{\PSpace}{\complclass{PSpace}}
\newcommand{\ExpSpace}{\complclass{ExpSpace}}
\newcommand{\NExpSpace}{\complclass{NExpSpace}}
\newcommand{\ExpTime}{\complclass{ExpTime}}
\newcommand{\kExpTime}[1]{\ensuremath{#1\text{-\ExpTime}}}
\newcommand{\kExpSpace}[1]{\ensuremath{#1\text{-\ExpSpace}}}
\newcommand{\kNExpSpace}[1]{\ensuremath{#1\text{-\NExpSpace}}}
\newcommand{\natNum}{\mathbb{N}}
\newcommand{\sigPred}{\mathbf{P}}
\newcommand{\sigNull}{\mathbf{N}}
\newcommand{\sigCons}{\mathbf{C}}
\newcommand{\sigVar}{\mathbf{V}}
\newcommand{\predname}[1]{\mathit{#1}\xspace}
\newcommand{\pnElem}{\predname{elem}}
\newcommand{\pnSU}{\predname{su}}
\newcommand{\pnGetSU}{\predname{getSU}}
\newcommand{\pnEmptySet}{\predname{empty}}
\newcommand{\pnSet}{\predname{set}}
\newcommand{\pnNegated}{\predname{neg}}
\newcommand{\pnPositive}{\predname{pos}}
\newcommand{\pnFirst}{\predname{first}}
\newcommand{\pnNext}{\predname{next}}
\newcommand{\pnNextVar}{\predname{nxt}}
\newcommand{\pnLast}{\predname{last}}
\newcommand{\pnLatest}{\predname{new}}
\newcommand{\pnExVar}{\predname{ex}}
\newcommand{\pnOccurs}{\predname{in}}
\newcommand{\pnTrue}{\predname{sat}}
\newcommand{\pnTruePosSucc}{\predname{sat_+}}
\newcommand{\pnTrueNegSucc}{\predname{sat_-}}
\newcommand{\pnAccumulator}{\predname{sat}}
\newcommand{\pnSymb}{\predname{symbol}} %
\newcommand{\pnStep}{\predname{step}}
\newcommand{\pnNextConf}{\predname{to}}
\newcommand{\pnNextConfReal}{\predname{succ}}
\newcommand{\pnTape}{\predname{tape}}
\newcommand{\pnHPos}{\predname{hpos}}
\newcommand{\pnState}{\predname{q}}
\newcommand{\pnAccept}{\predname{acc}}
\newcommand{\symSetPosConjVar}[2]{\textnormal{Pos}^{#1}_{#2}}
\newcommand{\fnApplyPath}{\textnormal{Path}}
\newcommand{\blank}{\texttt{\textvisiblespace}} %
\begin{document}

\title{\ourtitle}
\thanks{This paper is the technical report for the paper with the same title
appearing in the Proceedings of the 43rd Symposium on Principles of Database
Systems (PODS'24) \cite{HanischK:chaseTerminationBeyondPTime:pods24}.}

\author{Philipp Hanisch}
\email{philipp.hanisch1@tu-dresden.de}
\orcid{0000-0003-3115-7492}
\affiliation{%
  \institution{Knowledge-Based Systems Group, TU Dresden}
  \city{Dresden}
  \country{Germany}
}

\author{Markus Krötzsch}
\email{markus.kroetzsch@tu-dresden.de}
\orcid{0000-0002-9172-2601}
\affiliation{%
  \institution{Knowledge-Based Systems Group, TU Dresden}
  \city{Dresden}
  \country{Germany}
}

\begin{abstract}
The \emph{chase} is a widely implemented approach to reason with tuple-generating dependencies (tgds),
used in data exchange, data integration, and ontology-based query answering.
However, it is merely a semi-decision procedure, which may fail to terminate.
Many decidable conditions have been proposed for tgds to ensure chase termination,
typically by forbidding some kind of ``cycle'' in the chase process.
We propose a new criterion that explicitly allows some such
cycles, and yet ensures termination of the standard chase under reasonable conditions.
This leads to new decidable fragments of tgds that are not only
syntactically more general but also strictly more expressive than the fragments 
defined by prior acyclicity conditions.
Indeed, while known terminating fragments are restricted to \complclass{PTime} data complexity,
our conditions yield decidable languages for any \kExpTime{k}.
We further refine our syntactic conditions to obtain fragments of tgds for which an optimised
chase procedure decides query entailment in \PSpace or \kExpSpace{k}, respectively.
\end{abstract}

\begin{CCSXML}
<ccs2012>
 <concept>
  <concept_id>10010520.10010553.10010562</concept_id>
  <concept_desc>Computer systems organization~Embedded systems</concept_desc>
  <concept_significance>500</concept_significance>
 </concept>
 <concept>
  <concept_id>10010520.10010575.10010755</concept_id>
  <concept_desc>Computer systems organization~Redundancy</concept_desc>
  <concept_significance>300</concept_significance>
 </concept>
 <concept>
  <concept_id>10010520.10010553.10010554</concept_id>
  <concept_desc>Computer systems organization~Robotics</concept_desc>
  <concept_significance>100</concept_significance>
 </concept>
 <concept>
  <concept_id>10003033.10003083.10003095</concept_id>
  <concept_desc>Networks~Network reliability</concept_desc>
  <concept_significance>100</concept_significance>
 </concept>
</ccs2012>
\end{CCSXML}

\ccsdesc[500]{Theory of computation~Constraint and logic programming}
\ccsdesc[500]{Theory of computation~Database constraints theory}
\ccsdesc[300]{Theory of computation~Database query languages (principles)}
\ccsdesc[300]{Theory of computation~Logic and databases}
\keywords{tuple-generating dependencies, chase termination, restricted chase}

\maketitle

\section{Introduction}

The \emph{chase} \cite{Benedikt+17:ChaseBench,Maier1979chase,AhoBU79:Chase} is an essential method for
reasoning with constraints in databases, with application areas including
data exchange \cite{FaginKMP05}, constraint implication \cite{BeeriVardi:Chase84},
data cleansing \cite{GMPS14:Llunatic},
query optimization \cite{AhoSU:chaseDepContainment:tods79,Meier:PegasusBackchase:vldb14,BLT14:PDQqueryAnswering},
query answering under constraints \cite{CaliLR:chaseRepairQa:pods03,N+15:RDFoxToolPaper},
and ontological reasoning \cite{BLMS11:decline,CGP12:stickyAIJ}.
The basis for this wide applicability is the chase's ability to compute a
\emph{universal model} for a set of constraints \cite{DNR08:corechase}, which
is either used directly (e.g., as a repaired database) or indirectly (e.g., for deciding query entailment).

However, universal models can be infinite,
and chase termination is undecidable on a single database \cite{BeeriVardi:Chase84} as well as 
in its stricter \emph{uniform} version over arbitrary databases \cite{GM14:ChaseTermUndec,GO:chase18}.
The root of this problem are a form of constraints known as \emph{tuple-generating dependencies} (tgds) 
-- or \emph{existential rules}: Horn logic rules with existential quantification in conclusions --, since they may require additional
domain elements (represented by \emph{nulls}) to be satisfied.

A large body of research is devoted to finding decidable cases for which termination can be guaranteed,
mainly by analysing the data flow, i.e., the propagation of nulls in the chase \cite{DeutschT03,FaginKMP05,Marnette09:superWA,KR11:jointacyc,CG+13:acyclicity,CDK17:rmfa}.%
\footnote{As a notable exception, control flow is analysed in the \emph{graph of rule dependencies} \cite{BLMS11:decline}.}
Here we can distinguish graph-based abstractions, such as \emph{weak acyclicity} \cite{FaginKMP05},
from materialisation-based approaches, such as MFA \cite{CG+13:acyclicity}.
In general, decidable criteria are sufficient but not necessary for termination, but recent breakthroughs 
established decidability of termination for the linear, guarded, and sticky classes of tgds \cite{LMTU19:linear-termination,CalauttiGP:nonuniTermGuardedChase:pods22,CalauttiMP:linskolemchaseterm:pvldb23,GogaczMP:chaseTermGuard:siamjc23}.

Meanwhile, another productive line of recent research clarified the computational power of
the chase, characterising the query functions that can be
expressed by conjunctive queries under tgd constraints.
This expressive power is bounded by data complexity, but can be lower: famously, Datalog
does not capture all queries in \PTime, not even those \emph{closed under homomorphisms}%
\footnote{Such queries are monotone, i.e., intuitively do not need negation (but cf.\ \cite{RT16:DatalogVariants}).} 
\cite{dk08:datalog-lfp}.
Results are more satisfying for tuple-generating dependencies.
Not only do arbitrary tgds capture the recursively enumerable homomorphism-closed queries \cite{RT15ExRulesChar},
but, surprisingly, the decidable homomorphism-closed queries can be captured by tgds for which the standard (a.k.a.\ \emph{restricted}) chase terminates \cite{Bourgaux+:exruleexp:kr21}.
In other words, the standard chase over tgds without any extensions is a universal computational paradigm for database query answering.
This is highly encouraging since a majority of chase implementations already
support this version of the chase procedure \cite{Meier:PegasusBackchase:vldb14,BLT14:PDQqueryAnswering,GMPS14:Llunatic,UKJDC18:VLogSystemDescription,N+15:RDFoxToolPaper,BLMRS15:Graal},
often with favourable performance \cite{Benedikt+17:ChaseBench}.

Naturally, tractable data complexity is often desirable in database applications, and has therefore
been the focus of many works in the area.
Unfortunately, the potential of chase-based computation for more complex computations has 
meanwhile been neglected.
To our knowledge, the only line of research where the chase was used for harder-than-\PTime computations
relies on a method for modelling finite sets with tgds \cite{Carral+19:ChasingSets}.
Practical feasibility was shown for \ExpTime-complete ontology-based query answering \cite{Carral+19:ChasingSets}, 
set-terms in answer set programming \cite{GHK:SetSimASP:IJCAI2022}, complex values in Datalog \cite{MarxKroetzsch:complexValues:ICDT2022}, and
why-provenance \cite{EKM:exruleprov:rr2022}.
In essence, all of these works are based on a single set of tgds for which uniform termination is shown directly.
Known chase termination criteria fail for this case, since they only recognise queries in \PTime.
In fact, most criteria are even known to describe fragments that do not increase upon the expressive power of
Datalog \cite{ZZY15:finitechase,KMR:ChasePower2019},
This limitation is common to all tgd sets on which the semi-oblivious chase uniformly terminates.
We are aware of only one approach so far that studies the more complicated standard chase at all \cite{CDK17:rmfa}, but which also remains in \PTime.

We tackle this challenge with a new method that extends the graph-based termination criterion
of \emph{joint acyclicity} \cite{KR11:jointacyc} with new decidable conditions that allow for some kinds of cycles.
These conditions are specific to the standard chase, and enforce that tgd applications within a cycle are eventually blocked,
possibly only after (double) exponentially many loops. Our conditions detect the uniform termination of the set-modelling tgds
of Carral et al.\ \cite{Carral+19:ChasingSets}, but significantly extend upon this baseline: even a single strongly connected
component in the data-flow graph can correspond to a $\kExpTime{2}$-complete query (whereas Carral et al.\ deal with exponentially many sets).

Leveraging again the graph-based view, we can further analyse the data flows between cyclic strongly connected components to describe decidable
fragments of arbitrary multi-exponential data complexity. Our resulting decidable fragment of \emph{saturating tgds} therefore has non-elementary complexity.
This is already very general, especially when considering that capturing all decidable homomorphism-closed queries
can in principle only be achieved with tgd fragments that are not even recursively enumerable \cite{Bourgaux+:exruleexp:kr21}.

The structure of the data-flow graph enables us to compute more precise $\kExpTime{\kappa}$ bounds for any saturating tgd set.
Refining this analysis further, we then identify cases where the complexity of query answering drops to $\kExpSpace{(\kappa-1)}$.
In particular, we therefore obtain a decidable fragment of uniformly terminating tgd sets with $\PSpace$-complete
query entailment. To the best of our knowledge, this is the first such fragment.

All of our results establish uniform termination of the standard chase under all chase strategies that
prioritise Datalog rules (Kr\"{o}tzsch et al.\ call this the \emph{Datalog-first} strategy \cite{KMR:ChasePower2019}).
By a recent result, this is a stronger requirement than termination under \emph{some} strategies \cite{CLMT:exrulenorm:kr22}. As of today, however,
all known termination criteria imply Datalog-first termination, and preferring Datalog rules is
also a common heuristic in practice.

In summary, our main contributions are as follows:
\begin{itemize}
\item In Section~\ref{sec_dep}, we refine the \emph{dependency graph} of \citeauthor{KR11:jointacyc} \cite{KR11:jointacyc}
to capture data flow in more detail.
\item In Section~\ref{sec_sat}, we study the propagation of inferences between nulls related to strongly connected components in the extended dependency graph.
We find conditions that suffice to prevent infinite repetitions of inference cycles, define the language of \emph{saturating tgds}, and show uniform
standard chase termination for this fragment.
\item In Section~\ref{sec_rank}, we analyse the exact complexity (and size) of the standard chase over saturating sets by assigning \emph{ranks} to strongly connected components in the extended dependency graph. We differentiate the case of single and double exponential complexity for individual components, and we establish matching lower bounds to show that query entailment is $\kExpTime{\kappa}$-complete for tgd sets of rank $\kappa$.
\item In Section~\ref{sec_optchase}, we describe conditions that reduce query entailment for tgd sets of rank $\kappa$ to $\kExpSpace{(\kappa-1)}$. To this end, we discover a tree-like structure within the chase, and we define a syntactic condition called \emph{path guardedness} that allows us to use an optimised chase procedure. Again, we establish matching lower bounds.
\end{itemize}

Detailed proofs are included in the appendix.

\section{Preliminaries}\label{sec_prelims}

We consider a signature based on mutually disjoint, countably infinite sets of \emph{constants} $\sigCons$,
\emph{variables} $\sigVar$, \emph{predicates} $\sigPred$, and \emph{nulls} $\sigNull$.
Each predicate name $p\in\sigPred$ has an \emph{arity} $\arity(p)\geq 0$.
\emph{Terms} are elements of $\sigVar\cup\sigNull\cup\sigCons$.
We use $\vec{t}$ to denote a list $t_1,\ldots,t_{|\vec{t}|}$ of terms, and similarly for special types of terms.
An \emph{atom} is an expression $p(\vec{t})$ with $p\in\sigPred$, $\vec{t}$ a list of terms, and $\arity(p)=|\vec{t}|$.
An \emph{interpretation} $\Inter$ is a set of atoms without variables.
A \emph{database} $\Dnter$ is a finite interpretation without nulls, i.e., a
finite set of \emph{facts} (variable-free, null-free atoms).
For an interpretation $\Inter$, we use $\nulls{\Inter} = \{ n \in \sigNull \mid p(\vec{t}) \in \Inter, n \in \vec{t} \}$ to denote the nulls used in $\Inter$.

\paragraph*{Rules}
A \emph{tuple-generating dependency} (tgd) $\arule$
is a formula
\begin{align}
\arule = \forall \vec{x}, \vec{y}.\, B[\vec{x}, \vec{y}] \to
\exists \vec{v}.\, H[\vec{y}, \vec{v}],\label{eq_rule}
\end{align}
where $B$ and $H$ are
conjunctions of atoms using only terms from $\sigCons$ or from the
mutually disjoint lists of variables $\vec{x}, \vec{y}, \vec{v}\subseteq\sigVar$. 
We call $B$ the \emph{body} (denoted $\rulebody(\arule)$), $H$ the \emph{head} (denoted $\rulehead(\arule)$),
and $\vec{y}$ the \emph{frontier} of $\arule$. %
We may treat conjunctions of atoms as sets, and we omit universal quantifiers in tgds.
We require that all variables in $\vec{y}$ do really occur in $B$ (\emph{safety}). %
A tgd without existential quantifiers is a \emph{Datalog rule}.

\paragraph*{Renamings and Substitutions}
Without loss of generality, we require that variables in tgd sets $\aprogram$ are \emph{renamed apart},
i.e., each variable $x\in\sigVar$ in $\aprogram$ is bound by a unique quantifier in a unique tgd $\arule_x\in\aprogram$.
A \emph{substitution} is a partial mapping $\sigma:\sigVar\to\sigCons\cup\sigVar\cup\sigNull$.
As usual, $x\sigma=\sigma(x)$ if $\sigma$ is defined on $x$, and $x\sigma=x$ otherwise.
For a logical expression $\alpha[\vec{x}]$, the expression $\alpha\sigma$ is obtained by
simultaneously replacing each variable $x$ by $x\sigma$.
Given a list of terms $\vec{t}=t_1,\ldots,t_{|\vec{x}|}$, we write $\alpha[\vec{x}/\vec{t}]$
for the expression $\alpha\{x_1\mapsto t_1,\ldots,x_{|\vec{x}|}\mapsto t_{|\vec{x}|}\}$.
If $\vec{x}$ is clear from the context and no confusion is likely, we write $\alpha[\vec{t}]$ for $\alpha[\vec{x}/\vec{t}]$.

\paragraph*{Semantics}
We consider a standard first-order semantics.
A \emph{match} of a tgd $\arule$ as in \eqref{eq_rule} in an interpretation $\Inter$ 
is a substitution $\sigma$ that maps $\vec{x}\cup\vec{y}$ to terms in $\Inter$,
such that $B\sigma\subseteq\Inter$.
A match is \emph{satisfied} if it can be extended to a substitution $\sigma'$
over $\vec{x}\cup\vec{y}\cup\vec{v}$ such that $H\sigma'\subseteq\Inter$.
A tgd $\arule$ is \emph{satisfied} by $\Inter$, written $\Inter\models\arule$, if
all matches of $\arule$ on $\Inter$ are satisfied.
A fact $\alpha$ is satisfied in $\Inter$, written $\Inter\models\alpha$, if $\alpha\in\Inter$.
Satisfaction extends to sets of tgds and facts as usual.
A tgd or fact $\alpha$ is \emph{entailed} by tgd set $\aprogram$ and database $\Dnter$
if $\Inter\models\alpha$ for all $\Inter$ with $\Inter\models\aprogram\cup\Dnter$. 

\paragraph*{Reasoning with the Chase}
An important reasoning task for tgds is conjunctive query (CQ) answering,
which can further be reduced to the entailment of Boolean CQs (BCQs), which are formulas
$\exists\vec{z}.Q[\vec{z}]$ with $Q$ a conjunction of null-free atoms.
This task is undecidable in general.
A sound and complete (but not always terminating) class of reasoning procedures
is the \emph{chase}, which exists in many variants.
We are interested in the \emph{standard chase} (a.k.a.\ \emph{restricted chase})
under \emph{Datalog-first} strategies.

\begin{definition}\label{def_chase}
A \emph{(standard) chase sequence} for a database $\Dnter$ and a tgd set $\aprogram$ is
a potentially infinite sequence of interpretations $\Dnter^0,\Dnter^1,\ldots$ such that
\begin{enumerate}%
\item \label{i_chase_init}
$\Dnter^0=\Dnter$;
\item \label{i_chase_app}
for every $\Dnter^{i+1}$ with $i\geq 0$, there is a match $\sigma$ for some tgd
$\arule=B[\vec{x},\vec{y}]\to\exists\vec{v}.H[\vec{y},\vec{v}]\in\aprogram$ in $\Dnter^{i}$ such that both of the following hold true:
\begin{enumerate}
\item $\sigma$ is an unsatisfied match in $\Dnter^i$ (i.e., $\sigma$ cannot be extended to a substitution $\sigma^+$ with $H\sigma^+\subseteq\Dnter^{i}$), %
\item $\Dnter^{i+1} = \Dnter^{i} \cup H[\sigma^+(\vec{y}),\sigma^+(\vec{v})]$, where $\sigma^+$ is such that $\sigma^+(y)=\sigma(y)$ for all $y\in\vec{y}$,
      and for all $v\in\vec{v}$, $\sigma^+(v)\in\sigNull$ is a distinct  null not occurring in $\Dnter^{i}$;%
\end{enumerate}
we then say that $\arule$ \emph{was applied} in step $i$, and we define $\steprule{i}\defeq\arule$, $\stepmatch{i}\defeq\sigma$, and $\stepmatchex{i}\defeq\sigma^+$;
\item \label{i_chase_datalogfirst}
if a tgd with existential variables is applied in step $i$, then $\Dnter^i$ must satisfy all Datalog rules in $\aprogram$;%
\item \label{i_chase_fair}
if $\sigma$ is a match for a tgd $\arule\in\aprogram$ and $\Dnter^i$ ($i\geq 0$), then there is $j>i$ such 
that $\sigma$ is satisfied in $\Dnter^j$.%
\end{enumerate}
Item \eqref{i_chase_datalogfirst} requires rule applications to follow a \emph{Datalog-first} strategy, and item \eqref{i_chase_fair} ensures \emph{fairness}.
The \emph{(standard) chase} for such a chase sequence then is $\fnChase{\aprogram,\Dnter}=\bigcup_{i\geq 0}\Dnter^i$.
\end{definition}

A BCQ $q$ is entailed by $\Sigma$ and $\Dnter$ if and only if $\fnChase{\aprogram,\Dnter}\models q$.
If the chase terminates, this can be determined from the (finite) $\fnChase{\aprogram,\Dnter}$.
Termination may depend on the chosen order of tgd applications. 
The Datalog-first strategy \eqref{i_chase_datalogfirst} is a common heuristic that
tends to improve termination in practice, although one can construct examples
where this is not the case \cite{CLMT:exrulenorm:kr22}.
Since Datalog rules can only be applied finitely many times, Datalog-first
does not impair fairness \eqref{i_chase_fair}.

\section{The Labelled Dependency Graph}\label{sec_dep}

The \emph{existential dependency graph} is used to analyse the data flow between
existential variables in a tgd set \cite{KR11:jointacyc}. In particular, a tgd set is
\emph{jointly acyclic} if this graph does not have cycles.
In this section, we recall and slightly extend this approach to better suit our needs. 

The following definition mostly follows \citeauthor{KR11:jointacyc} \shortcite{KR11:jointacyc},
but adds variables as labels to the edges in the graph.
Recall that we assume tgd sets to be renamed apart, so that variables $x$ can be used to identify tgds $\arule_x$.

\begin{definition}\label{def_gex}
	Let $\aprogram$ be a tgd set. A \emph{predicate position} is a pair $\tuple{p,i}\in\sigPred\times\natNum$ with $1\leq i\leq\arity(p)$.
	For a variable $x$ in $\aprogram$, let $\symSetPosConjVar{B}{x}$ (resp.\ $\symSetPosConjVar{H}{x}$) be the set of all predicate positions where $x$
	occurs in the body (resp.\ head) of its unique tgd $\arule_x\in\aprogram$.
	
	For an existential variable $v$ in $\aprogram$, let $\Omega_v$ be the smallest set of positions such that 
	\begin{enumerate*}[label=(\roman*)]
		\item $\symSetPosConjVar{H}{v} \subseteq \Omega_v$, and
		\item for every universal variable $x$, $\symSetPosConjVar{B}{x}\subseteq\Omega_v$ implies $\symSetPosConjVar{H}{x}\subseteq\Omega_v$.
	\end{enumerate*}
	
	The \emph{labelled existential dependency graph} $\ledgraph(\aprogram)$ of $\aprogram$ is a directed graph with the existentially quantified variables of $\aprogram$ as its vertices and an edge $\ledge{v}{w}{y}$ for every tgd $\arule_w\in\aprogram$ with a frontier variable $y$ such that $\symSetPosConjVar{B}{y} \subseteq \Omega_v$.
\end{definition}

\begin{example}\label{ex_dexp}
	This running example illustrates several notions below, hence is not minimal.
	Consider a database $\Dnter_{\predname{lvl}} = \{\predname{first}(1),\predname{last}(\ell)\}\cup \{\predname{next}(i,i+1) \mid 1\leq i<\ell\}$, which defines a total strict order of ``levels'' $1,\ldots,\ell$.
	$\aprogram$ consists of the following tgds with constants $F$ (false) and $T$ (true):
	\begin{align}
		\predname{first}(z)
			&\to \predname{lvl}(F,z)\wedge\predname{lvl}(T,z) \label{eq_dexp_init}\\
		\predname{lvl}(\bar{x}_1,z)\wedge \predname{lvl}(\bar{x}_2,z)
			&\to \exists v.\predname{cat}(\bar{x}_1,\bar{x}_2,z,v) \label{eq_dexp_cat}\\
		\predname{cat}(\bar{x}_1,\bar{x}_2,z,x)
			&\to \predname{part}(\bar{x}_1,x)\wedge \predname{part}(\bar{x}_2,x) \label{eq_dexp_part}\\
		\predname{cat}(\bar{x}_1,\bar{x}_2,z,x)\wedge \predname{next}(z,z_+)
			&\to \exists \bar{w}.\predname{up}(x,z_+,\bar{w}) \label{eq_dexp_up} \\
		\predname{cat}(\bar{x}_1,\bar{x}_2,z,x)\wedge \predname{next}(z,z_+) \wedge
		\predname{up}(x,z_+,\bar{x})
			&\to \predname{lvl}(\bar{x},z_+)\label{eq_dexp_lvl}%
	\end{align}
	Overlined variables denote sequences of $F$ and $T$, where
	$\predname{lvl}(\bar{x},z)$ says that sequence $\bar{x}$ has length $2^z$.
	Initial sequences have length $1$ \eqref{eq_dexp_init}. Longer sequences
	emerge from con\emph{cat}enations \eqref{eq_dexp_cat} with equal-length parts \eqref{eq_dexp_part}.
	The tgd \eqref{eq_dexp_up} promotes concatenations $x$ to sequences $\bar{w}$ on the next level $z_+$.
	We mark such new sequences separately \eqref{eq_dexp_lvl},
	since this will later be useful.
	The tgds \eqref{eq_dexp_init}--\eqref{eq_dexp_lvl} on $\Dnter_{\predname{lvl}}$
	therefore construct all $F$-$T$-sequences of length $2^\ell$, i.e., $2^{2^\ell}$ distinct nulls.
	
	Then $\Omega_v=\{ \ppos{\predname{cat}}{4},\allowbreak \ppos{\predname{part}}{2},\allowbreak\ppos{\predname{up}}{1}\}$ and
	$\Omega_{\bar{w}}=\{\ppos{\predname{up}}{3},\allowbreak\ppos{\predname{lvl}}{1},\allowbreak\ppos{\predname{cat}}{1},\allowbreak\ppos{\predname{cat}}{2}\}$.
	The three edges of $\ledgraph(\aprogram)$ are $v\ledgeto{x\eqref{eq_dexp_up}}\bar{w}$,
	$\bar{w}\ledgeto{\bar{x}_1\eqref{eq_dexp_cat}} v$, and $\bar{w}\ledgeto{\bar{x}_2\eqref{eq_dexp_cat}} v$, where we disambiguate some variables with the numbers of their tgds.
\end{example}

Definition~\ref{def_gex} slightly sharpens the original definition by introducing edges only based on frontier variables.
Moreover, by adding labels, a single edge in the original graph now corresponds to one or more edges in $\ledgraph(\aprogram)$.
Therefore, if the existential dependency graph contains no cycles (i.e., $\aprogram$ is jointly acyclic), then the labelled existential 
dependency graph does not contain cycles either.

$\ledgraph(\aprogram)$ is useful since it over-estimates the possible data flow 
in the computation of $\fnChase{\aprogram,\Dnter}$.
To make this precise, note that any null $n$ in $\fnChase{\aprogram,\Dnter}$ is introduced
in a chase step $i$ as a fresh value $n=\stepmatchex{i}(v)$ of some
existential variable $v$ in $\steprule{i}$.
We write $\nullvar{n}$ for this $v$, and $\lchaseedge{t}{n}{y}$ to indicate that
$\stepmatch{i}(y)=t$ for some term $t\in\sigCons\cup\sigNull$ and frontier variable $y$ of $\steprule{i}$.

\begin{lemma}\label{lemma_ledgraph}
If $\lchaseedge{n_1}{n_2}{y}$ in $\fnChase{\aprogram,\Dnter}$
for nulls $n_1,n_2\in\sigNull$,
then there is an edge $\ledge{\nullvar{n_1}}{\nullvar{n_2}}{y}$ in $\ledgraph(\aprogram)$.
\end{lemma}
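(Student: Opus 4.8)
The plan is to establish a single data-flow invariant by induction on the length of the chase sequence, and then read the lemma off Definition~\ref{def_gex}. The invariant I aim to prove is: \emph{for every $i\geq 0$, every atom $\alpha\in\Dnter^i$, and every null $n$ occurring at a predicate position $\pi$ in $\alpha$, we have $\pi\in\Omega_{\nullvar{n}}$.} Here $\nullvar{n}$ is well-defined because any null occurring in $\Dnter^i$ must have been introduced at some step $\leq i$. The base case is immediate: $\Dnter^0=\Dnter$ contains no nulls. Once the invariant is available, the lemma turns out to be essentially a special case of the frontier-variable argument used in its inductive step.

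For the inductive step I would assume the invariant for $\Dnter^i$ and consider $\Dnter^{i+1}=\Dnter^i\cup H[\stepmatchex{i}(\vec{y}),\stepmatchex{i}(\vec{v})]$, obtained by applying $\arule=\steprule{i}$ via the match $\sigma=\stepmatch{i}$ with extension $\sigma^+=\stepmatchex{i}$. Given $\alpha\in\Dnter^{i+1}$ with a null $n$ at position $\pi=\ppos{p}{j}$, the case $\alpha\in\Dnter^i$ is covered by the hypothesis, so the interesting case is $\alpha=\beta\sigma^+$ for some head atom $\beta=p(s_1,\ldots)$ of $\arule$ with $s_j\sigma^+=n$. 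I would then split on the shape of $s_j$. It cannot be a constant (constants are not nulls). If $s_j$ is an existential variable of $\arule$, then $n=\sigma^+(s_j)$ is the fresh null chosen for $s_j$, so $\nullvar{n}=s_j$ and $\pi\in\symSetPosConjVar{H}{s_j}\subseteq\Omega_{s_j}$ by clause~(i) of Definition~\ref{def_gex}. If $s_j$ is a frontier variable $y'$ of $\arule$, then $n=\sigma(y')$ already occurs in $\Dnter^i$ (it is the $\sigma$-image of a body variable), and for each body position $\ppos{q}{k}\in\symSetPosConjVar{B}{y'}$ the atom of $\rulebody(\arule)$ that carries $y'$ there becomes, under $\sigma$, an atom of $\Dnter^i$ carrying $n$ at $\ppos{q}{k}$; the induction hypothesis gives $\ppos{q}{k}\in\Omega_{\nullvar{n}}$, hence $\symSetPosConjVar{B}{y'}\subseteq\Omega_{\nullvar{n}}$, and since $\arule=\arule_{y'}$ and $y'$ is universal, clause~(ii) yields $\symSetPosConjVar{H}{y'}\subseteq\Omega_{\nullvar{n}}$, so in particular $\pi\in\Omega_{\nullvar{n}}$.

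To conclude the lemma, suppose $\lchaseedge{n_1}{n_2}{y}$ holds in $\fnChase{\aprogram,\Dnter}$. Unfolding the notation, there is a step $i$ with $n_2=\stepmatchex{i}(\nullvar{n_2})$ and $\stepmatch{i}(y)=n_1$ for a frontier variable $y$ of $\arule:=\steprule{i}$; since tgd sets are renamed apart, $\arule=\arule_{\nullvar{n_2}}=\arule_{y}$. Applying the invariant to the match $\stepmatch{i}$ of $\arule$ in $\Dnter^i$, exactly as in the frontier-variable case above, shows $\symSetPosConjVar{B}{y}\subseteq\Omega_{\nullvar{n_1}}$. By Definition~\ref{def_gex}, $\arule_{\nullvar{n_2}}$ is then a tgd with a frontier variable $y$ such that $\symSetPosConjVar{B}{y}\subseteq\Omega_{\nullvar{n_1}}$, which is exactly the condition for $\ledge{\nullvar{n_1}}{\nullvar{n_2}}{y}$ to be an edge of $\ledgraph(\aprogram)$.

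The hard part will be making clause~(ii) of the definition of $\Omega$ fire at the right moment: the crucial observation is that whenever a frontier (hence universal) variable is matched to a null $n$, that null is forced to occur at \emph{every} body position of the variable, which is precisely the premise $\symSetPosConjVar{B}{x}\subseteq\Omega_{\nullvar{n}}$ that clause~(ii) consumes. Everything else is a routine structural induction; note in particular that neither the Datalog-first restriction nor fairness plays any role, so the statement in fact holds for any sequence of tgd applications.
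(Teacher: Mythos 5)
Your proof is correct. Note that the paper itself gives no proof of Lemma~\ref{lemma_ledgraph} at all (it is stated in the main text without an accompanying argument in the appendix, being a mild labelled refinement of the soundness of the existential dependency graph from the joint-acyclicity literature), so there is nothing to compare against; your invariant --- every null $n$ in $\Dnter^i$ occurs only at positions of $\Omega_{\nullvar{n}}$ --- proved by induction on the chase sequence and then instantiated at the frontier variable $y$ of $\steprule{i}$, is exactly the standard argument one would supply, and your closing observation that neither Datalog-first nor fairness is needed is also accurate.
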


\begin{restatable}{lemma}{lemmaInfinitePath}
\label{lemma_infinite_path}
	If $\fnChase{\aprogram, \Dnter}$ is infinite, then
	$\fnChase{\aprogram, \Dnter}$ contains an infinite chain $n_0\lchaseedgeto{y_1} n_1\lchaseedgeto{y_2} \cdots$.
\end{restatable}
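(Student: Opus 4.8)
The plan is to organise the nulls occurring in $\fnChase{\aprogram,\Dnter}$ into a level-structured directed graph and to extract the infinite chain by a König-style argument on a finitely branching tree. As a preliminary step I would note that an infinite chase must create infinitely many distinct nulls: the chase never introduces new constants, so every null-free fact of $\fnChase{\aprogram,\Dnter}$ uses one of the finitely many constants of $\aprogram\cup\Dnter$ together with one of finitely many predicates; hence only finitely many null-free facts ever arise, and since every chase step strictly enlarges the interpretation, only finitely many steps can avoid creating a fresh null. Thus infinitely many nulls are created.

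Next I would work with the directed graph $G$ on the nulls of $\fnChase{\aprogram,\Dnter}$ that has an edge $n\to n'$ whenever $\lchaseedge{n}{n'}{y}$ for some frontier variable $y$. The key elementary fact is that if $n\to n'$ then $n$ is created strictly before $n'$: by safety the frontier value $n$ already occurs in the interpretation matched by the step creating $n'$, and since it is a null it must stem from an earlier step. Consequently the creation indices strictly increase along any $G$-path, so the length of a longest $G$-path ending at $n$ is bounded by the creation index of $n$; I call this finite number the \emph{level} $\ell(n)$, noting that $\ell(n)=0$ holds exactly when no frontier value of the step creating $n$ is a null.

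The crucial step is to prove that each level set $N_k=\{n : \ell(n)=k\}$ is finite, which I would do by induction on $k$. Every null of level $k$ is created by a tgd application whose frontier variables are mapped only to constants of $\aprogram\cup\Dnter$ and to nulls of levels strictly below $k$; by the induction hypothesis there are only finitely many such assignments of a tgd's frontier. Here I would invoke a defining feature of the \emph{restricted} chase: for a fixed tgd and a fixed assignment of its frontier variables, at most one match is ever applied, since after the first such application the tgd's head is already present for that frontier assignment, so every later match with the same frontier values is satisfied and not applied. Hence only finitely many nulls of level $k$ are created, completing the induction. Because a null of level $k+1$ needs a null-parent of level $k$, we have $N_{k+1}=\emptyset$ whenever $N_k=\emptyset$; since there are infinitely many nulls in total and every $N_k$ is finite, all of $N_0,N_1,N_2,\ldots$ are non-empty.

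Finally I would apply König's lemma to the tree $T$ whose nodes are the finite $G$-paths $n_0\to n_1\to\cdots\to n_j$ with $\ell(n_i)=i$ for every $i\le j$, together with an artificial root whose children are the single-vertex paths $n_0$ with $\ell(n_0)=0$. Since each $N_{k+1}$ is finite, $T$ is finitely branching; since any $n_j\in N_j$ can be traced back through parents of strictly decreasing level down to a level-$0$ null, $T$ contains nodes of every depth and is therefore infinite. An infinite branch of $T$ is precisely an infinite chain $n_0\lchaseedgeto{y_1} n_1\lchaseedgeto{y_2}\cdots$, which is what we need. I expect the finiteness of the level sets $N_k$ to be the main obstacle: it is the only place where restrictedness of the chase is essential — the semi-oblivious chase can generate infinitely many nulls at one level — and it requires applying the ``at most one application per frontier assignment'' bound to exactly the right finite set of frontier values; the concluding König argument also has to be set up on the level-indexed tree rather than on the tree of all $G$-paths, which need not be finitely branching.
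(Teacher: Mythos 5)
Your proof is correct and follows the same basic strategy as the paper's: both extract the infinite chain by applying K\H{o}nig's lemma to a finitely branching infinite tree built from the null-creation dependencies, and both ultimately rest on the fact that a tgd yields at most one application per frontier assignment. The difference lies in how finite branching is secured. The paper prunes the dependency graph to a forest by retaining, for each null $m$, only the edge from the $\prec$-latest frontier null $n$ with $n\lchaseedgeto{y}m$; finite branching then follows because the step creating $m$ matches its frontier only to terms introduced no later than $n$, of which there are finitely many, and there are finitely many roots for the analogous reason. You instead keep all edges, stratify the nulls by the length of the longest incoming path, prove by induction that each level set $N_k$ is finite, and run K\H{o}nig on the tree of level-respecting paths. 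Both routes work and are of comparable length; yours makes the quantitative content (finiteness of each stratum) more explicit, while the paper's canonical-parent trick avoids the induction. Two small blemishes, neither fatal: (i) your preliminary argument that infinitely many nulls arise is slightly off --- a step that creates no fresh null (a Datalog rule applied to nulls) can still add a fact \emph{containing} nulls, so bounding the null-free facts does not bound the number of such steps; the standard repair is that finitely many nulls would give finitely many terms, hence finitely many possible atoms, contradicting infiniteness. (ii) Your closing remark that the semi-oblivious chase can generate infinitely many nulls at one level is inaccurate: that chase also fires each pair of tgd and frontier assignment at most once, so your level-finiteness induction goes through for it verbatim; it is the fully oblivious chase, where one frontier assignment can be reused with unboundedly many non-frontier bindings, that can blow up a single level.
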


By Lemma~\ref{lemma_ledgraph}, the infinite path of Lemma~\ref{lemma_infinite_path} corresponds to an
infinite path in $\ledgraph(\aprogram)$,
which must therefore, since it is finite, contain a cycle.
Hence, if $\ledgraph(\aprogram)$ is acyclic, $\fnChase{\aprogram, \Dnter}$ is finite.

\begin{example}\label{ex_dexp_nonterm}
For $\aprogram$ as in Example~\ref{ex_dexp}, $\ledgraph(\aprogram)$ is cyclic.
Indeed, there are databases $\Dnter$ for which $\fnChase{\aprogram, \Dnter}$ is infinite, e.g.,
for $\Dnter=\{\predname{first}(1),\predname{last}(1),\predname{next}(1,1)\}$.
\end{example}
\section{Termination with Cycles}\label{sec_sat}

In this section, we establish decidable criteria to show
that the chase on a tgd set is guaranteed to be finite, for all input databases, even though the existential dependency graph has
some cycles.

Whenever a tgd $B[\vec{x},\vec{y}]\to\exists\vec{v}.H[\vec{y},\vec{v}]$ was applied in step $i$
of $\fnChase{\Sigma,\Dnter}$, the conjunctive query 
$(B\wedge H)[\vec{x},\vec{y},\vec{v}]$ has the answer $\stepmatchex{i}$ over $\fnChase{\Sigma,\Dnter}$.
Likewise, a chain of chase steps (or path in $\ledgraph(\aprogram)$) leads to a match for a larger query,
defined next.

\begin{definition}\label{def_pathquery}
Consider a path $\mathfrak{p}=v_0\ledgeto{y_1} v_1\ledgeto{y_2} \cdots \ledgeto{y_k} v_k$ in $\ledgraph(\aprogram)$ for variables $v_0,v_i,y_i\in\sigVar$ ($1\leq i\leq k$).
Let $\arule_i: B_i[\vec{x_i},\vec{y_i}]\to\exists\vec{v_i}.H_i[\vec{y_i},\vec{v_i}]$ be a variant of the tgd $\arule_{v_i}$ of variable $v_i$,
where variables have been bijectively renamed such that tgds for different steps do not share variables.
For $1\leq i\leq k$, let $\tilde{y}_i$ (and $\tilde{v}_i$) denote the renamed version of $y_i$ (and $v_i$), and let $\tilde{y}_{k+1}$ denote a fresh variable.

We define $\fnApplyPath(\mathfrak{p}) \defeq \bigwedge_{i=1}^k (B_i\wedge H_i[\tilde{v}_i/\tilde{y}_{i+1}])$.
Given $1\leq i\leq k$, we write
$\fnApplyPath(\mathfrak{p})_{B,i}\defeq B_i$ and $\fnApplyPath(\mathfrak{p})_{H,i}\defeq H_i[\tilde{v}_i/\tilde{y}_{i+1}]$ for the $i$-th body and head conjunction, respectively, with the renamed variables.
\end{definition}

\begin{example}\label{ex_dexp_path}
$\ledgraph(\aprogram)$ of Example~\ref{ex_dexp} has a path
$\mathfrak{p}=v\ledgeto{x} \bar{w}\ledgeto{\bar{x}_1} v$, where we omit the numbers of the tgds from variables for simplicity.
Then $\fnApplyPath(\mathfrak{p}) = \predname{cat}(\bar{x}'_1,\bar{x}'_2,z',x)\wedge \predname{next}(z',z'_+)\wedge \predname{up}(x,z'_+,\bar{x}_1) 
\wedge\predname{lvl}(\bar{x}_1,z'')\wedge \predname{lvl}(\bar{x}''_2,z'')\wedge\predname{cat}(\bar{x}_1,\bar{x}''_2,z'',y_3)$.
Variables marked with ${}'$and ${}''$ stem from renamed variants of tgds \eqref{eq_dexp_cat} and \eqref{eq_dexp_up}, respectively.
Similarly, $\fnApplyPath(\mathfrak{p})_{H,1}=\predname{up}(x,z'_+,\bar{x}_1)$.
\end{example}

\begin{restatable}{lemma}{lemmaPathQuery}
\label{lemma_pathquery}
Let $\mathfrak{c}=n_0\lchaseedgeto{y_1} n_1\lchaseedgeto{y_2} \cdots \lchaseedgeto{y_k} n_k$ be a chain in $\fnChase{\aprogram,\Dnter}$,
with $n_k$ derived in the $m$-th chase step $\Dnter^m$.
There is a path $\mathfrak{p}=\nullvar{v_0}\ledgeto{y_1} \nullvar{n_1}\ledgeto{y_2} \cdots \ledgeto{y_k} \nullvar{n_k}$ in $\ledgraph(\aprogram)$,
and $\Dnter^m\models\fnApplyPath(\mathfrak{p})\theta$ holds for substitution $\theta$
defined as follows:
for each $n_i$ derived in chase step $j$, and each variable $x$ in $\steprule{j}$ that was renamed to $\tilde{x}$ in $\fnApplyPath(\mathfrak{c})$,
$\theta(\tilde{x}) = \stepmatchex{j}(x)$.
\end{restatable}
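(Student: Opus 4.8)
The statement bundles three claims: (1) the chain $\mathfrak{c}$ in the chase induces a path $\mathfrak{p}$ in $\ledgraph(\aprogram)$; (2) the query $\fnApplyPath(\mathfrak{p})$ is matched in $\Dnter^m$; (3) the matching substitution is the explicit $\theta$ built from the extended matches $\stepmatchex{j}$. Claim (1) is immediate: each link $n_{i-1}\lchaseedgeto{y_i} n_i$ of $\mathfrak{c}$ is, by definition of the chase-edge notation, a situation where $\stepmatch{j_i}(y_i)=n_{i-1}$ for the step $j_i$ that introduced $n_i$; Lemma~\ref{lemma_ledgraph} then hands us the corresponding edge $\ledge{\nullvar{n_{i-1}}}{\nullvar{n_i}}{y_i}$ in $\ledgraph(\aprogram)$, and concatenating these edges gives $\mathfrak{p}$. (One should also note $\nullvar{v_0}$ is just $\nullvar{n_0}$; the slightly odd notation in the statement is presumably because $n_0$ may itself be a constant rather than a null, in which case the path starts at whatever existential variable dominates it — but I would simply handle the case distinction that $\mathfrak{c}$ starts either at a null or at a constant.)

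**The core of the argument — induction on $k$.** The real work is (2)+(3), and I would prove them together by induction on the length $k$ of the chain. For $k=0$ the query $\fnApplyPath(\mathfrak{p})$ is empty and there is nothing to show. For the step: the chain $\mathfrak{c}$ ends with $n_k$ derived in step $j_k$ (with resulting interpretation $\Dnter^{j_k}$, and $m \ge j_k$ since $\Dnter^m$ is where $n_k$ lives), by applying tgd $\arule_{j_k} = B_k[\vec{x_k},\vec{y_k}]\to\exists\vec{v_k}.H_k$ via extended match $\stepmatchex{j_k}$. By Definition~\ref{def_chase}, $B_k\stepmatchex{j_k}\subseteq\Dnter^{j_k-1}\subseteq\Dnter^m$ and $H_k\stepmatchex{j_k}\subseteq\Dnter^{j_k}\subseteq\Dnter^m$. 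The prefix chain $\mathfrak{c}' = n_0\lchaseedgeto{y_1}\cdots\lchaseedgeto{y_{k-1}} n_{k-1}$ is a chain whose last null $n_{k-1}$ is derived in some step $j_{k-1}<j_k$, so by the induction hypothesis $\Dnter^{j_{k-1}}\models\fnApplyPath(\mathfrak{p}')\theta'$ with the corresponding $\theta'$; since $\Dnter^{j_{k-1}}\subseteq\Dnter^m$, this still holds in $\Dnter^m$. It remains to glue the last conjunct $B_k\wedge H_k[\tilde v_k/\tilde y_{k+1}]$ onto $\fnApplyPath(\mathfrak{p}')$ and check that $\theta$ (extending $\theta'$ by $\theta(\tilde x)=\stepmatchex{j_k}(x)$ for the renamed variables of $\arule_{j_k}$) is a consistent, well-defined substitution that satisfies the whole conjunction.

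**Where the fiddly part is — the shared frontier variable.** The one genuine subtlety is the renaming discipline in Definition~\ref{def_pathquery}: the existential head variable $\tilde v_k$ of step $k$ is identified, by the substitution $[\tilde v_k/\tilde y_{k+1}]$, with the frontier variable used in step $k+1$ (here there is no step $k+1$, but $\tilde y_{k+1}$ is a fresh variable); more importantly, going \emph{backwards}, the frontier variable $\tilde y_k$ of step $k$ has already been substituted — in $\fnApplyPath(\mathfrak{p}')$ — by $H_{k-1}[\tilde v_{k-1}/\tilde y_k]$, i.e.\ $\tilde y_k$ is the renamed copy of $v_{k-1}$, the existential variable whose null is $n_{k-1}$. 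So consistency of $\theta$ on $\tilde y_k$ requires exactly that $\theta'$ maps $\tilde v_{k-1}$ and $\theta$ maps $\tilde y_k$ to the same term; the first is $\stepmatchex{j_{k-1}}(v_{k-1}) = n_{k-1}$ by the IH's definition of $\theta'$, and the second is $\stepmatch{j_k}(y_k) = n_{k-1}$ because the link $n_{k-1}\lchaseedgeto{y_k} n_k$ says precisely that. I would isolate this as the key identity $\theta(\tilde y_k) = \theta'(\tilde v_{k-1}) = n_{k-1}$ and remark that because all tgd variants in $\fnApplyPath$ are renamed apart, these are the only variables the two pieces $\fnApplyPath(\mathfrak{p}')$ and the new conjunct share, so $\theta$ is well-defined. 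With $\theta$ pinned down, satisfaction of the new conjunct is just the two containments $B_k\stepmatchex{j_k}\subseteq\Dnter^m$ and $H_k\stepmatchex{j_k}\subseteq\Dnter^m$ rewritten through the bijective renaming that turns $\arule_{j_k}$'s variables into their tilde-versions. The main obstacle, then, is purely bookkeeping: getting the three layers of renaming (apart-renaming of $\aprogram$, the per-step variants in $\fnApplyPath$, and the actual nulls) to line up — there is no conceptual difficulty once the gluing identity above is stated cleanly.
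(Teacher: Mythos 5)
Your proof is correct and follows essentially the same route as the paper, whose own proof is a two-line appeal to Lemma~\ref{lemma_ledgraph} and Definitions~\ref{def_chase} and \ref{def_pathquery}; your induction on $k$ and the explicit gluing identity $\theta(\tilde{y}_k)=n_{k-1}$ merely spell out the bookkeeping the paper leaves implicit.
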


\begin{figure}[t]
	\begin{tikzpicture}
		\node (v) at (0,0) {$v_0$};
		\node (w1) at (0,-1) {$v_1$};
		\node (wdots) at (0,-2) {\raisebox{0mm}[3.5mm]{$\vdots$}};
		\node (wk) at (0,-3) {$v_\ell$};
		
		\path[draw,->] (v) edge node[right] {$y_1$} (w1);
		\path[draw,->] (w1) edge node[right] {$y_2$} (wdots);
		\path[draw,->] (wdots) edge node[right] {$y_\ell$} (wk);
		\path[draw,->] (wk) edge[out=180, in=180] node[right] {$y_0$} (v);
		
		\node (nk0) at (2,-3) {$n_\ell^0$};
		
		\node (n01) at (4,0) {$n_0^1$};
		\node (n11) at (4,-1) {$n_1^1$};
		\node (n1dots) at (4,-2) {\raisebox{0mm}[3.5mm]{$\vdots$}};
		\node (nk1) at (4,-3) {$n_\ell^1$};

		\path[draw,->>] (nk0) edge[out=0, in=225] (n01);
		\path[draw,->>] (n01) edge (n11);
		\path[draw,->>] (n11) edge (n1dots);
		\path[draw,->>] (n1dots) edge (nk1);
		
		\node (n02) at (6,0) {$n_0^2$};
		\node (n12) at (6,-1) {$n_1^2$};
		\node (n2dots) at (6,-2) {\raisebox{0mm}[3.5mm]{$\vdots$}};
		\node (nk2) at (6,-3) {$n_\ell^2$};

		\path[draw,->>] (nk1) edge[out=0, in=225] (n02);
		\path[draw,->>] (n02) edge (n12);
		\path[draw,->>] (n12) edge (n2dots);
		\path[draw,->>] (n2dots) edge (nk2);
		
		\path[draw,->, dashed,darkgreen,thick] (nk0) edge[out=90, in=180] node[left] {$H_0$} (n01);
		\path[draw,->, dashed,darkgreen,thick] (nk1) edge[out=65, in=-65] node[right] {$H_1$} (n01);
		\path[draw,->, dashed,darkgreen,thick] (nk2) edge[out=65, in=-65] node[right] {$H_2$} (n02);

	\end{tikzpicture}
	\caption{Dependency cycle in $\ledgraph(\aprogram)$ (left) and corresponding chain of derived nulls in $\fnChase{\aprogram,\Dnter}$ (right);
	$H_0$, $H_1$, and $H_2$ denote variants of a tgd head to illustrate propagation}\label{fig_chain_intuition}
\end{figure}
Lemma~\ref{lemma_pathquery} ensures that every chain of nulls in the chase is accompanied by
facts of the form $\fnApplyPath(\mathfrak{p})$, connecting all nulls of the chain.
Figure~\ref{fig_chain_intuition} sketches this situation for a cyclic path (left).
A corresponding chain of nulls $n^0_\ell\lchaseedgeto{y_0}n^1_0\lchaseedgeto{y_1}\ldots{}$ is sketched
on the right, where vertical positions indicate existential variables, i.e., $\nullvar{n^j_i}=v_i$.
Moreover, a solid arrow like $n^0_\ell\twoheadrightarrow n^1_0$ also corresponds to facts of the form
$\fnApplyPath(n^0_\ell\lchaseedgeto{y_0}n^1_0)$ that connect the respective nulls.
The dashed arrows are explained below.

We can use the facts of $\fnApplyPath(\mathfrak{p})$ to infer additional information that may help with
chase termination for cyclic paths.
Indeed, additional information may prevent tgd applications if the head of a tgd is
already entailed (Definition~\ref{def_chase} (2.a)).

To understand this better, let's denote the tgd for %
edge $v_\ell\ledgeto{y_0}v_0$
in Figure~\ref{fig_chain_intuition} as
$\arule=B[\vec{x},\vec{y}]\to\exists\vec{v}.H[\vec{y},\vec{v}]$
with $y_0\in\vec{y}$ and $v_0\in\vec{v}$.
If $i$ is the chase step that introduced $n^1_0$, then $H\stepmatchex{i}\subseteq\Dnter^{i+1}$.
In Figure~\ref{fig_chain_intuition}, $H_0$ represents the facts $H\stepmatchex{i}$,
which involve (among others) the terms $n^0_\ell$ and $n^1_0$, as suggested by the dashed arrow.

Clearly, we cannot apply $\arule$ twice for the same substitution of its frontier:
for all $i'>i$ and substitutions $\sigma$ with $\vec{y}\sigma=\vec{y}\stepmatch{i}$, we have $\Dnter^{i'}\models\exists\vec{v}.H\sigma$
due to the presence of $H_0$.
However, a cycle in $\ledgraph(\aprogram)$ can lead to a chain of tgd applications
as in Figure~\ref{fig_chain_intuition}, where 
$\arule$ is applied to different frontiers in several steps.
Indeed, if $j$ is the chase step that introduced $n^2_0$,
then $y_0\stepmatch{i} \neq y_0\stepmatch{j}$, so the matches differ on at least this frontier variable.
If we write $\vec{y^-}\defeq\vec{y}\setminus\{y_0\}$ for the frontier of $\arule$ without $y_0$,
we can think of $\vec{y^-}\stepmatch{i}$ as the ``context'' for which $\arule$ was applied to $n_\ell^0$ in step $i$.

Our goal is that $\arule$ can never be applied twice to the same context within a single chain.
To ensure this, we would like the chase to derive additional facts
$H_1$ and $H_2$ as indicated in Figure~\ref{fig_chain_intuition}.
Fixing a variable order $H[y_0,\vec{y^-},\vec{v}]$,
these facts are
$H_1\defeq H[n_\ell^1,\vec{y^-}\stepmatch{i},\vec{v}\stepmatchex{i}]$
and 
$H_2\defeq H[n_\ell^2,\vec{y^-}\stepmatch{i},\vec{v}\stepmatchex{j}]$.
Chains like those in Figure~\ref{fig_chain_intuition}, even if finite, can become rather 
long, and we need facts $H_p$ to be propagated to all nulls $n_\ell^p$.
We therefore require two kinds of conditions: (i) a base case that turns recently derived (forward) $H_0$
into a (backwards) $H_1$, and (ii) an inductive step that propagates a (backwards) $H_p$ to
another (backwards) $H_{p+1}$.
The next definition spells out the two conditions.
We generalise slightly by allowing that, instead of applying a single tgd $\arule$
several times, the propagation can occur between different tgds along a path.

\begin{definition}\label{def_propagating}
For $p\in\{\ast,a,b\}$, let $u^p\ledgeto{y^p}v^p\in\ledgraph(\aprogram)$ be
such that the corresponding tgd $\arule^p$ has a head $\exists v^p,\vec{w^p}.H^p[y^p,\vec{z^p},v^p,\vec{w^p}]$.
Moreover, let $\adatalogprogram$ denote the set of Datalog rules in $\Sigma$.

A path $\mathfrak{p}=u^\ast\ledgeto{y^\ast}v^\ast\ledgeto{z_2}\ldots\ledgeto{z_\ell} w$
is \emph{base-propagating} if
\begin{align}
	\adatalogprogram \models \fnApplyPath(\mathfrak{p}) \to H^\ast[\tilde{y}_{\ell+1},\vec{\tilde{z}_1},\tilde{y}_2,\vec{\tilde{w}_1}]\label{eq_propbase}
\end{align}
where the conclusion is the path's first head conjunction
$\fnApplyPath(\mathfrak{p})_{H,1} = H^\ast[\tilde{y}_1,\vec{\tilde{z}_1},\tilde{y}_2,\vec{\tilde{w}_1}]$ with $\tilde{y}_1$ replaced by the variable $\tilde{y}_{\ell+1}$ 
that occurs in $\fnApplyPath(\mathfrak{p})_{H,\ell} = H_\ell[\tilde{y}_\ell,\vec{\tilde{z}_\ell},\tilde{y}_{\ell+1},\vec{\tilde{w}_\ell}]$.

A composite path $\mathfrak{p}$ of the form $\mathfrak{p}^a \mathfrak{p}^b$ with
\[\mathfrak{p}^a=u^a\ledgeto{y^a}v^a\ledgeto{z_2}\ldots\ledgeto{z_\ell} u^b
\text{ and }
\mathfrak{p}^b=u^b\ledgeto{y^b}v^b\ledgeto{z'_2}\ldots\ledgeto{z'_k} w\]
is \emph{step-propagating} for $\arule^\ast$ (or, equivalently, for $u^\ast\ledgeto{y^\ast}v^\ast$) if
\begin{align}
	\adatalogprogram \models \fnApplyPath(\mathfrak{p}) \wedge{} H^\ast[\tilde{y}_{\ell+1},\vec{x_z},\tilde{y}_2,\vec{x_w}] \to H^\ast[\tilde{y}_{\ell+k+1},\vec{x_z},\tilde{y}_{\ell+2},\vec{x_w}]
\label{eq_propstep}
\end{align}
where $\vec{x_z}$ and $\vec{x_w}$ are lists of fresh variables of length $|\vec{x_z}|=|\vec{z^\ast}|$
and $|\vec{x_w}|=|\vec{w^\ast}|$, respectively;
$\tilde{y}_{\ell+k+1}$ is the renamed version of the final variable $w$ in $\fnApplyPath(\mathfrak{p})$;
and the other mentioned variables stem from the atom sets
$\fnApplyPath(\mathfrak{p})_{H,1} = H^a[\tilde{y}_1,\vec{\tilde{z}_1},\tilde{y}_2,\vec{\tilde{w}_1}]$ and
$\fnApplyPath(\mathfrak{p})_{H,\ell+1} = H^b[\tilde{y}_{\ell+1},\vec{\tilde{z}_{\ell+1}},\tilde{y}_{\ell+2},\vec{\tilde{w}_{\ell+1}}]$.
\end{definition}

Intuitively speaking, considering Figure~\ref{fig_chain_intuition},
base propagation allows us to infer $H_1$ from $H_0$, and step propagation allows us to infer $H_2$ from $H_1$.
In contrast to the simplified situation in the figure, the definition clarifies that the propagated head $H^\ast$
is not necessarily the head of the tgds $\arule^a$ and $\arule^b$ that occur in the current piece of chain we
consider.

\begin{example}\label{ex_dexp_propagation}
	We extend $\aprogram$ from Example~\ref{ex_dexp} by two additional tgds:
	\begin{align}
		\predname{up}(y_1,z,\bar{y}_2)\wedge\predname{part}(\bar{y}_2,y_3)
			&\to \predname{up}(y_3,z,\bar{y}_2) \label{eq_dexp_propbase}\\
		\predname{up}(y_3,z,\bar{y}_2)\wedge\predname{part}(\bar{y}_2,y_3) \wedge \predname{up}(y_3,z',\bar{y}_4)\wedge\predname{part}(\bar{y}_4,y_5)
			&\to \predname{up}(y_5,z,\bar{y}_4)\label{eq_dexp_propstep}
	\end{align}
	$\ledgraph(\aprogram)$ remains as in Example~\ref{ex_dexp}.
	The path $\mathfrak{p}=v\ledgeto{x} \bar{w}\ledgeto{\bar{x}_1} v$ from Example~\ref{ex_dexp_path} is base-propagating.
	Indeed, tgds \eqref{eq_dexp_part} and \eqref{eq_dexp_propbase} entail the required tgd
	$\fnApplyPath(\mathfrak{p}) \to \predname{up}(y_3,z'_+,\bar{x}_1)$, in fact even the stronger tgd
	$\predname{up}(x,z'_+,\bar{x}_1)\wedge\predname{cat}(\bar{x}_1,\bar{x}''_2,z'',y_3)\to \predname{up}(y_3,z'_+,\bar{x}_1)$.
	Similarly, for $\mathfrak{p}^a=v\ledgeto{x} \bar{w}\ledgeto{\bar{x}_1} v$ and $\mathfrak{p}^b=v\ledgeto{x} \bar{w}\ledgeto{\bar{x}_2} v$,
	the path $\mathfrak{p}^a\mathfrak{p}^b$ is step-propagating for \eqref{eq_dexp_up} (i.e., for $v\ledgeto{x} \bar{w}$) using
	tgds \eqref{eq_dexp_part} and \eqref{eq_dexp_propstep}.
	Note that variables $y_i$ in \eqref{eq_dexp_propstep} match the path labels in Definition~\ref{def_propagating}.
	
	As we will see below, the extended example achieves universal termination, and in particular also terminates for
	the database of Example~\ref{ex_dexp_nonterm}.
	For the case that $\predname{next}$ is a strict order, tgd \eqref{eq_dexp_lvl} ensures that each sequence is assigned a unique level,
	even with the additional $\predname{up}$-facts from tgds \eqref{eq_dexp_propbase} and \eqref{eq_dexp_propstep}.
\end{example}

The complexity of checking Definition~\ref{def_propagating} is dominated by the \ExpTime complexity of
Datalog.

\begin{restatable}{lemma}{lemmaPropCompl}
\label{lemma_prop_compl}
Checking whether a path $\mathfrak{p}$ is base-propagating (or step-propagating)
is \ExpTime-complete, and \PTime-complete with respect to the length of $\mathfrak{p}$.
\end{restatable}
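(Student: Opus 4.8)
The claim has two parts: (i) membership in \ExpTime\ (and the matching hardness), and (ii) the refined statement that the problem is \PTime-complete when measured in the length of $\mathfrak{p}$. I would treat the length parameter as the natural fine-grained measure and reduce everything to well-known facts about Datalog entailment of tgds, i.e.\ of the form $\adatalogprogram \models \phi \to \psi$ where $\phi,\psi$ are conjunctions of atoms with constants and variables. The standard reduction is: $\adatalogprogram \models \phi \to \psi$ iff, taking the canonical (frozen) database $D_\phi$ obtained by replacing the universally quantified variables of $\phi$ with distinct fresh constants, the least fixpoint $\adatalogprogram(D_\phi)$ contains a homomorphic image of $\psi$ (fixing the frozen constants and mapping the remaining, existentially quantified variables of $\psi$ somewhere). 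I would state this as a standard lemma and cite it.

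\textbf{Upper bounds.} First I would observe that, given $\mathfrak{p}$, the antecedents and consequents in \eqref{eq_propbase} and \eqref{eq_propstep} can be written down in time polynomial in the combined size of $\Sigma$ and $\mathfrak{p}$: $\fnApplyPath(\mathfrak{p})$ is a conjunction whose size is linear in $|\mathfrak{p}|$ with each conjunct of size bounded by the largest tgd body/head in $\Sigma$, and the extra $H^\ast$-atoms in the step-propagating case add only a constant (in $|\mathfrak{p}|$) number of atoms. So the whole check reduces, in polynomial time, to a single Datalog tgd-entailment instance whose left-hand side has size $O(|\mathfrak{p}|\cdot\|\Sigma\|)$ and whose program is $\adatalogprogram$. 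Then: (a) computing $\adatalogprogram(D_\phi)$ takes time polynomial in $|D_\phi|$ but exponential in the arities/number of variables of $\adatalogprogram$, hence exponential in $\|\Sigma\|$; this gives the \ExpTime\ bound. (b) If instead $\Sigma$ (hence $\adatalogprogram$) is fixed and only $|\mathfrak{p}|$ varies, then $|D_\phi|$ is polynomial in $|\mathfrak{p}|$ and the combined-complexity-in-the-data part of Datalog evaluation is \PTime\ in $|D_\phi|$, so the check is in \PTime\ with respect to $|\mathfrak{p}|$. The only subtlety is the final homomorphism-check of the consequent into $\adatalogprogram(D_\phi)$: the consequent $\psi$ here has a \emph{bounded} number of variables (the frontier and existential variables of one tgd head, plus the $\vec{x_z},\vec{x_w}$, all of size $O(\|\Sigma\|)$, independent of $|\mathfrak{p}|$), so this homomorphism test is polynomial in $|\adatalogprogram(D_\phi)|$ and fixed-parameter fine in $|\mathfrak{p}|$. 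I would make this variable-count observation explicit, since it is what keeps (ii) inside \PTime.

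\textbf{Lower bounds.} For \ExpTime-hardness I would reduce from Datalog program entailment (equivalently, from a Boolean atomic query entailment $\adatalogprogram \cup D \models \alpha$, which is \ExpTime-complete in combined complexity). Given an instance $(\adatalogprogram, D, \alpha)$, I would build a tgd set $\Sigma$ whose Datalog part is essentially $\adatalogprogram$ augmented with rules that load $D$ from, say, a single auxiliary body atom, pick a trivial cyclic $\ledgraph(\Sigma)$ (two existential variables $v,w$ with edges $v\to w\to v$ whose tgds and frontier atoms are innocuous), and arrange the tgd head $H^\ast$ and the path $\mathfrak{p}$ so that condition \eqref{eq_propbase} holds precisely when $\adatalogprogram\cup D\models\alpha$. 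The \PTime-hardness in $|\mathfrak{p}|$ is even easier: fix a \PTime-complete Datalog program and encode the input of the \PTime\ problem into the path $\mathfrak{p}$ (the conjunction $\fnApplyPath(\mathfrak{p})$ grows linearly with $|\mathfrak{p}|$, which is enough room to encode a string), so that base-propagation holds iff the \PTime\ predicate accepts. Here I would reuse whatever \PTime-hard Datalog program is standard (path/transitive-closure-style programs suffice).

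\textbf{Main obstacle.} The genuine work is not the complexity arithmetic but the two \emph{reductions}: making a cyclic $\ledgraph(\Sigma)$ and a concrete path $\mathfrak{p}$ whose base-/step-propagation condition is \emph{equivalent} to an arbitrary Datalog entailment, while keeping the construction a legal tgd set (safety, renamed-apart variables, frontier-atom requirements of Definition~\ref{def_gex}). In particular, in \eqref{eq_propbase} and \eqref{eq_propstep} the consequent is a \emph{specific} head atom $H^\ast$ of a tgd on the path, not an arbitrary target atom, and the antecedent is forced to be $\fnApplyPath(\mathfrak{p})$ (possibly conjoined with one copy of $H^\ast$), so the reduction must massage the target query $\alpha$ into that rigid shape — typically by adding to $\adatalogprogram$ a rule $\alpha \wedge (\text{guard atoms of }\fnApplyPath(\mathfrak{p})) \to H^\ast(\ldots)$ and a rule that derives $\alpha$'s enabling atoms from the guard atoms of $\fnApplyPath(\mathfrak{p})$. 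Checking that this auxiliary plumbing does not accidentally make the condition hold (soundness of the reduction) is the step that needs care. Everything else follows from the cited Datalog complexity bounds and the size estimates above.
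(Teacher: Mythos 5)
Your proposal is correct and follows essentially the same route as the paper: freeze the variables of the antecedent (and, since every variable of the conclusion already occurs in the antecedent, of the conclusion too) to reduce the check to a Datalog entailment instance whose program is $\adatalogprogram$ and whose data grows with $|\mathfrak{p}|$, then invoke the standard \ExpTime{} combined-complexity and \PTime{} data-complexity bounds for Datalog. The paper's own proof is in fact terser than yours — it simply asserts that the complexities are those of Datalog entailment and does not spell out the hardness reductions that you correctly identify as the only part requiring real construction work.
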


The conditions of Definition~\ref{def_propagating} have the desired impact on the chase:
if we find a sequence of nulls that corresponds (by Lemma~\ref{lemma_ledgraph}) to a path
that is propagating, then satisfied head atoms in the chase are propagated accordingly.
Moreover, since propagation is defined based on $\adatalogprogram$, the Datalog-first chase
ensures that the propagation happens before further nulls are introduced.
To ensure termination, we require that the cycles in
each strongly connected component in $\ledgraph(\aprogram)$
can be broken by removing a set of edges $E$ that are connected by propagating paths:

\begin{definition}\label{def_esaturating}
Let $C$ be a strongly connected component in $\ledgraph(\aprogram)$, and let $E$
be a set of edges in $C$.
An \emph{$\bar{E}$-path} is a path in $C$ that (i) contains no edges from $E$,
(ii) starts in some $s\in\{v\mid u\ledgeto{y}v\in E\}$, and (iii)
ends in some $t\in\{u\mid u\ledgeto{y}v\in E\}$.
Moreover,
$C$ is \emph{$E$-saturating} if
\begin{enumerate}
\item $C$ without the edges of $E$ is acyclic;\label{item_saturating_acyc}
\item for all $u\ledgeto{x}v, u'\ledgeto{x'}v \in E$, we have $x=x'$; \label{item_saturating_confluent}
\item all paths $e\mathfrak{p}$ in $C$, such that $e\in E$ and $\mathfrak{p}$ is an $\bar{E}$-path,
are base-propagating; and\label{item_saturating_base}
\item all paths $e^a\mathfrak{p}_1 e^b\mathfrak{p}_2$ in $C$, such that $e^a,e^b\in E$ and $\mathfrak{p}_1,\mathfrak{p}_2$ are $\bar{E}$-paths, are step-propagating for every $e\in E$.\label{item_saturating_step}
\end{enumerate}
$\aprogram$ is \emph{saturating} if
all strongly connected components in $\ledgraph(\aprogram)$ are $E$-saturating for some $E$.
\end{definition}

\begin{example}\label{ex_dexp_saturating}
The extended set of tgds from Example~\ref{ex_dexp_propagation} is saturating.
$\ledgraph(\aprogram)$ has a single strongly connected component (cf.\ Example~\ref{ex_dexp}) where
Definition~\ref{def_esaturating} is satisfied for $E=\{v\ledgeto{x}\bar{w}\}$.
For $e = v\ledgeto{x} \bar{w}$, \eqref{item_saturating_base} applies to two paths $e\mathfrak{p}$
and \eqref{item_saturating_step} to four paths $e\mathfrak{p} e\mathfrak{p}'$,
where $\mathfrak{p}, \mathfrak{p}'\in\{\bar{w}\ledgeto{\bar{x}_{i}} v\mid i\in\{1,2\}\}$.
Propagation follows as in Example~\ref{ex_dexp_propagation}.
\end{example}

\begin{restatable}{theorem}{thmSatCompl}
\label{thm_sat_compl}
Deciding if $C$ is $E$-saturating is \ExpTime-complete in the size of $C$.
The same complexity applies to deciding if a tgd set $\aprogram$ is saturating.
\end{restatable}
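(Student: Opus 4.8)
The plan is to prove membership in \ExpTime and \ExpTime-hardness separately, reusing Lemma~\ref{lemma_prop_compl}. For the upper bound, first observe that the four conditions of Definition~\ref{def_esaturating} are universally quantified over edge sets $E$, paths in $C$, and (for step-propagation) individual edges $e\in E$. For a fixed candidate $E$, conditions \eqref{item_saturating_acyc} and \eqref{item_saturating_confluent} are clearly polynomial: acyclicity of $C\setminus E$ is a linear-time graph test, and the label-equality condition just scans the edges into $E$. Conditions \eqref{item_saturating_base} and \eqref{item_saturating_step} are the interesting ones: they quantify over $\bar E$-paths, of which there can be exponentially many in $\card{C}$ (and individual paths can even be exponentially long, since $C\setminus E$ is a DAG). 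The key technical step is therefore to argue that we need not enumerate all $\bar E$-paths explicitly. The idea is that the propagation conditions \eqref{eq_propbase} and \eqref{eq_propstep} are Datalog entailments whose body is $\fnApplyPath(\mathfrak p)$, a conjunction built by concatenating tgd bodies and heads along $\mathfrak p$; by Lemma~\ref{lemma_prop_compl} checking a single path costs \PTime in its length and \ExpTime overall. One then shows that base- and step-propagation along $e\mathfrak{p}$ can be decided by a fixpoint/dynamic-programming computation over the DAG $C\setminus E$: we compute, for each vertex, the set of "propagated consequences" reachable via $\bar E$-paths from the relevant start vertices, where this set is drawn from a bounded universe (essentially the atoms over the frontier/existential variables of the tgds on the cycle, instantiated with the finitely many "marker" variables $\tilde y_i$, $\vec{x_z}$, $\vec{x_w}$ of Definition~\ref{def_propagating}). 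Since the DAG has at most $\card{C}$ vertices and the per-vertex information has size at most exponential in the tgd-set size, this fixpoint stabilises in polynomially many rounds, each round invoking the \ExpTime Datalog check from Lemma~\ref{lemma_prop_compl} a polynomial number of times — giving an \ExpTime procedure for a fixed $E$. Finally, the outer quantifier over $E$ ranges over subsets of the at most $\card{C}^2$ edges of $C$, i.e.\ at most $2^{\card{C}^2}$ candidates; iterating the \ExpTime test over all of them stays in \ExpTime. For the statement about a whole tgd set $\aprogram$: compute $\ledgraph(\aprogram)$ (polynomial in $\card{\aprogram}$), find its strongly connected components (linear), and run the above on each; the total is still \ExpTime.

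For the lower bound, the plan is to reduce from (uniform) Datalog query entailment, which is \ExpTime-complete. Given a Datalog program $\adatalogprogram$, a database-less Boolean query (or a designated nullary goal predicate), one constructs a tgd set $\aprogram$ whose labelled existential dependency graph has a single strongly connected component $C$ with a distinguished edge set $E$, rigged so that the base-propagation entailment \eqref{eq_propbase} for the unique $\bar E$-path holds if and only if $\adatalogprogram$ entails the goal. Concretely, one adds a couple of auxiliary existential rules (with fresh predicates) whose sole purpose is to create two vertices and two edges forming a 2-cycle — one edge in $E$, one $\bar E$-path edge — so that $\fnApplyPath$ of the $\bar E$-path, together with $\adatalogprogram$, must derive the head atom $H^\ast$; by wiring the goal predicate into $H^\ast$ one makes this Datalog entailment equivalent to the original one. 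All other conditions of Definition~\ref{def_esaturating} (acyclicity of $C\setminus E$, the label-equality condition, and the step-propagation condition) are made to hold trivially by construction (e.g.\ a single-element $E$ makes \eqref{item_saturating_confluent} vacuous, and one ensures the single $\bar E$-path leaves no room for a step-propagation obligation, or makes $H^\ast$ a tautology for it). Hence $C$ is $E$-saturating iff the Datalog entailment holds, and this construction is polynomial, establishing \ExpTime-hardness; the same instance shows hardness for "is $\aprogram$ saturating" since $\ledgraph(\aprogram)$ has only this one component.

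The main obstacle I expect is the upper-bound argument that the (potentially exponentially many, exponentially long) $\bar E$-paths need not be enumerated. One has to set up the dynamic-programming invariant carefully: the "state" propagated along the DAG must be a finite object (a set of head-atom patterns over a fixed finite set of variables), and one must verify that Datalog entailment along a concatenated path $e\mathfrak{p}$ is correctly captured by composing the per-edge Datalog entailments — i.e.\ that $\adatalogprogram\models\fnApplyPath(\mathfrak p)\to H^\ast$ can be reconstructed from intermediate entailments. This "compositionality of Datalog consequences along $\fnApplyPath$" is essentially the same phenomenon underlying Lemma~\ref{lemma_prop_compl}'s \PTime-in-length bound, so the cleanest route is to prove a small lemma to that effect first and then feed it into the fixpoint computation. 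A secondary, more routine obstacle is bookkeeping the variable renamings ($\tilde y_i$, $\vec{x_z}$, $\vec{x_w}$) consistently when paths are glued, but this is mechanical once the invariant is fixed.
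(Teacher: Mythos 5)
Your upper-bound argument rests on a false premise that then forces you into an unnecessary and unfinished construction. You claim that $\bar{E}$-paths ``can even be exponentially long, since $C\setminus E$ is a DAG'' --- but the opposite is true: because $C$ without the edges of $E$ is acyclic (condition \eqref{item_saturating_acyc}, which you check first), no walk in it can revisit a vertex, so every $\bar{E}$-path is simple and has length at most $\card{C}$. Consequently the paths $e\mathfrak{p}$ and $e^a\mathfrak{p}_1 e^b\mathfrak{p}_2$ of conditions \eqref{item_saturating_base} and \eqref{item_saturating_step} are all of polynomial length, and there are only exponentially many of them. The paper's proof therefore just enumerates them all and checks each one directly with Lemma~\ref{lemma_prop_compl}: exponentially many \ExpTime{} checks is still \ExpTime, and the outer enumeration over the exponentially many candidate sets $E$ (for the ``is $\aprogram$ saturating'' variant) and the polynomially many SCCs keeps everything in \ExpTime. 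Your dynamic-programming/fixpoint scheme over the DAG is not needed, and its central ingredient --- the ``compositionality of Datalog consequences along $\fnApplyPath$'' that would let you reconstruct $\adatalogprogram\models\fnApplyPath(\mathfrak{p})\to H^\ast$ from per-vertex summaries --- is exactly the step you flag as an open obstacle and do not prove. As it stands, the membership half of your proof is incomplete; once you correct the path-length observation, it collapses to the paper's one-line enumeration argument.

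Your lower bound is fine and in fact more explicit than the paper's, which only says that hardness follows from Lemma~\ref{lemma_prop_compl}; your gadget reducing Datalog entailment to a single base-propagation obligation on a two-vertex cycle is the natural way to make that transfer precise. Do take care, for the ``is $\aprogram$ saturating'' version, that the existential quantification over $E$ does not let an unintended choice of $E$ (e.g.\ the other edge of your $2$-cycle, or both edges) trivialise the check; you should either rule those out or show they yield an equivalent entailment.
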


In practice, the exponential factors in Theorem~\ref{thm_sat_compl} might be small, already because 
strongly connected components are often small.
The next result states that tgds in the edge sets $E$ can only be applied at most once for each substitution of their
``context'' variables -- an essential ingredient for termination.

\begin{restatable}{lemma}{lemmaContextsat}
\label{lemma_contextsat}
Let $C$ be an $E$-saturating strongly connected component in $\ledgraph(\aprogram)$,
and let $u\ledgeto{y}v\in E$ have the corresponding tgd $\arule_v$ with head $\exists v,\vec{w}. H[y,\vec{z},v,\vec{w}]$.
For every database $\Dnter$ and 
chain of nulls $n_0\lchaseedgeto{y}n_1\lchaseedgeto{x_2}\ldots\lchaseedgeto{x_{\ell-1}}n_{\ell-1}\lchaseedgeto{y}n_\ell$ in $\fnChase{\aprogram,\Dnter}$ such that $n_0=y\stepmatch{a}$, $n_1=v\stepmatchex{a}$, $n_{\ell-1}=y\stepmatch{b}$, and $n_\ell=v\stepmatchex{b}$ for chase steps $a<b$, we have $\vec{z}\stepmatch{a}\neq\vec{z}\stepmatch{b}$.
\end{restatable}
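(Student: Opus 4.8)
The plan is to argue by contradiction: suppose $\vec{z}\stepmatch{a}=\vec{z}\stepmatch{b}$ for some chain $n_0\lchaseedgeto{y}n_1\lchaseedgeto{x_2}\ldots\lchaseedgeto{x_{\ell-1}}n_{\ell-1}\lchaseedgeto{y}n_\ell$ with $a<b$. The idea is that the step $b$ that introduced $n_\ell$ should not have happened, because by the time we reach step $b$ the Datalog-first chase has already derived a variant of $H$ witnessing that the match of $\arule_v$ giving $n_\ell$ was in fact satisfied. Concretely, the match $\stepmatch{b}$ sends the frontier $y$ to $n_{\ell-1}$ and $\vec{z}$ to $\vec{z}\stepmatch{b}=\vec{z}\stepmatch{a}$; I want to show $\Dnter^{b}\models\exists v,\vec{w}.H[n_{\ell-1},\vec{z}\stepmatch{a},v,\vec{w}]$, contradicting Definition~\ref{def_chase}(2.a).

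The key steps, in order. First, locate the chain inside a strongly connected component: since the chain uses only edges of $C$ (the labels $y,x_2,\ldots,x_{\ell-1},y$ correspond, via Lemma~\ref{lemma_ledgraph}, to edges of $\ledgraph(\aprogram)$, and both endpoints involve $v$, hence they lie in $C$), I can assume the corresponding path $\mathfrak{p}$ in $C$ has the form $e^a\mathfrak{p}_1 e^b\mathfrak{p}_2\cdots$ by inserting, between consecutive $E$-edges, decompositions into maximal $\bar E$-subpaths, using acyclicity of $C\setminus E$ (item~\ref{item_saturating_acyc}) to guarantee these subpaths are finite and genuinely $\bar E$. By item~\ref{item_saturating_confluent}, every edge of $E$ occurring on this path carries the label $y$, so the two occurrences of label $y$ in the chain (at the start and the end) are exactly the $E$-edges $e^a$ and $e^b$, and the intermediate labels are $\bar E$-edges. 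Second, set up the induction over the number of $E$-edges traversed before reaching $n_\ell$, i.e., over the loop index $p$ in the picture of Figure~\ref{fig_chain_intuition}. The base case uses Lemma~\ref{lemma_pathquery}: the forward head $H\stepmatchex{a}$ is present in $\Dnter$, and feeding the facts $\fnApplyPath(\mathfrak{p})\theta$ guaranteed by Lemma~\ref{lemma_pathquery} into the base-propagating tgd~\eqref{eq_propbase} for the path $e^a\mathfrak{p}_1$ (which is base-propagating by item~\ref{item_saturating_base}) yields a "backwards" head $H_1 = H[n_?,\vec{z}\stepmatch{a},\ldots]$ attached to the next null. Since $\adatalogprogram$ entails this implication and the chase is Datalog-first (Definition~\ref{def_chase}(3)), the inferred atoms are already present before the next existential tgd fires. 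The inductive step is analogous with~\eqref{eq_propstep}: given the backwards head $H_p$ at the current null, item~\ref{item_saturating_step} tells us the relevant two-loop subpath $e\mathfrak{p}_1 e\mathfrak{p}_2$ is step-propagating for the $E$-edge in question, so combining $H_p$ with the path facts from Lemma~\ref{lemma_pathquery} and the step-propagation implication gives $H_{p+1}$ at the next null, again materialised eagerly by Datalog-first. Third, conclude: after finitely many loops we arrive at the null $n_{\ell-1}=y\stepmatch{b}$ carrying a head atom set of the form $H[n_{\ell-1},\vec{z}\stepmatch{a},v,\vec{w}]$ for suitable $v,\vec{w}$, present in $\Dnter^{b}$; since $\vec{z}\stepmatch{b}=\vec{z}\stepmatch{a}$, this witnesses that $\stepmatch{b}$ was a satisfied match in $\Dnter^{b}$, contradicting Definition~\ref{def_chase}(2.a) for step $b$.

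The main obstacle I anticipate is the bookkeeping of variable renamings connecting the three layers — the abstract path $\mathfrak{p}$ in $\ledgraph(\aprogram)$, the $\fnApplyPath(\mathfrak{p})$ query with its $\tilde{y}_i$ indexing, and the concrete chain of nulls with its substitutions $\stepmatchex{j}$ — so that the "backwards" head atom produced at each stage really is $H[\,\cdot\,,\vec{z}\stepmatch{a},\,\cdot\,,\,\cdot\,]$ with the \emph{same} context tuple $\vec{z}\stepmatch{a}$ at every loop. This is exactly what the fresh-variable lists $\vec{x_z},\vec{x_w}$ in~\eqref{eq_propstep} are engineered for: they are not constrained by the propagation implication, so the context passes through each inductive step unchanged. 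Making this precise — i.e., showing that the substitution $\theta$ from Lemma~\ref{lemma_pathquery} maps those positions to $\vec{z}\stepmatch{a}$ and that this is preserved under the composition of the Datalog derivations — is the technical heart of the argument; the rest is a routine combination of Lemmas~\ref{lemma_ledgraph} and~\ref{lemma_pathquery} with the definition of $E$-saturating components and the Datalog-first condition.
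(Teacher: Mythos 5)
Your proposal takes essentially the same route as the paper's proof: translate the chain into a path in $C$ via Lemma~\ref{lemma_ledgraph}, decompose it into $E$-edges separated by $\bar{E}$-paths, and induct over the occurrences of $E$-edges, using base-propagation (Definition~\ref{def_esaturating}~\eqref{item_saturating_base}) for the base case and step-propagation \eqref{item_saturating_step} for the inductive step, with Datalog-first guaranteeing the propagated head $H[\cdot,\vec{z}\stepmatch{a},\cdot,\cdot]$ is present in $\Dnter^b$, which blocks step $b$ under Definition~\ref{def_chase}~(2.a); the paper states this directly rather than by contradiction, which is immaterial. One side remark in your setup is wrong: condition \eqref{item_saturating_confluent} only forces equal labels for $E$-edges sharing a \emph{target}, so it does not follow that every $E$-edge on the path is labelled $y$, nor that the interior of the path contains only $\bar{E}$-edges --- the path may traverse several further $E$-edges with other labels. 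This does not damage your argument, because the induction you describe is over all $E$-edge occurrences and condition \eqref{item_saturating_step} quantifies over arbitrary $e^a,e^b\in E$ while requiring propagation for every $e\in E$, which is exactly what the general case needs.
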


The main result of this section is as follows. 
A more detailed analysis follows in the next section.

\begin{restatable}{theorem}{thmSaturTerm}
\label{thm_satur_term}
If $\aprogram$ is saturating,
then $\fnChase{\aprogram,\Dnter}$ is finite for all databases $\Dnter$. 
\end{restatable}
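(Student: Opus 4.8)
The plan is to argue by contradiction: assume $\fnChase{\aprogram,\Dnter}$ is infinite for some database $\Dnter$. By Lemma~\ref{lemma_infinite_path}, the chase contains an infinite chain of derived nulls $n_0\lchaseedgeto{y_1}n_1\lchaseedgeto{y_2}\cdots$, and by Lemma~\ref{lemma_ledgraph} this chain projects onto an infinite walk $\nullvar{n_0}\ledgeto{y_1}\nullvar{n_1}\ledgeto{y_2}\cdots$ in the finite graph $\ledgraph(\aprogram)$. Hence the walk eventually stays inside a single strongly connected component $C$, and by hypothesis $C$ is $E$-saturating for some edge set $E$. Since $C$ without the edges of $E$ is acyclic (Definition~\ref{def_esaturating}\eqref{item_saturating_acyc}), the infinite tail of the walk must traverse edges from $E$ infinitely often. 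So there is some fixed edge $u\ledgeto{y}v\in E$ whose corresponding tgd $\arule_v$ (with head $\exists v,\vec{w}.H[y,\vec{z},v,\vec{w}]$) is ``crossed'' infinitely often, at chase steps $a_1<a_2<a_3<\cdots$, producing nulls $n^{(k)}$ along the chain with $\nullvar{n^{(k)}}=v$. Between consecutive crossings the walk follows an $\bar E$-path (possibly after first reaching the start vertex of the next $E$-edge; here I would use Definition~\ref{def_esaturating}\eqref{item_saturating_confluent}, that all $E$-edges into $v$ carry the same label $y$, to line the crossings up so the ``forward'' head atoms $H$ have a consistent variable layout).

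The core of the argument is an induction showing that the context $\vec{z}$ of all these crossings is pairwise distinct, which is impossible because the context ranges over terms already present in the chase at the first crossing step (the frontier of $\arule_v$ minus $y$ is instantiated from $\Dnter^{a_1}$, which is finite), so only finitely many distinct contexts exist — contradiction with infinitely many crossings. To establish pairwise distinctness I would iterate Lemma~\ref{lemma_contextsat}: for crossings at steps $a<b$ connected by a chain that enters $v$ via $y$ both times and uses no other $E$-edge in between, Lemma~\ref{lemma_contextsat} directly gives $\vec{z}\stepmatch{a}\neq\vec{z}\stepmatch{b}$. For non-adjacent crossings $a_i<a_j$ with $E$-edges crossed in between, I would chain this: the point is that the propagation conditions (base-propagating for $e\mathfrak{p}$, step-propagating for $e^a\mathfrak{p}_1e^b\mathfrak{p}_2$) feed the satisfied head atom $H$ backwards along the whole chain regardless of how many $E$-edges are crossed, so Lemma~\ref{lemma_contextsat}'s machinery (which presumably already internalizes this via the definition of $\bar E$-paths and the full saturating conditions) applies to any two crossings of the same $E$-edge along the chain, not just adjacent ones. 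Thus no context $\vec{z}$-tuple is repeated among $n^{(1)},n^{(2)},\ldots$, and we are done.

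The main obstacle is the bookkeeping in the inductive/propagation step: making precise how the ``backwards'' head atom $H_p$ at crossing $p$ (built from the context fixed at the \emph{first} crossing together with the existentials created at crossing $p$) is actually derived in the Datalog-first chase \emph{before} the next existential tgd fires. This requires (i) matching the abstract path labels $y^\ast,\vec z^\ast,\vec w^\ast$ in Definition~\ref{def_propagating} to the concrete terms along the chain via the substitution $\theta$ from Lemma~\ref{lemma_pathquery}, (ii) invoking Datalog-first (Definition~\ref{def_chase}\eqref{i_chase_datalogfirst}) to guarantee $\adatalogprogram$ is saturated at the relevant step so that the entailed tgds \eqref{eq_propbase}/\eqref{eq_propstep} actually produce their conclusions, and (iii) checking that the presence of these propagated $H$-atoms is exactly what forces the frontier (specifically the context $\vec z$) of the next crossing of $\arule_v$ to differ — otherwise the match would already be satisfied and $\arule_v$ would not fire, contradicting that this crossing occurs. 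Step (iii) is essentially the content of Lemma~\ref{lemma_contextsat}, so a clean proof would phrase the whole theorem as: "the infinite chain crosses some $E$-edge infinitely often; Lemma~\ref{lemma_contextsat} forces infinitely many distinct contexts; but contexts come from a finite set; contradiction," with the propagation details quarantined inside the (already stated) Lemma~\ref{lemma_contextsat}.
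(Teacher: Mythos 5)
Your overall strategy matches the paper's: assume an infinite chase, extract an infinite chain of nulls (Lemma~\ref{lemma_infinite_path}), project it into $\ledgraph(\aprogram)$ (Lemma~\ref{lemma_ledgraph}), locate an $E$-edge that is crossed infinitely often inside an $E$-saturating SCC, and use Lemma~\ref{lemma_contextsat} to force pairwise distinct contexts $\vec{z}$. Your reading of Lemma~\ref{lemma_contextsat} is also right: as stated it applies to any two crossings of the same $E$-edge along a chain, with arbitrary (possibly $E$-edge-containing) chain segments in between, so no extra iteration argument is needed there.

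However, there is a genuine gap in your finiteness step. You claim the contexts $\vec{z}$ range over terms ``already present in the chase at the first crossing step $\Dnter^{a_1}$.'' This is false: the $k$-th crossing at step $a_k$ matches $\vec{z}$ into $\Dnter^{a_k}$, and nothing prevents these context terms from being nulls created long after $a_1$. Distinctness of infinitely many contexts is only a contradiction if the set of \emph{possible} contexts is finite, and establishing that requires two ingredients you omit. First, the confluence condition (Definition~\ref{def_esaturating}\eqref{item_saturating_confluent}) together with Lemma~\ref{lemma_ledgraph} is used to argue that any \emph{null} matched to a variable of $\vec{z}$ must satisfy $\nullvar{n}\notin C$ (with an edge from $\nullvar{n}$ into $C$) --- you mention confluence only as a device to ``line up'' the head atoms, not for this purpose. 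Second, and crucially, even terms from outside $C$ could be infinitely many if some other SCC feeding into $C$ itself generates infinitely many nulls. The paper therefore does not take ``the SCC in which the walk eventually stays'' but instead carefully chooses $C$ to be a $\prec$-minimal offender: an SCC containing an edge whose tgd is applied infinitely often, such that the tgd of \emph{every} edge entering $C$ from outside is applied only finitely often. Only with this choice is the pool of admissible context values (constants plus the finitely many external nulls with edges into $C$) finite, and only then does Lemma~\ref{lemma_contextsat} yield the contradiction. Without the minimal choice of $C$ and the confluence-based exclusion of in-$C$ nulls from $\vec{z}$, your argument does not go through.
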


\begin{example}\label{ex_sets_saturating}
	Theorem~\ref{thm_satur_term} also subsumes previous termination results by \citeauthor{Carral+19:ChasingSets} \cite{Carral+19:ChasingSets}.
	The following tgds captures the essence of their approach of simulating finite sets in tgds:
	\newcommand{\shortwedge}{\,{\wedge}\,}
	\begin{align} 
		\pnElem(x) \shortwedge \pnSet(S) & \to \exists v.\, \pnSet(S)\shortwedge \pnSU(x, S, v) \shortwedge \pnSU(x, v, v) \label{eq_set_construct}\\
		\!\pnSU(x, S, T) \shortwedge \pnSU(y, S, S) & \to \pnSU(y, T, T) \label{eq_set_propagate}%
	\end{align}
	The database can provide $\pnElem$-facts that define a domain of elements, and a fact $\pnSet(\emptyset)$.
	The tgd~\eqref{eq_set_construct} constructs new ``sets'' by creating facts $\pnSU(x,S,T)$, which can be read as $\{x\}\cup S=T$.
	In particular, $\pnSU(x,S,S)$ means $x \in S$.
	The tgd~\eqref{eq_set_propagate} propagates memberships $x \in S$ from $S$ to a direct superset $T$, 
	which we recognise as a special case of step propagation.
	Indeed, the only dependency here is $v\ledgeto{S}v$, and all paths $\mathfrak{p}_{(1/2)}$ in Definition~\ref{def_esaturating}
	are empty. Base propagation is achieved by the atom $\pnSU(x, v, v)$ in \eqref{eq_set_construct}
	without requiring any Datalog rules.
\end{example}

\begin{example}\label{ex_e_must_be_confluent}
Definition~\ref{def_esaturating} \eqref{item_saturating_confluent} excludes ``confluent''
edges with distinct labels: being based on single frontier variables, our propagation
conditions do not suffice for this case.
For example, consider the tgd set $\aprogram$ that extend the tgd set of Example~\ref{ex_sets_saturating}:
\begin{align}
	\pnSet(S) &\to \pnElem(S) \label{eq_confluent_edges_trigger}\\
	\pnSU(x, S, T) &\to \pnSU(T, S, T) \label{eq_confluent_edges_base_prop}\\
	\pnSU(x, S, T) \land \pnSU(y, x, x) &\to \pnSU(y, T, T) \label{eq_confluent_edges_step_prop_1}\\
	\pnSU(x, S, T) \land \pnSU(S, y, S) &\to \pnSU(T, y, T) \label{eq_confluent_edges_step_prop_2}\\
	\pnSU(x, S, T) \land \pnSU(x, y, x) &\to \pnSU(T, y, T) \label{eq_confluent_edges_step_prop_3}
\end{align}
We allow ``sets'' to be used as ``elements'' \eqref{eq_confluent_edges_trigger}, thereby creating a confluent $E$-edge.
The tgd~\eqref{eq_confluent_edges_base_prop} ensures that the new $E$-edge is base-propagating, and tgds~\eqref{eq_confluent_edges_step_prop_1} – \eqref{eq_confluent_edges_step_prop_3} ensure that the additional pairs of $E$-edges are step-propagating.
For $\Dnter = \{ \pnSet(\emptyset) \}$, $\fnChase{\aprogram, \Dnter}$ is infinite.

However, strongly connected components may have arbitrary confluent edges that are not 
in $E$, as in Example~\ref{ex_dexp_saturating}, and $E$ itself may have confluent edges
that are using the same label.
\end{example}

\section{Complexity of the Saturating Chase}\label{sec_rank}

Next, we refine Theorem~\ref{thm_satur_term} by deriving specific bounds for the size
of the chase over saturating tgd sets $\aprogram$, based on the structure of $\ledgraph(\aprogram)$.
For a vertex $v$ of $\ledgraph(\aprogram)$, we write $\scc{v}$ for the strongly connected component
that contains $v$, and $\scc{\ledgraph(\aprogram)}$ for the set of all strongly connected components.
An edge $u \ledgeto{y} w$ is \emph{incoming} for $\scc{v}$ if $u\notin\scc{v}$ and $w\in\scc{v}$;
in this case we write $\scc{u}\prec\scc{w}$ (where $\scc{w}=\scc{v}$).
The transitive reflexive closure of $\prec$ is the usual induced partial order on $\scc{\ledgraph(\aprogram)}$.

\begin{definition}\label{def_confluence}
For vertex $v$ of $\ledgraph(\aprogram)$, we define
the label set $\lambda(v) = \{ x \mid u \in \scc{v}, u \ledgeto{x} v\text{ in }\ledgraph(\aprogram)\}$,
and \emph{confluence} $\confluence{v} = \card{ \lambda(v) }$.
For $C\in\scc{\ledgraph(\aprogram)}$, we define
$\confluence{C} = \max \{ \confluence{u} \,{\mid}\, u\,{\in}\, C \}$.
$C$ is \emph{homogeneously confluent} if $\confluence{C} \,{=}\, 1$.
\end{definition}

Note that $\confluence{C} = 0$ implies that $C$ is trivial, i.e., a singleton set without any cycle (self loop).

\begin{definition}\label{def_C_input_depth}
Consider a database $\Dnter$ and $C\in\scc{\ledgraph(\aprogram)}$.
A term $t$ in $\fnChase{\aprogram, \Dnter}$ is a \emph{$C$-input}
if (i) $t$ is a constant, or (ii) $t$ is a null and $C$ has an incoming edge $\nullvar{t}\ledgeto{x} w$.
A null $n$ with $\nullvar{n}\in C$ has \emph{$C$-depth} $k$ if there is a maximal chain of nulls
$m_0\lchaseedgeto{y_1}\ldots\lchaseedgeto{y_k}m_k$ in $\fnChase{\aprogram, \Dnter}$ with
$m_k = n$ and $\nullvar{m_i} \in C$ for $0 \leq i \leq k$.
The $C$-depth of $n$ is undefined if the length of such chains is unbounded.
\end{definition}

The next result limits the number of bounded-$C$-depth nulls based on the number of 
$C$-inputs, thereby also clarifying the significance of homogeneous confluence.
Note that this general insight does not restrict to saturating tgd sets.

\begin{restatable}{lemma}{lemmaNullsAtDepth}
\label{lemma_nulls_at_depth}
Consider a database $\Dnter$ and $C\in\scc{\ledgraph(\aprogram)}$.
If $i$ is the number of $C$-inputs in $\fnChase{\aprogram, \Dnter}$,
then, for any $d\geq 0$, the number of nulls at $C$-depth $\leq d$ is at most
doubly exponential in $d$ and polynomial in $i$.
If $C$ is homogeneously confluent, then
this number is at most exponential in $d$.
\end{restatable}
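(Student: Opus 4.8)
The plan is to bound the number of nulls at each $C$-depth level by induction on the depth $d$, tracking how a null of depth $d$ is determined by its "generating data" — namely the frontier terms that fed into the chase step that created it, together with the existential variable it instantiates. First I would fix the database $\Dnter$, the component $C$, and the set $S$ of $C$-inputs with $|S| = i$. The base case $d = 0$: a null $n$ with $\nullvar{n}\in C$ has $C$-depth $0$ iff no incoming edge $m\lchaseedgeto{y}n$ with $\nullvar{m}\in C$ exists in any maximal chain ending at $n$; by Lemma~\ref{lemma_ledgraph} every term $t$ with $t\lchaseedgeto{y}n$ then has $\nullvar{t}\notin C$ or is a constant, i.e.\ is a $C$-input. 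So $n$ is created by a chase step whose relevant frontier terms all lie in $S$. The number of such steps is bounded by (number of tgds in $\aprogram$) times (number of ways to assign $S$-terms to frontier variables), which is polynomial in $i$ with exponent bounded by the maximal tgd arity — a constant. Hence the count at depth $0$ is polynomial in $i$.

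For the inductive step, suppose we have bounded the number $N_{d}$ of nulls at $C$-depth $\le d$. A null $n$ at $C$-depth $d+1$ is introduced by applying some tgd $\arule_{\nullvar{n}}$, and by Lemma~\ref{lemma_ledgraph} each frontier term $t$ feeding that application (via an edge $t\lchaseedgeto{y}n$) is either a $C$-input or a null at $C$-depth $\le d$; crucially, the existential variable $\nullvar{n}$ together with the tuple of these frontier terms determines $n$ essentially uniquely up to the bounded-arity combinatorics of the rule. So $N_{d+1}$ is bounded by a polynomial (of constant degree equal to the max frontier size) in $N_d + i$. Unfolding this recurrence, $N_{d+1} \le (N_d + i)^{c}$ for a constant $c = \max_{\arule}|\vec{y}_\arule|$, which yields $N_d \le (\text{poly}(i))^{c^{d}}$, i.e.\ doubly exponential in $d$ and polynomial in $i$. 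This handles the first claim.

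For the homogeneous-confluence refinement, the key observation is that when $\confluence{C} = 1$, the label set $\lambda(v)$ is a singleton for every $v\in C$: there is at most one frontier variable $y$ along which an intra-$C$ edge can enter $v$. Consequently, in a chase step creating a null $n$ with $\nullvar{n}\in C$, at most one of the frontier terms of $\arule_{\nullvar{n}}$ is itself a null with $\nullvar{\cdot}\in C$ (the one filling the slot of that unique label); all other frontier terms must be $C$-inputs. Therefore $n$ is determined by its unique intra-$C$ predecessor $m$ at $C$-depth $\le d$, its existential variable, and a tuple of $C$-inputs — giving $N_{d+1} \le N_d \cdot \text{poly}(i)$, hence $N_d \le \text{poly}(i)^{\,d+1}$, i.e.\ exponential in $d$ (polynomial for fixed $i$). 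I would phrase this carefully so that "determined by" is made precise: two nulls created by applications agreeing on rule, on the matched frontier tuple, are literally equal, since the fresh-null assignment $\stepmatchex{i}(v)$ depends only on the application and each $v$ gets one null. The main obstacle I anticipate is the bookkeeping in the inductive step to show that a bounded-arity frontier tuple really does pin down the created null, and to correctly separate $C$-inputs from intra-$C$ predecessors using Lemma~\ref{lemma_ledgraph} — in particular making sure the "maximal chain" definition of $C$-depth interacts correctly with the recurrence (a null might be reachable by several chains, and we must take the one witnessing maximal length), and that nulls of undefined $C$-depth are correctly excluded rather than miscounted. The rest is routine recurrence-unfolding arithmetic.
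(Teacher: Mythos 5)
Your proof is correct and follows essentially the same route as the paper's: a recurrence on $C$-depth that counts nulls at each level by the number of distinct (tgd, frontier-instantiation) pairs, using Lemma~\ref{lemma_ledgraph} to classify frontier terms as $C$-inputs versus intra-$C$ nulls of lower depth, with the observation that homogeneous confluence forces at most one intra-$C$ frontier term and hence a linear (exponential) rather than polynomial (doubly exponential) recurrence. The paper just organises the two cases uniformly via $g(d+1)=g(0)\,g(d)^{\beta}$ with $\beta=\confluence{C}$, but the substance is identical.
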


\begin{definition}\label{def_rank_comp}	
Let $\aprogram$ be saturating with $E_C$ denoting the set
$E$ of Definition~\ref{def_esaturating} for $C\in\scc{\ledgraph(\aprogram)}$.
Let $C_1,\ldots,C_k$ be a list of all $C_i\in\scc{\ledgraph(\aprogram)}$ topologically ordered w.r.t.\ the induced order $\prec$.

We define $\rank{C_i}$ 
iteratively for $i=1,\ldots,k$ as follows,
where we assume $\max\{\}=0$.
First, let $r_{\opfont{in}}^i \defeq \max\{\rank{C_j}\mid C_j\prec C_i\}$ and let
$r_{\opfont{cxt}}^i \defeq \max\{\rank{C}\mid C\in \mathcal{C}_\opfont{cxt}^i\}$ where
$\mathcal{C}_\opfont{cxt}^i$ denotes the set
\[
\{C_j\mid v,w\in C_i, u\in C_j, w\ledgeto{x}v \in E_{C_i}, u\ledgeto{y}v\in \ledgraph(\aprogram), x\neq y\}.
\]
Then the rank of $C_i$ is
\begin{align*}
\rank{C_i} \defeq
	\begin{cases} 
		r_{\opfont{in}}^i & \text{if }\confluence{C_i} = 0, \\
		\max\{ r_{\opfont{in}}^i, r_{\opfont{cxt}}^i + 1 \} & \text{if }\confluence{C_i} = 1, \\
		\max\{ r_{\opfont{in}}^i, r_{\opfont{cxt}}^i + 2 \} & \text{if }\confluence{C_i} \geq 2.
	\end{cases}
\end{align*}
The rank of $\aprogram$ is $\rank{\aprogram}=\max\{\rank{C}\mid C\in\scc{\ledgraph(\aprogram)}\}$.
\end{definition}

\begin{restatable}{theorem}{thmBoundedNulls}
\label{thm_bounded_nulls}
Let $\aprogram$ be saturating and $\Dnter$ a database. 
For %
every existential variable $v$ in $\aprogram$, the
number of nulls $n$ with $\nullvar{n} = v$ in $\fnChase{\aprogram, \Dnter}$
is at most $\rank{\scc{v}}$-exponential in the size of $\Dnter$.
\end{restatable}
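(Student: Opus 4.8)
The plan is to prove Theorem~\ref{thm_bounded_nulls} by induction on the topological order $C_1,\ldots,C_k$ of the strongly connected components of $\ledgraph(\aprogram)$, establishing simultaneously that for each $C_i$ and each null $n$ with $\nullvar{n}\in C_i$, the $C_i$-depth of $n$ (in the sense of Definition~\ref{def_C_input_depth}) is bounded, and that the total number of such nulls is $\rank{C_i}$-exponential in $\card{\Dnter}$. The crucial observation, which lets the induction go through, is that Lemma~\ref{lemma_nulls_at_depth} reduces everything to two quantities: the number of $C_i$-inputs, and the maximal $C_i$-depth reached. So the real work is to bound the $C_i$-depth.

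First I would handle the \emph{$C_i$-inputs}. A $C_i$-input is either a constant (at most $\card{\Dnter}$ many, since nulls are never constants and the database is the only source of constants) or a null $\nullvar{t}\ledgeto{x}w$ giving an incoming edge into $C_i$; by Lemma~\ref{lemma_ledgraph} such a null $t$ has $\nullvar{t}$ lying in a strictly smaller component $C_j\prec C_i$, so the induction hypothesis bounds the number of these by something $r_{\opfont{in}}^i$-exponential in $\card{\Dnter}$. Hence the number of $C_i$-inputs $i_0$ is at most $r_{\opfont{in}}^i$-exponential.

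The main obstacle — and the heart of the argument — is bounding the $C_i$-depth when $C_i$ is \emph{not} trivial, i.e.\ $\confluence{C_i}\geq 1$. Here I would use the saturating property: let $E=E_{C_i}$ be the edge set from Definition~\ref{def_esaturating}. Along any maximal chain of nulls $m_0\lchaseedgeto{}m_1\lchaseedgeto{}\cdots$ with all $\nullvar{m_j}\in C_i$, since $C_i$ without $E$ is acyclic (item~\ref{item_saturating_acyc}), the chain can only stay long by traversing edges of $E$ infinitely — so its length is controlled by the number of $E$-edge traversals, times a fixed constant (the length of the longest $E$-free path in $C_i$). For each $E$-edge $u\ledgeto{y}v$, Lemma~\ref{lemma_contextsat} says that along a single chain the corresponding tgd can be applied only once per value of its context frontier $\vec{z}$. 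The context values $\vec{z}\stepmatch{}$ are built either from $C_i$-inputs (when the relevant predicate positions receive data from outside $C_i$ or from the confluent edges with a different label, which by Definition~\ref{def_rank_comp} belong to components in $\mathcal{C}_{\opfont{cxt}}^i$, hence by induction are $r_{\opfont{cxt}}^i$-exponentially many) or from nulls internal to $C_i$ of strictly smaller $C_i$-depth. This sets up a secondary recurrence on $C_i$-depth: writing $N_d$ for the number of distinct context tuples available to nulls of $C_i$-depth $\leq d$, I would show $N_{d}$ is bounded in terms of $N_{d-1}$ and the number of $C_i$-inputs, and that the number of nulls at depth $d$ is in turn polynomial (or, if $\confluence{C_i}\geq 2$, exponential) in $N_{d-1}$ via Lemma~\ref{lemma_nulls_at_depth}. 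Unrolling this recurrence shows the $C_i$-depth is at most (singly) exponential in the number of $C_i$-inputs when $\confluence{C_i}=1$, and polynomial when one is careful; plugging this depth bound back into Lemma~\ref{lemma_nulls_at_depth} then yields the claimed number of nulls: $C_i$-depth exponential in an $r_{\opfont{cxt}}^i$-exponential quantity is $(r_{\opfont{cxt}}^i+1)$-exponential, and one further exponential blow-up from $\confluence{C_i}\geq 2$ in Lemma~\ref{lemma_nulls_at_depth} accounts for the $+2$ case. Combining with the $r_{\opfont{in}}^i$-exponential bound on nulls of small $C_i$-depth coming from outside, we obtain that the total is $\rank{C_i}$-exponential, matching exactly the case distinction in Definition~\ref{def_rank_comp}.

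I expect the delicate bookkeeping to be: (a) carefully separating, among the frontier positions of an $E$-edge's tgd, which ones can only be fed by $C_i$-inputs versus smaller-depth internal nulls — this is where Definition~\ref{def_rank_comp}'s set $\mathcal{C}_{\opfont{cxt}}^i$ and the label-mismatch condition $x\neq y$ enter — and (b) verifying that the step-propagation condition (item~\ref{item_saturating_step}) indeed guarantees, via the Datalog-first discipline (Definition~\ref{def_chase}\eqref{i_chase_datalogfirst}) and Lemma~\ref{lemma_pathquery}, that the blocking of repeated applications envisioned in Lemma~\ref{lemma_contextsat} actually reaches every null $n_\ell^p$ before any new null in $C_i$ of greater depth is created, so that the per-context bound is genuinely enforced. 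Everything else — the exponential arithmetic, the constant factors from $E$-free path lengths — is routine.
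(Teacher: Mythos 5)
Your overall skeleton matches the paper's: induction over the topological order of SCCs, an $r_{\opfont{in}}^i$-exponential bound on the $C_i$-inputs, a bound on the $C_i$-depth via Lemma~\ref{lemma_contextsat}, and a final application of Lemma~\ref{lemma_nulls_at_depth}. But the depth-bounding step — which you correctly identify as the heart of the matter — goes wrong. You allow the context tuples $\vec{z}\stepmatch{}$ of an $E$-tgd to be built ``from nulls internal to $C_i$ of strictly smaller $C_i$-depth'' and then try to close a recurrence between the number of available contexts $N_d$ and the number of nulls at depth $d$. This recurrence is circular and yields no bound: Lemma~\ref{lemma_contextsat} caps the number of $E$-tgd applications \emph{per chain} by the number of globally available context tuples, so if that number itself grows (polynomially or exponentially) with the depth, the inequality ``depth $\lesssim$ number of contexts at that depth'' is satisfied for every $d$ and pins down nothing. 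The paper avoids this entirely: it argues (from Definition~\ref{def_esaturating}~\eqref{item_saturating_confluent} together with Lemma~\ref{lemma_ledgraph}, and the structure of $\mathcal{C}_{\opfont{cxt}}^i$ in Definition~\ref{def_rank_comp}) that the context variables $\vec{z}$ of an $E$-tgd can \emph{only} be instantiated with constants or with nulls whose variable lies \emph{outside} $C_i$ — never with nulls internal to $C_i$. Hence the number of distinct contexts is at most $(\kappa+\opfont{cxt}_i)^{|\vec{z}|}$, a quantity controlled by the induction hypothesis on smaller components and independent of the depth, and the $C_i$-depth is \emph{polynomial} in $\opfont{cxt}_i$. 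This non-circularity is the one idea your proposal is missing.

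The error also propagates into your exponential arithmetic. You conclude that the depth is ``exponential in the number of $C_i$-inputs,'' hence $(r_{\opfont{cxt}}^i+1)$-exponential, and then present that as the null count. But Lemma~\ref{lemma_nulls_at_depth} gives a null count that is (singly or doubly) exponential \emph{in the depth}, so a $(r_{\opfont{cxt}}^i+1)$-exponential depth would yield a $(r_{\opfont{cxt}}^i+2)$- or $(r_{\opfont{cxt}}^i+3)$-exponential null count — exceeding $\rank{C_i}$. With the paper's polynomial (i.e.\ $r_{\opfont{cxt}}^i$-exponential) depth bound, the single- and double-exponential dependencies of Lemma~\ref{lemma_nulls_at_depth} land exactly on the $+1$ and $+2$ cases of Definition~\ref{def_rank_comp}, and the $r_{\opfont{in}}^i$-exponential input count covers the polynomial dependency on $i$. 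You should also treat $\confluence{C_i}=0$ as a separate base case (the tgd of the trivial component is applied at most once per instantiation of its frontier over the $C_i$-inputs), which the paper does explicitly.
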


Example~\ref{ex_dexp_saturating} (and the discussion in Example~\ref{ex_dexp})
shows that the upper bound of Theorem~\ref{thm_bounded_nulls} can be reached. We can further strengthen this into a hardness result:

\begin{restatable}{theorem}{thmRankCompl}
\label{theo_rank_compl}
Let $\aprogram$ be saturating.
For every database $\Dnter$ the size of $\fnChase{\aprogram, \Dnter}$ is at most $\rank{\aprogram}$-exponential in the size of $\Dnter$,
and BCQ entailment is \kExpTime{\rank{\aprogram}}-complete for data complexity.
\end{restatable}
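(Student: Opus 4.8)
The plan is to prove the two directions separately: the $\rank{\aprogram}$-exponential size bound together with membership in \kExpTime{\rank{\aprogram}}, and then a matching \kExpTime{\rank{\aprogram}}-hardness result for suitable saturating tgd sets.

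\emph{Size bound and membership.} By Theorem~\ref{thm_satur_term}, $\fnChase{\aprogram,\Dnter}$ is finite. By Theorem~\ref{thm_bounded_nulls}, each of the (constantly many, since $\aprogram$ is fixed) existential variables $v$ of $\aprogram$ contributes at most $\rank{\scc v}$-exponentially many nulls in $\card{\Dnter}$; summing over all existential variables and recalling $\rank{\aprogram}=\max_{C}\rank{C}=\max_v\rank{\scc v}$, the number of nulls, and hence of terms when we add the constants of $\Dnter$, is $\rank{\aprogram}$-exponential in $\card{\Dnter}$. Since $\aprogram$ uses only boundedly many predicates of bounded arity, the number of atoms over these terms — thus $\card{\fnChase{\aprogram,\Dnter}}$ — is polynomial in the number of terms, hence still $\rank{\aprogram}$-exponential in $\card{\Dnter}$. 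For membership I would compute one concrete chase sequence: every step runs in time polynomial in the size of the current interpretation (enumerate candidate matches of each rule, test whether a match is unsatisfied, and verify the Datalog-first side condition of Definition~\ref{def_chase}\eqref{i_chase_datalogfirst} by evaluating the fixed Datalog part), the sequence terminates, and it has at most $\card{\fnChase{\aprogram,\Dnter}}$ steps because every step adds at least one atom. Since $\aprogram,\Dnter\models q$ iff $\fnChase{\aprogram,\Dnter}\models q$, and a fixed BCQ $q$ is evaluated over the finite chase in polynomial time, the whole procedure runs in time polynomial in a $\rank{\aprogram}$-exponential, i.e.\ in \kExpTime{\rank{\aprogram}}.

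\emph{Hardness.} I would reduce from the word problem of $\kappa$-exponential-time Turing machines. Fix such a machine $M$, which on input of length $n$ runs for at most $\exptower{\kappa}{}{p(n)}$ steps on a tape of that length, for a polynomial $p$. The reduction maps a word $w$ to a database $\Dnter_w$ that only encodes $w$ together with a short linear order of $p(\card w)$ ``bit positions'', as in $\Dnter_{\predname{lvl}}$ of Example~\ref{ex_dexp}. The tgd set $\aprogram_\kappa$ is independent of $w$: it first builds a linearly ordered ``address space'' of $\exptower{\kappa}{}{p(n)}$ elements by stacking $\kappa$ copies of (a homogeneously confluent variant of) the level/set constructions of Examples~\ref{ex_dexp}--\ref{ex_dexp_saturating} and~\ref{ex_sets_saturating}, with copy $i$ forming a strongly connected component $C_i$ such that $C_{i-1}\prec C_i$ and such that $C_i$ uses the $(i-1)$-fold exponentially many elements produced by $C_{i-1}$ as ``context'' (so that the ranks come out as $\rank{C_i}=i$); as in Example~\ref{ex_dexp_propagation} each $C_i$ is equipped with the Datalog propagation rules making it $E$-saturating, whence $\rank{\aprogram_\kappa}=\kappa$. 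A final block of pure Datalog rules then simulates the run of $M$ over the address space — representing configurations by $\predname{tape}$-, $\predname{hpos}$-, and state atoms indexed by time and position, with successor/head-movement/transition rules — and derives a distinguished nullary atom $\predname{acc}$ exactly when $M$ accepts. With $q=\predname{acc}$, Theorem~\ref{thm_satur_term} guarantees that the chase terminates, and by construction $\aprogram_\kappa,\Dnter_w\models\predname{acc}$ iff $M$ accepts $w$; ranging over all \kExpTime{\kappa} problems shows that data complexity is \kExpTime{\kappa}-hard for this saturating set of rank $\kappa$.

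\emph{Main obstacle.} The upper-bound direction is routine given Theorems~\ref{thm_satur_term} and~\ref{thm_bounded_nulls}; the real work is in the lower bound, where the stacked construction must be designed so that, simultaneously: (a) every cycle in each $C_i$ is broken by an $E$-saturating edge set, i.e.\ the base- and step-propagation entailments of Definition~\ref{def_propagating} genuinely hold for \emph{all} $\bar E$-paths (not merely the handful that appear in the running examples), which by Theorem~\ref{thm_sat_compl} is a finite Datalog-entailment check but must be argued in full generality; (b) the ranks come out exactly $1,2,\dots,\kappa$ — in particular each $C_i$ must be homogeneously confluent and its context SCCs must be exactly the lower layers, so that the rank does not overshoot, and the TM-simulation Datalog rules must not feed back into existential positions and thereby create new dependency edges or raise confluence (cf.\ the cautionary Example~\ref{ex_e_must_be_confluent}); and (c) the interaction with the Datalog-first strategy does not let the extra propagated atoms collapse the intended address space, and the simulation over the multiply-exponential ordered domain must be verified to be faithful and terminating. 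Getting (a)--(c) to hold at once, and tying them to a correct TM encoding, is where the bulk of the effort lies.
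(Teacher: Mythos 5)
Your proposal is correct and matches the paper's proof in all essentials: the upper bound is obtained exactly as in the paper from Theorem~\ref{thm_bounded_nulls} together with materialising the finite chase and evaluating $q$ over it, and the lower bound likewise reduces from the word problem of $\kappa$-exponentially time-bounded Turing machines, simulated by standard Datalog rules over a $\kappa$-exponentially long total order built by stacking saturating strongly connected components (with fresh predicates per layer, so the feedback problem of Example~\ref{ex_e_must_be_confluent} does not arise). The only, immaterial, difference is in the order gadget: you stack $\kappa$ homogeneously confluent layers each raising the rank by one, whereas the paper iterates the confluence-$2$ concatenation construction of Examples~\ref{ex_dexp}--\ref{ex_dexp_saturating} (raising the rank by two per layer, reaching all even ranks) and patches in one confluence-$1$ nested-sets layer to cover the odd ranks.
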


\section{The Chase in the Forest}\label{sec_optchase}

In this section, we refine Theorem~\ref{theo_rank_compl} by identifying cases where BCQ answering over saturating tgd sets $\aprogram$
is not $\kExpTime{\rank{\aprogram}}$-complete but merely $\kExpSpace{(\rank{\aprogram}-1)}$-complete, and we design a chase procedure
that runs within these complexity bounds.
To simplify presentation, we consider tgd sets with a single rank-maximal SCC $\hat{C}$ in $\ledgraph(\aprogram)$
(generalisations are possible; see concluding remarks).
If $\confluence{\hat{C}} = 1$, we can establish a tree-like search space within the chase that follows the edges of $\ledgraph(\aprogram)$.
A new syntactic restriction on tgds, called \emph{path guardedness}, ensures that a chase that follows this
tree-like structure remains complete for conjunctive query answering.

\begin{definition}\label{def_arboreous}
A tgd set $\aprogram$ is \emph{arboreous} if it is saturating, and has a unique $\hat{C}\in\scc{\aprogram}$
with $\rank{\hat{C}}=\rank{\aprogram}$, which satisfies $\confluence{\hat{C}} \leq 1$.
For such $\aprogram$ and some $\fnChase{\aprogram, \Dnter}$, the \emph{null forest} is the directed graph $\tuple{\hat{N},\nufoedge}$ with
$\hat{N}=\{n\in\sigNull(\fnChase{\aprogram, \Dnter})\mid\nullvar{n}\in\hat{C}\}$ the nulls of variables in $\hat{C}$,
and $n\nufoedge m$ if $n\lchaseedgeto{x}m\text{ for some }x$.
\end{definition}

\begin{restatable}{lemma}{lemmaArboreous}
\label{lemma_arboreous}
For every arboreous $\aprogram$ and $\fnChase{\aprogram, \Dnter}$, the null forest is indeed a forest (set of trees).
\end{restatable}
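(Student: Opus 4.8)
The plan is to verify the two structural properties that characterise a forest for the directed graph $\tuple{\hat{N},\nufoedge}$: that it is acyclic, and that every vertex has in-degree at most one. Acyclicity should follow directly from the dynamics of the chase. If $n\nufoedge m$, then $n\lchaseedgeto{x}m$ for some frontier variable $x$, which by Definition~\ref{def_chase} means $n=\stepmatch{i}(x)$ for the chase step $i$ that introduced $m$; in particular $n$ occurs in $\Dnter^{i}$. Since $n\in\hat{N}$ is a null, it was itself introduced as a fresh null in some step $j$ with $n\notin\Dnter^{j}$, which forces $j<i$. Hence the map sending each null to the index of the step that introduced it is strictly increasing along $\nufoedge$; as these indices are natural numbers, $\nufoedge$ has no directed cycles, and moreover every $\nufoedge$-path is finite.

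The main point is to bound the in-degree, and this is where I would spend the effort. Fix $m\in\hat{N}$, let $v=\nullvar{m}\in\hat{C}$, and let $i$ be the unique chase step that introduced $m$, so $\steprule{i}=\arule_v$. Suppose $n\nufoedge m$, i.e.\ $n\lchaseedgeto{x}m$ for some frontier variable $x$ of $\arule_v$ with $n\in\hat{N}$, so $\nullvar{n}\in\hat{C}$. By Lemma~\ref{lemma_ledgraph}, $\ledgraph(\aprogram)$ contains the edge $\ledge{\nullvar{n}}{v}{x}$, whose source $\nullvar{n}$ lies in $\scc{v}=\hat{C}$; hence $x\in\lambda(v)$ by Definition~\ref{def_confluence}. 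Since $\aprogram$ is arboreous, $\confluence{\hat{C}}\leq 1$ (Definition~\ref{def_arboreous}), so $\confluence{v}=\card{\lambda(v)}\leq 1$. If $\lambda(v)=\emptyset$, then $m$ has no $\nufoedge$-predecessor. If $\lambda(v)=\{x_0\}$, then the frontier variable $x$ must equal $x_0$, and therefore $n=\stepmatch{i}(x_0)$ is uniquely determined by $m$ (via $i$ and the fixed chase sequence). In either case $m$ has at most one incoming $\nufoedge$-edge.

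Finally, I would conclude by invoking the elementary fact that a directed graph with in-degree at most one and no directed cycles is a forest: an undirected cycle would have to be a directed cycle (excluded) or contain a vertex with two incoming edges (excluded), so the underlying undirected graph is acyclic. Combined with the previous two paragraphs this gives the claim. I do not anticipate a serious obstacle; the only delicate bookkeeping is in the in-degree argument, namely confirming that a null-forest edge into $m$ gives an $\ledgraph(\aprogram)$-edge whose source lies in $\hat{C}$ (so that its label is counted by $\confluence{\hat{C}}$), and that once this label and the target $m$ are fixed, the source is pinned down by the single match $\stepmatch{i}$ used in the step creating $m$.
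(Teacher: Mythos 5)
Your proof is correct and follows essentially the same route as the paper: acyclicity from the chronological order of null introduction, and in-degree at most one from Lemma~\ref{lemma_ledgraph} together with $\confluence{\hat{C}}\leq 1$ (the paper phrases this as a contradiction with two predecessors forcing two distinct labels in $\lambda(v)$, whereas you argue uniqueness directly, but the content is identical). The extra remark that acyclic plus in-degree one yields a forest in the undirected sense is a harmless addition the paper leaves implicit.
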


Next, we use the special tgds that correspond to set $E$ of Definition~\ref{def_esaturating}
to partition the null forest into sub-forests. The intuition is that special tgd applications
start a new sub-forest (their fresh nulls being the roots), whereas other tgd applications remain within
their current sub-forest. By placing all remaining terms of the chase in an additional root node, we obtain a
tree structure whose nodes are sets of terms that partition the terms of the chase:

\begin{definition}\label{def_fac_null_forest}
Let $\aprogram$, $\hat{C}$, $\hat{N}$, and $\nufoedge$ be as in Definition~\ref{def_arboreous},
let $\hat{E}$ be the edge set $E$ of Definition~\ref{def_esaturating} for $\hat{C}$, and let 
$T$ be the set of all terms in $\fnChase{\aprogram, \Dnter}$.
An \emph{$\hat{E}$-tgd} is a tgd with an existential variable that is the target of an edge in $\hat{E}$.
The \emph{$\hat{E}$-variables} $V_{\hat{E}}$ are all existential variables in $\hat{C}$ that occur in some
$\hat{E}$-tgd.

For every chase step $i$ where an $\hat{E}$-tgd was applied, let
$R[i]$ be the set of fresh nulls introduced in step $i$. 
Let $\bar{R}\subseteq\hat{N}$ be the set of nulls that are not in any such $R[i]$.
Then, for each $R[i]$, let $F[i]\subseteq\hat{N}$ be the least set that contains $R[i]$ and
all $m\in\bar{R}$ for which there is $n\in F[i]$ with $n\nufoedge m$.
Let $\mathcal{F}$ be the set of all such sets $F[i]$.
For $F_1,F_2\in\mathcal{F}$ with $F_1\neq F_2$, we write $F_1\fanufoedge F_2$ if there are 
$n_i\in F_i$ ($i=1,2$) such that $n_1\nufoedge n_2$.

Let $\locdec_0=T\setminus\bigcup_{F\in\mathcal{F}} F$.
The \emph{term tree of $\fnChase{\aprogram, \Dnter}$} is the graph $\tuple{N_\sim,\fanufoedge}$ with
$N_\sim = \mathcal{F}\cup\{\locdec_0\}$, and $\fanufoedge$ extended to $N_\sim$ by setting
$\locdec_0\fanufoedge F$ for all $F\in\mathcal{F}$ that have no $\fanufoedge$-predecessor in $\mathcal{F}$.
The reflexive transitive closure of $\fanufoedge$ is denoted $\fanufoedgeStar$.
\end{definition}

\begin{restatable}{lemma}{lemmaFacNullForest}
\label{lemma_fac_null_forest}
For every arboreous $\aprogram$ and $\fnChase{\aprogram, \Dnter}$,
$N_\sim$ is a partition of the terms in $\fnChase{\aprogram, \Dnter}$,
and the term tree is indeed a tree.
\end{restatable}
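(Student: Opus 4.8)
\textbf{Proof proposal for Lemma~\ref{lemma_fac_null_forest}.}
The plan is to verify the two assertions separately, reusing the structural facts established about the null forest in Lemma~\ref{lemma_arboreous}. For the partition claim, first I would check that the sets in $\mathcal{F}$ together with $\locdec_0$ cover all terms of $\fnChase{\aprogram,\Dnter}$: this is immediate since $\locdec_0$ is defined as $T\setminus\bigcup_{F\in\mathcal{F}}F$, so coverage is by construction. The real work is disjointness of the $F[i]$'s. Each $F[i]$ consists of its root set $R[i]$ (the fresh nulls of the $\hat{E}$-tgd application at step $i$) together with all $\bar{R}$-nulls reachable from $R[i]$ along $\nufoedge$-paths that stay inside $\bar{R}$. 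Since the $R[i]$'s are pairwise disjoint (distinct chase steps introduce distinct fresh nulls) and disjoint from $\bar{R}$ by definition, two sets $F[i]\neq F[j]$ can only overlap on a $\bar{R}$-null $m$ that is $\nufoedge$-reachable (through $\bar{R}$) from both $R[i]$ and $R[j]$. Here I would invoke Lemma~\ref{lemma_arboreous}: the null forest is a forest, so $m$ has a unique path back to its root in $\hat{N}$, and that path determines a unique ancestor in $\bigcup_i R[i]$ (the first $\hat{E}$-introduced null encountered, which must exist because non-$\hat{E}$-tgd applications keep nulls within the current sub-forest while $\hat{E}$-tgd applications start fresh roots). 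Hence $m$ lies in exactly one $F[i]$, giving disjointness. Finally, the fact that each individual $F[i]$ is well-defined as the \emph{least} fixpoint of the stated closure is routine.

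For the tree claim, I would argue that $\tuple{N_\sim,\fanufoedge}$ is connected and acyclic with $\locdec_0$ as root. Connectedness: every $F\in\mathcal{F}$ that has no $\fanufoedge$-predecessor in $\mathcal{F}$ is explicitly linked to $\locdec_0$; and every $F$ that does have a predecessor in $\mathcal{F}$ is reachable from $\locdec_0$ by induction along $\fanufoedge$-chains, which terminate because the underlying $\nufoedge$-relation on the finite set $\hat{N}$ is acyclic (again by Lemma~\ref{lemma_arboreous}). Acyclicity and the unique-parent property: I would show each $F[j]\in\mathcal{F}$ has at most one $\fanufoedge$-predecessor. Suppose $F_1\fanufoedge F[j]$ via $n_1\nufoedge n_2$ with $n_2\in F[j]$. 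Then $n_2$ lies on a $\nufoedge$-path from $R[j]$ through $\bar{R}$, and $n_1$ is its unique $\nufoedge$-parent in the forest. If $n_2\in R[j]$, then $n_1$'s containing set $F_1$ is forced by which $F[i]$ (or $\locdec_0$) contains the parent of the step-$j$ root; if $n_2\in\bar{R}$, then $n_1\in\bar{R}$ as well (otherwise $n_2$'s path to a root would not pass through $R[j]$, contradicting $n_2\in F[j]$), so $n_1\in F[j]$ and this does not yield a cross-edge. Either way the predecessor, if any, is unique and strictly closer to $\locdec_0$ in the induced ordering, so $\fanufoedge$ has no cycles. Together with connectedness this makes the term tree a tree rooted at $\locdec_0$.

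The main obstacle I anticipate is the bookkeeping around which sub-forest a given $\bar{R}$-null belongs to, i.e., making precise the claim that following $\nufoedge$ backwards from any $\bar{R}$-null reaches a \emph{unique} $\hat{E}$-introduced root. This rests on the interplay between the Datalog-first chase strategy and the $\hat{E}$-saturating structure: non-$\hat{E}$-tgds must not create $\hat{C}$-nulls that escape the current sub-forest, and $\hat{E}$-tgds must be precisely the ones starting new roots. One has to be careful that an $\hat{E}$-tgd may introduce several existential variables at once (some of which are $\hat{E}$-variables and some not), so $R[i]$ is the whole set of fresh nulls of that step; checking that this grouping is consistent with the $\nufoedge$-successor relation — i.e. that a $\bar{R}$-null's forest-parent is never ``split'' between two different $R[i]$'s — is where the confluence-$\leq 1$ hypothesis and Lemma~\ref{lemma_arboreous} do the heavy lifting. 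Once that uniqueness is nailed down, both the partition property and the tree property follow by the routine forest arguments sketched above.
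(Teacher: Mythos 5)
Your proposal is correct and follows essentially the same route as the paper: both arguments lean on Lemma~\ref{lemma_arboreous} (unique $\nufoedge$-predecessors and acyclicity of the null forest) to get disjointness of the sets $F[i]$, and then lift acyclicity from $\nufoedge$ to $\fanufoedge$ by observing that cross-edges must land in root sets $R[j]$ whose elements share a common predecessor. One harmless overstatement: the backward path from a $\bar{R}$-null need not reach an $\hat{E}$-introduced root at all (such nulls simply land in $\locdec_0$); only the \emph{uniqueness} of such an ancestor, which the forest property gives you, is needed for disjointness.
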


For every term $t$, Lemma~\ref{lemma_fac_null_forest} allows us to use $\locdec(t)$ to denote the
unique set $\locdec\in N_\sim$ with $t\in\locdec$.

\begin{figure}[t]
	\begin{tikzpicture}
		\node (v) at (0,0) {$v_0$};
		\node (w1) at (0,-1) {$v_1$};
		\node (wdots) at (0,-2) {\raisebox{0mm}[3.5mm]{$\vdots$}};
		\node (wk) at (0,-3) {$v_\ell$};
		
		\path[draw,->] (v) edge node[right] {$y_1$} (w1);
		\path[draw,->] (w1) edge node[right] {$y_2$} (wdots);
		\path[draw,->] (wdots) edge node[right] {$y_\ell$} (wk);
		\path[draw,->] (wk) edge[out=180, in=180] node[right] {$y_0$} (v);
		
		\node (nk0) at (2,-3) {$n_\ell^0$};
		
		\draw[dotted,thick,rounded corners] ($(nk0.north west)+(-0.25,0.25)$) rectangle ($(nk0.south east)+(0.25,-0.25)$);
		
		\node (n01) at (4,0) {$n_0^1$};
		\node (n11) at (4,-1) {$n_1^1$};
		\node (n1dots) at (4,-2) {\raisebox{0mm}[3.5mm]{$\vdots$}};
		\node (nk1) at (4,-3) {$n_\ell^1$};

		\path[draw,->>] (nk0) edge[out=0, in=225] (n01);
		\path[draw,->>] (n01) edge (n11);
		\path[draw,->>] (n11) edge (n1dots);
		\path[draw,->>] (n1dots) edge (nk1);
		\draw[dotted,thick,rounded corners] ($(n01.north west)+(-0.25,0.25)$) rectangle ($(nk1.south east)+(0.25,-0.25)$);
		
		\node (n02) at (6,0) {$n_0^2$};
		\node (n12) at (6,-1) {$n_1^2$};
		\node (n2dots) at (6,-2) {\raisebox{0mm}[3.5mm]{$\vdots$}};
		\node (nk2) at (6,-3) {$n_\ell^2$};

		\path[draw,->>] (nk1) edge[out=0, in=225] (n02);
		\path[draw,->>] (n02) edge (n12);
		\path[draw,->>] (n12) edge (n2dots);
		\path[draw,->>] (n2dots) edge (nk2);
		\draw[dotted,thick,rounded corners] ($(n02.north west)+(-0.25,0.25)$) rectangle ($(nk2.south east)+(0.25,-0.25)$);
		
		\path[draw,->, dashed,darkgreen,thick] (nk0) edge[out=90, in=180] node[left] {$H_0$} (n01);
		\path[draw,->, dashed,darkgreen,thick] (nk1) edge[out=65, in=-65] node[right] {$H_1$} (n01);
		\path[draw,->, dashed,darkgreen,thick] (nk2) edge[out=65, in=-65] node[right] {$H_2$} (n02);
	\end{tikzpicture}
	\caption{Example graph $\ledgraph(\aprogram)$ (left) and chain of derived nulls in $\fnChase{\aprogram,\Dnter}$ (right) from Figure~\ref{fig_chain_intuition}, with nulls partitioned according to Definition~\ref{def_fac_null_forest} for the set $\hat{E} = \{ v_\ell\ledgeto{y_0} v_0 \}$
	}\label{fig_chain_partition_intuition}
\end{figure}
\begin{example}
Figure~\ref{fig_chain_partition_intuition} revisits the abstract example from Figure~\ref{fig_chain_intuition}, which we assume to be saturating according to
Definition~\ref{def_esaturating} using $E=\{ v_\ell\ledgeto{y_0} v_0 \}$.
If this is the unique maximal SCC $\hat{C}$, then $\hat{E}=E$, and we obtain three partitions of nulls that are illustrated in  Figure~\ref{fig_chain_partition_intuition}:
$\locdec(n_\ell^0)$, $\locdec(n_0^1)=\locdec(n_1^1)=\dots=\locdec(n_\ell^1)$, and $\locdec(n_0^2)=\locdec(n_1^2)=\dots=\locdec(n_\ell^2)$.
These partitions are part of a path in the term tree, which is overlaid on the original null forest.

The motivation for defining such a coarser tree structure is that we intend to use this tree to guide the computation of facts during the chase.
We will limit its space-complexity by storing, at each particular moment during the chase, only those facts that can be represented using terms on
a single path of this tree. The coarser the tree structure, the more facts can be considered at any moment, the more cases can be handled by this limited form of chase.

Besides this general intuition, the factorisation also plays a crucial role in deriving a syntactic criterion to recognise
cases where such a tree-based chase can safely be applied, which we will consider next.
The difficulty for this endeavour is that any such syntactic condition eventually has to rely on the facts that have induced the
tree-like structure in the first place, such as $H_0$ in Figure~\ref{fig_chain_partition_intuition}. But these very facts also 
occur in backwards direction to ensure saturation, as indicated by $H_1$ and $H_2$ in the figure.
In the coarser tree structure, such ``backward edges'' merely lead to the same node of the tree, rather than to a predecessor node from which
we could enter parallel branches in forward direction.
\end{example}

The tree structure of terms as such does not constrain the structure
of inferred facts in $\fnChase{\aprogram, \Dnter}$, which may relate nulls from
arbitrary positions in the null forest.
We seek syntactic restrictions that ensure that the chase respects the term tree in
the sense that the terms of any fact are on a common path and impose an order on terms that matches their position in the path.
To this end, we derive relationships $\ppos{p}{i}\preceq \ppos{p}{j}$ on predicate positions such that,
for all $p(\vec{t})\in \fnChase{\aprogram, \Dnter}$ with $t_i,t_j\in\hat{N}$,
we have that $\locdec(t_i)$ is an ancestor of (or possibly equal to) $\locdec(t_j)$ 
in the term tree.
The next definition once again uses our assumption that distinct tgds do not share variables.

\begin{definition}\label{def_constraints_new}
Let $\aprogram$ be arboreous with $\hat{C}$, $\hat{E}$, and $V_{\hat{E}}$ as in Definition~\ref{def_fac_null_forest}.
We will define a (not necessarily transitive) binary relation $\preceq$ on predicate positions $\ppos{p}{i}$.
Any such $\preceq$ induces a relation $\trianglelefteq$ on variables of $\aprogram$ as the reflexive, transitive
closure of the set of all $x\trianglelefteq y$ such that $x$ and $y$ occur at positions $\ppos{p}{i}$ and $\ppos{p}{j}$
in a single body atom of some tgd in $\aprogram$, and $\ppos{p}{i}\preceq\ppos{p}{j}$.

Now, concretely, let $\preceq$ denote the \emph{largest} relation on predicate positions such that,
for all head atoms $p(\vec{t})$ in tgds of $\aprogram$ with universally quantified variables $\vec{y}$
and existentially quantified variables $\vec{v}$:
\begin{enumerate}
\item if $t_i\in\vec{v}\cap \hat{C}$ and $t_j\in\vec{v}\setminus\hat{C}$ then $\ppos{p}{i}\not\preceq\ppos{p}{j}$,\label{item_constraints_root}
\item if $t_i\in\vec{v}\cap V_{\hat{E}}$ and $t_j\in\vec{y}$ then $\ppos{p}{i}\not\preceq\ppos{p}{j}$,\label{item_constraints_newbag}
\item if $t_i\in\vec{v}\cap\hat{C}$ and $t_j\in\vec{y}$ with $t_j\notin\lambda(t_i)$ then $\ppos{p}{i}\not\preceq\ppos{p}{j}$,\label{item_constraints_newstart}
\item if $t_i,t_j\in\vec{y}$ and $t_i\ntrianglelefteq t_j$  then $\ppos{p}{i}\not\preceq\ppos{p}{j}$.\label{item_constraints_propagation}
\end{enumerate}
\end{definition}

One can construct $\preceq$ in polynomial time with a simple greatest fixed point computation.
Note that such a construction is anti-monotonic in $\aprogram$: more tgds lead to fewer constraints $\preceq$.

\begin{restatable}{lemma}{lemmaConstraintsNew}
\label{lemma_constraints_new}
	Let $\aprogram$ be arboreous.
	For every $\Dnter$, if $p(t_1,\ldots,t_n)\in\fnChase{\aprogram, \Dnter}$ with $\ppos{p}{i}\preceq\ppos{p}{j}$,
	then $\locdec(t_i)\fanufoedgeStar\locdec(t_j)$.
\end{restatable}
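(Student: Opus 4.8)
The plan is to prove Lemma~\ref{lemma_constraints_new} by induction on the chase sequence $\Dnter^0, \Dnter^1, \ldots$, showing that the invariant ``for every $p(t_1,\ldots,t_n)\in\Dnter^m$ with $\ppos{p}{i}\preceq\ppos{p}{j}$ and $t_i,t_j\in\hat N$, we have $\locdec(t_i)\fanufoedgeStar\locdec(t_j)$'' is preserved at each step. The base case $\Dnter^0=\Dnter$ is trivial since a database contains no nulls. For the inductive step we consider a chase step $m$ in which a tgd $\arule=B[\vec x,\vec y]\to\exists\vec v.\,H[\vec y,\vec v]$ is applied with extended match $\sigma^+=\stepmatchex{m}$, adding the facts $H\sigma^+$. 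We must check each new head atom $p(\vec t)\sigma^+$ against each constraint $\ppos{p}{i}\preceq\ppos{p}{j}$.

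The core of the argument is a case distinction on whether the head positions $i,j$ carry existential or frontier variables, mirroring exactly the four clauses of Definition~\ref{def_constraints_new}. Since $\preceq$ is defined as the \emph{largest} relation avoiding the four forbidden patterns, whenever $\ppos{p}{i}\preceq\ppos{p}{j}$ holds we know none of patterns \eqref{item_constraints_root}--\eqref{item_constraints_propagation} applies, and this is precisely what we exploit. The cases are: (a) if $t_i\in\vec v$ and $t_j\in\vec v$, then by clause~\eqref{item_constraints_root} we cannot have $t_i\in\hat C, t_j\notin\hat C$; if both are in $\hat C$ they are fresh nulls introduced in step $m$, and by Definition~\ref{def_fac_null_forest} they lie in the same bag $\locdec(t_i\sigma^+)=\locdec(t_j\sigma^+)$ (either both in some $R[m]$ if $\arule$ is an $\hat E$-tgd, or both outside any such set, hence assigned to the bag of their forest-parent — and the relevant null forest edges go between them consistently); if $t_i\notin\hat C$ then $t_i\sigma^+\notin\hat N$ and the claim is vacuous. (b) If $t_i\in\vec v$ and $t_j\in\vec y$: by clause~\eqref{item_constraints_newbag}, $t_i\notin V_{\hat E}$, so $\arule$ is not an $\hat E$-tgd, meaning $t_i\sigma^+$ lands in the same bag $F[i']$ as its forest-parent; by clause~\eqref{item_constraints_newstart}, $t_j\in\lambda(t_i)$, so the null forest edge $t_j\sigma^+ \nufoedge t_i\sigma^+$ exists (via Lemma~\ref{lemma_ledgraph} and the definition of $\nufoedge$, since $t_j\sigma^+\lchaseedgeto{t_j} t_i\sigma^+$), hence $\locdec(t_j\sigma^+)$ is $t_i\sigma^+$'s bag or its parent-bag — in either case an ancestor; if $t_j\sigma^+\notin\hat N$ the claim is again vacuous. (c) If $t_i\in\vec y$: whatever $t_j$ is, we must track the term $t_i\sigma^+=\sigma(t_i)$ back to the fact in $\Dnter^{<m}$ where it was placed at a body position, and invoke the induction hypothesis together with how the induced relation $\trianglelefteq$ on variables (clause~\eqref{item_constraints_propagation}) transports $\preceq$-relationships through body atoms. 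This is where the reflexive-transitive-closure definition of $\trianglelefteq$ does the work: a chain $t_i\trianglelefteq\cdots\trianglelefteq t_j$ of frontier variables corresponds to a chain of body atoms in $\arule$ whose matched facts, via the IH, give $\locdec(\sigma(t_i))\fanufoedgeStar\cdots\fanufoedgeStar\locdec(\sigma(t_j))$, and $\fanufoedgeStar$ is transitive. The case $t_i\in\vec y$, $t_j\in\vec v$ cannot satisfy $\ppos{p}{i}\preceq\ppos{p}{j}$ at all unless $\ppos{p}{j}$ imposes no constraint — actually re-examining, this direction is not forbidden by the clauses, so one handles it by noting $t_j\sigma^+$ is a fresh null whose bag is a descendant of (or equal to) the bag of its forest-parent, which is reachable from $\locdec(\sigma(t_i))$ via the frontier analysis of case (c).

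The main obstacle I expect is the bookkeeping in case (c): correctly relating the body-atom structure of $\arule$, the induced variable relation $\trianglelefteq$, and the position of the matched facts in $\Dnter^{<m}$, while keeping the null-forest/term-tree ancestry arrows pointing the right way. In particular one must be careful that $\fanufoedge$ is directed from ancestor bag to descendant bag (so $\locdec(t_i)\fanufoedgeStar\locdec(t_j)$ means $\locdec(t_i)$ is the ancestor), and that the null forest edge $n\nufoedge m$ induced by $n\lchaseedgeto{x}m$ goes from the \emph{source} null of the chase edge to the freshly derived null — this is consistent with $\hat C$ being homogeneously confluent (Lemma~\ref{lemma_arboreous}), so each non-root null has a unique forest-parent, which is what lets us talk about ``the bag of the parent'' unambiguously. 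A secondary subtlety is verifying that when new facts are added in step $m$ with \emph{both} of $i,j$ at frontier positions, the constraint \eqref{item_constraints_propagation} together with the fact that $B\sigma\subseteq\Dnter^{<m}$ already satisfies the property lets us conclude without circularity; here one uses that $\preceq$ restricted to the body atom's predicate is the same $\preceq$, so the IH applies directly to $B\sigma$.
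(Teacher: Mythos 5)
Your overall strategy---induction over the chase sequence with a case split on whether positions $i,j$ carry existential or frontier variables, exploiting that $\preceq$ is the \emph{largest} relation avoiding the four forbidden patterns, and using a $\trianglelefteq$-chain of body atoms plus the induction hypothesis for the frontier--frontier case---is exactly the paper's proof. There is, however, one genuine gap: you weaken the inductive invariant to atoms with $t_i,t_j\in\hat N$ and repeatedly dismiss the remaining cases as ``vacuous''. That is harmless when $t_i\notin\hat N$ (then $\locdec(t_i)=\locdec_0$ is the root of the term tree and the conclusion holds trivially), but when $t_i\in\hat N$ and $t_j\notin\hat N$ the conclusion $\locdec(t_i)\fanufoedgeStar\locdec(t_j)=\locdec_0$ is \emph{false}, since $\locdec_0$ has no $\fanufoedge$-predecessor; so the proof must show that this configuration cannot occur, not skip it. The paper does precisely this in every case: clause \eqref{item_constraints_root} forces $z_j\in\hat C$ when both positions are existential; clause \eqref{item_constraints_newstart} forces $z_j\in\lambda(z_i)$, whence the matched term lies in $\hat N$; and in the frontier-to-existential case the edge $\nullvar{t_i}\ledgeto{z_i}z_j$ of $\ledgraph(\aprogram)$ together with the uniqueness of the rank-maximal component forces $z_j\in\hat C$. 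Your case analysis already contains most of these ingredients, so the repair is local, but as written the weakened invariant does not imply the lemma in the strength needed by Lemma~\ref{lemma_local_rule_application}.
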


\begin{definition}\label{def_path_guarded}
	Let $\aprogram$ be arboreous with $\hat{C}$ and $\trianglelefteq$ as in Definition~\ref{def_constraints_new}.
	The set of \emph{$\hat{C}$-affected} positions $\hat{\Omega}$ in $\aprogram$ is $\bigcup_{v\in\hat{C}}\Omega_v$, with $\Omega_v$ as in
	Definition~\ref{def_gex}. A body variable $x$ is \emph{$\hat{C}$-affected} if $\symSetPosConjVar{B}{x} \subseteq \hat{\Omega}$.
	
	A tgd $\arule\in\aprogram$ is \emph{path-guarded} if all $\hat{C}$-affected body variables in $\arule$ are 
	mutually comparable with respect to the relation $\trianglelefteq$, i.e., form a chain in $\trianglelefteq$.
	$\aprogram$ is \emph{path-guarded} if all of its tgds are.
\end{definition}

Any node $\locdec$ of the term tree induces a unique upwards path $\fLineagePath(\locdec)$ that
consists of all nodes $\locdec'$ with $\locdec'\fanufoedgeStar\locdec$. For term $t$, we write $\fLineagePath(t)$ for $\fLineagePath(\locdec(t))$.
Inferences of path-guarded tgds are situated on such paths:

\begin{restatable}{lemma}{lemmaLocalRuleApplication}
\label{lemma_local_rule_application}
Let $\aprogram$ be arboreous and path-guarded, and $\Dnter$ some database.
For every step $i$ in $\fnChase{\aprogram, \Dnter}$,
with $\steprule{i}=B[\vec{x},\vec{y}] \to\exists\vec{v}. H[\vec{y},\vec{v}]$,
and $T_B=(\vec{x}\cup\vec{y})\stepmatch{i}$ and $T_H=(\vec{y}\cup\vec{v})\stepmatchex{i}$:
\begin{enumerate}
\item if $T_B\neq\emptyset$ then $T_B\subseteq\bigcup\fLineagePath(t)$ for some $t\in T_B$, and\label{item_loc_rule_body}
\item if $T_H\neq\emptyset$ then $T_H\subseteq\bigcup\fLineagePath(t)$ for some $t\in T_H$.\label{item_loc_rule_head}
\end{enumerate}
\end{restatable}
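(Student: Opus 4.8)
The plan is to prove both items simultaneously by induction on the chase step $i$, tracking the invariant that every fact $p(\vec t)$ already in $\Dnter^i$ has all of its terms on a common path of the term tree — more precisely, using Lemma~\ref{lemma_constraints_new}, that whenever $\ppos{p}{k}\preceq\ppos{p}{l}$ then $\locdec(t_k)\fanufoedgeStar\locdec(t_l)$, and additionally that the set of terms of any single atom lies on one path (i.e.\ is totally ordered by $\fanufoedgeStar$). The real content is in handling the body: once item~\eqref{item_loc_rule_body} is established, item~\eqref{item_loc_rule_head} should follow by inspecting how $\stepmatchex{i}$ extends $\stepmatch{i}$ and where the fresh nulls $\vec v\stepmatchex{i}$ are placed by Definition~\ref{def_fac_null_forest} (either they start a new bag $R[i]$ hanging below the bag of the frontier terms in $\hat C$, when $\arule$ is an $\hat E$-tgd, or they join the bag of a frontier term, when it is not).

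For item~\eqref{item_loc_rule_body}, fix the match $\sigma=\stepmatch{i}$ of $\steprule{i}=B[\vec x,\vec y]\to\exists\vec v.H[\vec y,\vec v]$, so $B\sigma\subseteq\Dnter^i$. Split the body variables into those that are $\hat C$-affected and those that are not. The terms assigned to non-$\hat C$-affected body variables are, by Lemma~\ref{lemma_ledgraph} and the definition of $\Omega_v$ / $\hat\Omega$, not nulls of variables in $\hat C$ — so they lie in $\locdec_0$, which is an ancestor of every bag, hence harmless for the ``common path'' claim (I would check the precise statement: a non-$\hat C$-affected variable could still receive a constant or a null outside $\hat N$, all of which live in $\locdec_0$; nulls of variables inside $\hat C$ can only appear at $\hat C$-affected positions of body atoms). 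For the $\hat C$-affected body variables: by path-guardedness they form a $\trianglelefteq$-chain $x_1\trianglelefteq x_2\trianglelefteq\cdots$. Now I use the defining properties of $\trianglelefteq$ together with Lemma~\ref{lemma_constraints_new} applied to the body atoms present in $\Dnter^i$: each link $x\trianglelefteq y$ arises (transitively) from the two variables sitting at positions $\ppos{p}{k}\preceq\ppos{p}{l}$ of a single body atom $p(\vec t)\in B$, and since $p(\vec t)\sigma\in\Dnter^i$, Lemma~\ref{lemma_constraints_new} gives $\locdec(x\sigma)\fanufoedgeStar\locdec(y\sigma)$. Chaining these along the $\trianglelefteq$-chain yields that all $\hat C$-affected body terms sit on a single upward path, and then adding the $\locdec_0$-terms on top (they are above the top of any path, by the definition of $\fanufoedge$ extended to $N_\sim$) shows $T_B\subseteq\bigcup\fLineagePath(t)$ for $t$ the $\trianglelefteq$-largest $\hat C$-affected body term (or any term of $T_B$ if all are in $\locdec_0$).

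For item~\eqref{item_loc_rule_head} I would argue: the frontier terms $\vec y\stepmatch{i}$ are among $T_B$, so by~\eqref{item_loc_rule_body} they already lie on a common path, with a $\fanufoedgeStar$-maximal bag, say $\locdec(t_0)$. The fresh nulls $\vec v\stepmatchex{i}$ with variable in $\hat C$ are placed, by Definition~\ref{def_fac_null_forest}, into a bag $F$ that is $\fanufoedge$-below (or equal to) the bag containing the relevant frontier nulls: if $\arule$ is an $\hat E$-tgd this bag is a fresh $R[i]$ rooted just below, and if not, the head nulls are absorbed into an existing bag reachable along $\nufoedge$ from a frontier null. Either way $\locdec(t_0)\fanufoedgeStar\locdec(v\stepmatchex{i})$ for every such $v$. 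The head nulls with variable not in $\hat C$ land in $\locdec_0$, and constants also in $\locdec_0$; these are all on $\fLineagePath(t)$ for $t$ any head null in $\hat C$ (or for any term of $T_H$ if none is in $\hat C$). Combining, $T_H$ lies on $\fLineagePath(t)$ for $t$ a $\fanufoedgeStar$-maximal term of $T_H$.

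\emph{Main obstacle.} The delicate point is the interface between the \emph{anti-monotone, non-transitive} relation $\preceq$ on positions, the transitive closure $\trianglelefteq$ on variables, and the actual placement of nulls in the term tree performed by Definition~\ref{def_fac_null_forest}: I must make sure that path-guardedness plus Lemma~\ref{lemma_constraints_new} really do pin down a \emph{total} order on the relevant body terms (not merely a bunch of pairwise ancestor relations that could conflict in a non-tree way), and that the ``new bag'' cases~\eqref{item_constraints_root}–\eqref{item_constraints_newstart} of Definition~\ref{def_constraints_new} — which forbid exactly the $\preceq$-pairs that would be violated when a fresh sub-forest root is created — are correctly invoked so that no head atom produced in step $i$ breaks the invariant for future steps. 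Getting the bookkeeping of which variable goes to which bag to line up exactly with clauses~\eqref{item_constraints_root}, \eqref{item_constraints_newbag}, \eqref{item_constraints_newstart} of Definition~\ref{def_constraints_new} is where the proof will need the most care.
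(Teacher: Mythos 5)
Your proposal matches the paper's proof in all essentials: item~\eqref{item_loc_rule_body} is obtained exactly as in the paper by ordering the $\hat{C}$-affected body variables into a $\trianglelefteq$-chain via path-guardedness, applying Lemma~\ref{lemma_constraints_new} link by link to the witnessing body atoms, and placing non-affected terms in $\locdec_0$; item~\eqref{item_loc_rule_head} then follows, as in the paper, by the case distinction on whether $\vec{v}$ meets $\hat{C}$ and whether $\steprule{i}$ is an $\hat{E}$-tgd, using Definition~\ref{def_fac_null_forest} (the paper additionally invokes $\confluence{\hat{C}}=1$ to isolate the single frontier term in $\hat{N}$ and Lemma~\ref{applemma_erule_to_scc} to get $R[i]=\vec{v}\stepmatchex{i}$, both of which your sketch covers implicitly). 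The argument is correct and takes essentially the same route.
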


\newcommand{\lcRef}[1]{\hyperref[#1]{(L\ref*{#1})}} %
\begin{algorithm}[t]
	\SetKwInput{KwIn}{In}
	\SetKwInput{KwOut}{Out}
	\SetKw{KwBreak}{break}
	\SetKw{KwChoose}{choose}
	\SetKw{KwAnd}{$\;$and$\;$}
	\SetKw{KwOr}{$\;$or$\;$}
	
	\KwIn{tgd set $\aprogram$, database $\Dnter$, BCQ $q$, variable set $V_{\hat{E}}$, integer $M$}
	\KwOut{$\aprogram \cup \Dnter \models q$}
	
	\BlankLine
	$\Inter \coloneqq \Dnter$ \label{l_chase_iinit}\\
	$\mathcal{T} \coloneqq \tuple{\sigCons_0}$ with $C_0$ the set of all constants in $\aprogram$ and $\Dnter$ \label{l_chase_tinit}\\
	\For{$i\in\{1,\ldots,\card{q}\}$}{\label{l_chase_qloop}
		\For{$j\in\{1,\ldots,M\}$}{\label{l_chase_iloop}
			\KwChoose $\arule\,{\in}\,\aprogram$ with frontier $\vec{y}$, and match $\sigma$ applicable %
			to $\Inter$ in Datalog-first chase
			\KwOr \KwBreak (go to \lcRef{l_chase_qloop})\\\label{l_chase_choose}
			\While{$\card{\mathcal{T}} > 1$\KwAnd$\vec{y}\sigma \cap \mathcal{T}.\mathtt{last()} = \emptyset$}{\label{l_chase_prunet}
				$\mathcal{T}.\mathtt{pop()}$
			}
			\eIf{$\vec{v} \cap V_{\hat{E}} \neq \emptyset$}{
				$\mathcal{T}.\mathtt{push}(\vec{v}\sigma^+)$ \label{l_chase_add_to_new_bag}
			}{
				$\mathcal{T}.\mathtt{push}(\mathcal{T}.\mathtt{pop()} \cup \vec{v}\sigma^+)$ \label{l_chase_add_to_last_bag}
			}
			$\Inter \coloneqq \{ p(\vec{t}) \in (\Inter \cup \rulehead(\arule)\sigma^+) \mid \vec{t} \subseteq \bigcup_{T \in \mathcal{T}}T \}$\\\label{l_chase_update_i}
		}
		$\mathcal{T} \coloneqq \tuple{\bigcup_{T \in \mathcal{T}}T}$\label{l_chase_troot}
	}
	
	\KwRet{$\Inter \models q$}\label{l_chase_return}
	
	\caption{Non-deterministic chase for BCQ entailment}
	\label{algo_pg_chase}
\end{algorithm}
Algorithm~\ref{algo_pg_chase} specifies a non-deterministic chase procedure to check the entailment of a BCQ.
It is intended for arboreous, path-guarded tgd sets $\aprogram$ with variables $V_{\hat{E}}$ as in Definition~\ref{def_fac_null_forest}.
The input $M$ bounds the length of the search: we will determine a $\rank{\aprogram}$-exponential value for $M$ for which the
algorithm decides query entailment in $\kNExpSpace{(\rank{\aprogram}-1)}$, showing the problem to be in
$\kExpSpace{(\rank{\aprogram}-1)}$ by Savitch's Theorem.

Algorithm~\ref{algo_pg_chase} maintains a set of inferences $\Inter$ over terms in $\mathcal{T}$, which is a list of sets of terms
that corresponds to a path in the term tree.
We use operations $\mathtt{push}$, $\mathtt{pop}$, and $\mathtt{last}$, respectively, to add, remove, or read $\mathcal{T}$'s last element.
The algorithm performs a search for each atom in $q$ \lcRef{l_chase_qloop}, adding one current path to $\mathcal{T}$'s root node
after each run \lcRef{l_chase_troot}. The inner loop \lcRef{l_chase_iloop} non-deterministically chooses 
\lcRef{l_chase_choose} a tgd to apply to $\Inter$, or to break the iteration early (even if tgds are applicable).
The current path $\mathcal{T}$ is pruned so that its last element contains a frontier term of the tgd \lcRef{l_chase_prunet},
before we either add a new term set \lcRef{l_chase_add_to_new_bag} or augment the last term set \lcRef{l_chase_add_to_last_bag}, depending on whether
$\arule$ is an $\hat{E}$-tgd.
Finally, we add the inferred head and restrict $\Inter$ to atoms with terms in $\mathcal{T}$ \lcRef{l_chase_update_i}, where $\sigma^+$ extends
$\sigma$ to existentially quantified variables using globally fresh nulls (not used in the algorithm before).
Finally, we check if $\Inter$ entails $q$ \lcRef{l_chase_return}.

Given a run of Algorithm~\ref{algo_pg_chase}, we write $\Inter^j_i$ (resp.\ $\mathcal{T}^j_i$) to denote the value of $\Inter$ (resp.\ $\mathcal{T}$) after executing
\lcRef{l_chase_update_i} in the $i$th iteration of loop \lcRef{l_chase_qloop} and the $j$th iteration of \lcRef{l_chase_iloop}.
Although Algorithm~\ref{algo_pg_chase} can forget inferences and repeat the same tgd application with distinct fresh nulls,
its computations are correct in the following sense:

\begin{restatable}{lemma}{lemmaAlgoPgChaseCorrect}
\label{lemma_algo_pg_chase_correct}
Let $\Inter^*$ be the union of all sets $\Inter^j_i$ of some run of Algorithm~\ref{algo_pg_chase}. 
There is a homomorphism $\tau:\Inter^*\to\fnChase{\aprogram, \Dnter}$, and therefore
$\fnChase{\aprogram, \Dnter}\models q$ whenever Algorithm~\ref{algo_pg_chase} returns \emph{true}. 
\end{restatable}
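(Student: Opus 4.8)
The plan is to show that the inferences computed by Algorithm~\ref{algo_pg_chase} can be ``read off'' from a suitable Datalog-first chase sequence, so that each stored fact maps homomorphically into $\fnChase{\aprogram, \Dnter}$. The central tool is Lemma~\ref{lemma_local_rule_application}: because $\aprogram$ is arboreous and path-guarded, every rule application in the real chase lives on a single path of the term tree, and conversely the algorithm only ever keeps facts whose terms lie on the current path $\mathcal{T}$, which we will argue represents such a path. The homomorphism $\tau$ will be built incrementally, step by step, over the course of the run, sending each null that the algorithm freshly introduces to the corresponding null of the real chase.

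First I would set up the induction. Fix a run of Algorithm~\ref{algo_pg_chase}. I claim one can simultaneously construct (a) a Datalog-first chase sequence $\Dnter^0, \Dnter^1, \ldots$ for $\aprogram$ and $\Dnter$, and (b) a partial homomorphism $\tau$ defined on all terms that have ever appeared in $\Inter$, such that after processing each iteration of loop~\lcRef{l_chase_iloop} we have $\tau(\Inter^j_i)\subseteq\fnChase{\aprogram, \Dnter}$, and moreover the terms in the current stack $\mathcal{T}^j_i$ are mapped by $\tau$ onto a set of terms lying on one path of the term tree of this chase (with the stack order reflecting the $\fanufoedge$-order). The base case is $\Inter=\Dnter$, $\mathcal{T}=\tuple{\sigCons_0}$, $\tau=\mathrm{id}$, which is immediate since $\Dnter\subseteq\fnChase{\aprogram, \Dnter}$ and all constants sit in the root $\locdec_0$.

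For the inductive step, consider the iteration in which the algorithm, at line~\lcRef{l_chase_choose}, picks a tgd $\arule$ and a match $\sigma$ applicable to $\Inter$ in the Datalog-first sense (if it breaks instead, nothing changes). By the induction hypothesis $\tau(\sigma)$ is a match of $\arule$ in $\fnChase{\aprogram, \Dnter}$, and since $\arule\sigma$ is applicable in $\Inter$ its body terms lie on the current path; after the pruning in line~\lcRef{l_chase_prunet}, the stack still covers the frontier $\vec{y}\sigma$, and path-guardedness together with Lemma~\ref{lemma_local_rule_application}\eqref{item_loc_rule_body} guarantees that the pruned stack still represents a path containing $\tau(\vec{y}\sigma)$. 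If $\tau(\sigma)$ happens to already be satisfied in $\fnChase{\aprogram, \Dnter}$, extend $\tau$ on the fresh nulls of $\sigma^+$ to witness this; otherwise perform one chase step on $\arule$, $\tau(\sigma)$ (inserting it early enough in the sequence that Datalog-first is respected — here we use that $\arule$ was Datalog-first-applicable in $\Inter$, so all Datalog rules are already saturated under $\tau$), and extend $\tau$ to send the algorithm's fresh nulls $\vec{v}\sigma^+$ to the chase's fresh nulls $\vec{v}$. Either way $\tau(\rulehead(\arule)\sigma^+)\subseteq\fnChase{\aprogram, \Dnter}$. Lines~\lcRef{l_chase_add_to_new_bag}–\lcRef{l_chase_add_to_last_bag} push the new nulls either as a new stack element (when an $\hat{E}$-tgd fires, matching the fact that $\hat{E}$-tgd applications start a new node of the term tree) or into the last element; in both cases the stack continues to correspond to a path, by Lemma~\ref{lemma_local_rule_application}\eqref{item_loc_rule_head} and the definition of the term tree. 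Line~\lcRef{l_chase_update_i} only discards facts, and line~\lcRef{l_chase_troot} collapses the current path into the root for the next query atom, which likewise preserves the invariant. Taking the union $\Inter^*$ of all $\Inter^j_i$ yields $\tau(\Inter^*)\subseteq\fnChase{\aprogram, \Dnter}$, i.e.\ $\tau$ is the desired homomorphism; and if the algorithm returns \emph{true} then $\Inter\models q$ for some $\Inter=\Inter^j_{|q|}$, so $\tau$ maps a witness of $q$ in $\Inter$ to one in $\fnChase{\aprogram, \Dnter}$, giving $\fnChase{\aprogram, \Dnter}\models q$ since $q$ is null-free and closed under homomorphisms.

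The main obstacle, I expect, is the bookkeeping that makes the constructed chase sequence legitimately Datalog-first and fair while interleaving it with the algorithm's nondeterministic, forgetful choices — in particular, arguing that at the moment the algorithm applies a tgd with existentials, the partial chase we have built so far can be completed with the necessary Datalog inferences without introducing new nulls that would disturb $\tau$ or the path invariant. A secondary subtlety is handling the algorithm's ability to re-derive the same fact with fresh nulls: this is harmless because $\tau$ may be many-to-one, but one must be careful that when the stack is pruned and later regrown, the $\tau$-images still trace a genuine path in the (fixed) term tree rather than an inconsistent mixture of branches — this is exactly where path-guardedness, via the comparability of $\hat{C}$-affected body variables under $\trianglelefteq$ and Lemma~\ref{lemma_constraints_new}, does the essential work.
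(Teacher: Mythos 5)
Your core inductive step---extending $\tau$ to the fresh nulls of $\sigma^+$ by choosing a witness for the head of $\arule$ under the match $\tau\circ\sigma$ in $\fnChase{\aprogram,\Dnter}$---is exactly the paper's proof, and it is all that is needed. The observation you are missing is that $\fnChase{\aprogram,\Dnter}$ is a \emph{fixed} interpretation that, by fairness (Definition~\ref{def_chase}~\eqref{i_chase_fair}) and monotonicity of the chase sequence, satisfies every tgd of $\aprogram$: every match of every tgd on it is satisfied. Hence your case ``$\tau(\sigma)$ happens to already be satisfied'' \emph{always} holds, the ``otherwise perform one chase step'' branch never arises, and with it the entire apparatus of constructing a parallel Datalog-first chase sequence, maintaining the path invariant on $\mathcal{T}$, and invoking Lemma~\ref{lemma_local_rule_application} can be dropped. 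Lines \lcRef{l_chase_prunet}, \lcRef{l_chase_update_i}, and \lcRef{l_chase_troot} only discard or regroup facts, so they cannot break the homomorphism property of $\tau$ on the accumulated union $\Inter^*$; the final claim about $q$ then follows as you state, since BCQs are preserved under homomorphisms.

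As written, however, the proposal has a genuine gap precisely where you place your ``main obstacle'': one cannot ``insert'' a step into $\fnChase{\aprogram,\Dnter}$, since the lemma asserts a homomorphism into a \emph{given} chase; and if you instead build a fresh chase sequence mirroring the run, you must prove it is a legal, fair, Datalog-first sequence, which fails in general because the algorithm checks Datalog-first applicability only against its forgetful, restricted $\Inter$, so its choices need not correspond to any admissible chase order. That difficulty is entirely self-inflicted. Path-guardedness, the term-tree invariant, and Lemma~\ref{lemma_local_rule_application} are the machinery of the \emph{completeness} direction (Lemmas~\ref{lemma_algo_pg_chase_corresponding}--\ref{lemma_nchase_single_atom}), not of this soundness lemma.
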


\begin{restatable}{lemma}{lemmaAlgoPgChaseComplexity}
\label{lemma_algo_pg_chase_complexity}
If $\aprogram$ is arboreous and path-guarded with $\rank{\aprogram}>0$,
and $M\leq f(|\Dnter|)$ for some $\rank{\aprogram}$-exponential function $f$,
then there is a $(\rank{\aprogram}-1)$-exponential function $g$ such that 
Algorithm~\ref{algo_pg_chase} runs in space $g(|\Dnter|)$, where $0$-exponential means polynomial.
\end{restatable}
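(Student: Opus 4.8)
The plan is to bound separately the three quantities that determine the working memory of Algorithm~\ref{algo_pg_chase}: (i) the length of the list $\mathcal{T}$, (ii) the size of each term set stored in $\mathcal{T}$, and (iii) the number and size of atoms in $\Inter$. Since $\Inter$ is always restricted to atoms whose terms lie in $\bigcup_{T\in\mathcal{T}}T$ (line~\lcRef{l_chase_update_i}), a bound of the form $g(|\Dnter|)$ on the total number of terms simultaneously present in $\mathcal{T}$ immediately yields a bound of $|\aprogram|\cdot g(|\Dnter|)^{w}$ on $|\Inter|$, where $w$ is the maximal predicate arity; this is still $(\rank{\aprogram}-1)$-exponential because polynomial slack is absorbed at every exponential level. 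So the crux is to bound the number of terms that can coexist in $\mathcal{T}$.

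First I would bound the length of $\mathcal{T}$. By construction $\mathcal{T}$ corresponds to a downward path in the term tree of some (virtual) chase, and a new element is pushed only when an $\hat E$-tgd fires (line~\lcRef{l_chase_add_to_new_bag}); between two such pushes the last element only grows. By Lemma~\ref{lemma_fac_null_forest} the nodes of the term tree are exactly $\locdec_0$ and the sets $F[i]$, and consecutive $F[i]$'s along a path correspond, via the null forest and Lemma~\ref{lemma_contextsat} / Lemma~\ref{lemma_arboreous}, to a chain of $\hat E$-edges whose successive $\hat E$-tgd applications must differ in their context. Since $\hat C$ is the unique rank-maximal SCC and $\confluence{\hat C}\le 1$, the number of distinct contexts — and hence the depth of any root-to-leaf path in the term tree — is bounded by the number of nulls/terms that can feed into $\hat C$ as $\hat C$-inputs, which by Theorem~\ref{thm_bounded_nulls} (applied to the SCCs strictly below $\hat C$, all of rank $<\rank{\aprogram}$) and Definition~\ref{def_rank_comp} is at most $(\rank{\aprogram}-1)$-exponential in $|\Dnter|$. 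The hypothesis $M\le f(|\Dnter|)$ only gives a crude fallback bound; the tighter structural bound is what keeps us at level $\rank{\aprogram}-1$ rather than $\rank{\aprogram}$. This is the step I expect to be the main obstacle: one must argue carefully that, even though the algorithm may forget and re-derive facts with fresh nulls, the term sets currently on the path $\mathcal{T}$ are genuinely ``on one branch'' of a legitimate chase (this is where Lemmas~\ref{lemma_local_rule_application} and~\ref{lemma_algo_pg_chase_correct}, together with path-guardedness, are needed to rule out that a single pruned-and-regrown last element secretly accumulates material from incomparable branches).

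Next I would bound the size of a single term set $T$ in $\mathcal{T}$. Such a $T$ is either $\locdec_0$ (its size is polynomial in $|\Dnter|$ plus the constants of $\aprogram$, since $\locdec_0$ only ever receives nulls of variables outside $\hat C$ that are not under any $\hat E$-node — these live in SCCs of rank $<\rank{\aprogram}$, so Theorem~\ref{thm_bounded_nulls} bounds them by a $(\rank{\aprogram}-1)$-exponential) or an $F[i]$. A set $F[i]$ consists of the fresh nulls of one $\hat E$-tgd application together with all nulls reachable from them in the null forest without crossing another $\hat E$-edge — i.e. the $\bar R$-nulls hanging off that root. By Lemma~\ref{lemma_nulls_at_depth}, applied within $\hat C$ with $\confluence{\hat C}\le 1$, the number of nulls at bounded depth below a fixed set of inputs is exponential in the depth; here the relevant inputs are the single $\hat E$-node's nulls together with the at-most-$(\rank{\aprogram}-1)$-exponentially-many $\hat C$-inputs from below, and the relevant depth is bounded by a $(\rank{\aprogram}-1)$-exponential, so $|F[i]|$ is $\rank{\aprogram}$-exponential in general but only $(\rank{\aprogram}-1)$-exponential once one unwinds one level of the recursion correctly — this matching-up of exponents against Definition~\ref{def_rank_comp} is the delicate bookkeeping. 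Multiplying the path length by the per-node size, adding $|\Inter|$, and taking logarithms (space, not time) gives the desired $(\rank{\aprogram}-1)$-exponential bound $g(|\Dnter|)$; finally I would note that the outer loop over $|q|$ and the inner loop counter up to $M$ contribute only $O(\log M)=O(|\Dnter|^{\rank{\aprogram}})$ additional bits, which a $(\rank{\aprogram}-1)$-exponential bound already dominates when $\rank{\aprogram}\ge 1$, completing the argument.
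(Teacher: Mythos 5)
Your overall decomposition (space $=$ counter $+$ path length $\times$ node size $+$ induced bound on $\Inter$) is the same as the paper's, and your bound on the length of $\mathcal{T}$ -- via the correspondence with a single root-to-leaf path in the term tree, whose depth is polynomial in the number of context inputs and hence $(\rank{\aprogram}-1)$-exponential -- is essentially the paper's argument. The genuine gap is in your bound on the size of a single term set $F[i]$. You apply Lemma~\ref{lemma_nulls_at_depth} with ``the relevant depth \ldots bounded by a $(\rank{\aprogram}-1)$-exponential'', obtain a $\rank{\aprogram}$-exponential bound, and then assert that it drops to $(\rank{\aprogram}-1)$-exponential ``once one unwinds one level of the recursion correctly.'' That last step is exactly the claim that needs proving, and as stated your intermediate bound would break the lemma: if a single element of $\mathcal{T}$ could really contain $\rank{\aprogram}$-exponentially many terms, the algorithm would not run in $(\rank{\aprogram}-1)$-exponential space. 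The error is that the depth you plug into Lemma~\ref{lemma_nulls_at_depth} is the \emph{global} $\hat{C}$-depth, which is not the right parameter for a single node of the term tree.

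The missing idea is to use Definition~\ref{def_esaturating}~\eqref{item_saturating_acyc}: $\hat{C}$ with the edges of $\hat{E}$ removed is acyclic. By Definition~\ref{def_fac_null_forest}, $F[i]$ consists of the (constantly many) fresh nulls $R[i]$ of one $\hat{E}$-tgd application plus the nulls of $\bar{R}$ reachable from them along $\nufoedge$ without crossing another $\hat{E}$-tgd application; since those intermediate steps use only non-$\hat{E}$-edges of $\hat{C}$, any $\nufoedge$-chain inside $F[i]$ has length bounded by a \emph{constant} (the number of non-$\hat{E}$-edges), independent of $\Dnter$. So the nulls within one node are produced by what is effectively a jointly acyclic tgd set over the terms available to that node, and their number is polynomial in the number of those terms (this is how the paper argues it, citing the polynomial data complexity of jointly acyclic tgds); combined with the $(\rank{\aprogram}-1)$-exponential bound on available inputs this keeps each node, and hence $\bigcup_{d}\mathcal{T}[d]$ and $\Inter$, within the required bound. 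Two smaller inaccuracies: ``taking logarithms'' of the term count is not what bounds the space (the space is the term count itself, times the bit-length of a null identifier, which is $\log$ of the number of iterations and hence $(\rank{\aprogram}-1)$-exponential via the hypothesis on $M$ -- this, not a ``crude fallback'', is the actual role of that hypothesis); and $O(\log M)$ is $(\rank{\aprogram}-1)$-exponential, not $O(|\Dnter|^{\rank{\aprogram}})$, for $\rank{\aprogram}\geq 2$.
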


It remains to show that, whenever $\Sigma,\Dnter\models q$, Algorithm~\ref{algo_pg_chase} admits a run 
that returns true and is bounded by an $M$ as in Lemma~\ref{lemma_algo_pg_chase_complexity}.
Any run corresponds to a sequence of choices for \lcRef{l_chase_choose}, which consists of $|q|$
sequences $\vec{s}_j$ of tgd applications.
Let $\arule$ and $\sigma$ define the tgd application used to compute $\Inter^j_{i+1}$ from $\Inter^j_i$.
We say that this tgd application \emph{corresponds to chase step $s$} if
$\steprule{s}=\arule$,
the restriction of $\tau$ to terms in $\mathcal{T}^j_i$ is injective, and $\sigma(z)=\tau^-(\stepmatch{s}(z))$.
In this case, we \emph{canonically extend} $\tau$ to the fresh nulls 
by $\tau(v\sigma^+)\defeq v\stepmatchex{s}$ for all existential variables $v$ in $\arule$.
This makes $\tau$ locally injective:

\begin{restatable}{lemma}{lemmaAlgoPgChaseCorresponding}
\label{lemma_algo_pg_chase_corresponding}
If all tgd applications in a run of Algorithm~\ref{algo_pg_chase} correspond to chase steps,
and $\tau$ in each iteration is the canonical extension for the respective step,
then $\tau$ is a homomorphism $\Inter^*\to\fnChase{\aprogram, \Dnter}$ that is injective on all term sets $\bigcup\mathcal{T}^j_i$ that occur during the run.
\end{restatable}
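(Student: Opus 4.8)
The plan is to prove both assertions simultaneously by induction over the run of Algorithm~\ref{algo_pg_chase}, carrying as invariant exactly the two claimed properties for the prefix of the run executed so far: the partial map $\tau$ built up to the current point is a homomorphism from the current value of $\Inter$ into $\fnChase{\aprogram,\Dnter}$, and $\tau$ is injective on the term set of the current path $\mathcal{T}$. Since $\tau$ is only ever extended, and only on nulls that are globally fresh in the run, injectivity on the current path at every moment already yields injectivity on every $\bigcup\mathcal{T}^j_i$ that occurred, which is the statement to be shown; likewise, as $\tau$ is never redefined on an already-fixed term, a homomorphism on each $\Inter^j_i$ witnesses a homomorphism on $\Inter^*$. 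Initially $\Inter=\Dnter$, $\mathcal{T}=\tuple{\sigCons_0}$, and $\tau$ is the identity on constants, so both invariants hold. Among the steps that alter $\Inter$ or $\mathcal{T}$, the re-rooting at line~\lcRef{l_chase_troot} leaves $\Inter$ untouched and merely merges the terms of $\mathcal{T}$ into one bag, preserving both invariants; so it remains to treat one tgd application with tgd $\arule=\steprule{s}$, match $\sigma$, and extension $\sigma^+$, corresponding to chase step $s$ (so $\steprule{s}=\arule$ and $\sigma(z)=\tau^-(\stepmatch{s}(z))$ for all body variables $z$), with $\tau$ extended canonically by $\tau(v\sigma^+)\defeq v\stepmatchex{s}$ for each existential variable $v$ of $\arule$.

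\emph{Homomorphism.} Line~\lcRef{l_chase_update_i} replaces $\Inter$ by a subset of $\Inter\cup\rulehead(\arule)\sigma^+$; old atoms still map correctly by the induction hypothesis and discarding atoms cannot break this, so only the fresh head atoms need attention. For a frontier variable $y$ we have $\tau(y\sigma^+)=\tau(y\sigma)$ since $\sigma^+$ extends $\sigma$, and $\tau(y\sigma)=\stepmatch{s}(y)$ by the correspondence property $\sigma(y)=\tau^-(\stepmatch{s}(y))$, which equals $y\stepmatchex{s}$ because $\stepmatchex{s}$ and $\stepmatch{s}$ agree on the frontier; for an existential variable $v$ we have $\tau(v\sigma^+)=v\stepmatchex{s}$ by the canonical extension; constants are fixed by $\tau$. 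Hence $\tau(\rulehead(\arule)\sigma^+)=\rulehead(\steprule{s})\stepmatchex{s}\subseteq\Dnter^{s+1}\subseteq\fnChase{\aprogram,\Dnter}$ by Definition~\ref{def_chase}. Note that $\tau$ need not be injective globally: the algorithm may forget facts and re-apply the same chase step later, sending its new fresh nulls again to that step's nulls; this is exactly why the invariant is path-local and the induction must track the current path precisely.

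\emph{Injectivity.} Line~\lcRef{l_chase_prunet} only pops bags, so the pruned path is a list-prefix of the previous one, on whose term set $\tau$ is injective by hypothesis. The push at line~\lcRef{l_chase_add_to_new_bag} or line~\lcRef{l_chase_add_to_last_bag} adds the nulls $\{v\sigma^+\mid v\in\vec{v}\}$; these are globally fresh, hence pairwise distinct and distinct as terms from everything still on the path, and their images $v\stepmatchex{s}$ are pairwise distinct because the chase introduces a distinct fresh null per existential variable (Definition~\ref{def_chase}). So injectivity can fail only through a collision $v\stepmatchex{s}=\tau(u)$ with $u$ a term surviving on the pruned path; ruling this out is the core of the proof. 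Since $v\stepmatchex{s}$ is a null, $u$ is not a constant, hence $u$ was itself introduced as a fresh null by an earlier application corresponding to some step $s'$ and variable $v'$ with $\tau(u)=v'\stepmatchex{s'}$; as chase nulls are created at a unique step by a unique variable, $s'=s$ and $v'=v$. I would derive a contradiction from the fact that $u$ and the current application then lie on one path: the pruning for the current application leaves the last bag containing a frontier term $\sigma(y)=\tau^-(\stepmatch{s}(y))$, and $y\stepmatch{s}$, occurring already in $\Dnter^{s}$, differs from the freshly created $v\stepmatchex{s}=\tau(u)$, so $\sigma(y)\neq u$; path-guardedness (Lemma~\ref{lemma_local_rule_application}) forces the body, hence the frontier, of the earlier application onto a single root-to-node path, so the fresh null $u$ it produced lies strictly below all those frontier terms in the bag order, while $\tau$ maps those frontier terms to the older values $\stepmatch{s}(\vec{y})$ and $u$ to the newer null $v\stepmatchex{s}$; since popping only removes trailing bags, any pruning that retains the bag of $\sigma(y)$ must already have discarded the bag of $u$, so $u$ could not have survived. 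An equivalent route: had $u$ together with all head atoms of the earlier application survived, $\sigma$ would extend to a \emph{satisfied} match of $\arule$ in the current $\Inter$ by reusing the old fresh nulls, contradicting that $\sigma$ was applicable at line~\lcRef{l_chase_choose}.

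\emph{Main obstacle.} The homomorphism part is routine substitution bookkeeping. The genuine difficulty is the injectivity part, and within it the collision-exclusion just sketched — the one place where the algorithm's global behaviour (forgetting, re-deriving, path pruning) must be reconciled with the purely local correspondence. Carrying this through rigorously likely requires a slightly stronger invariant, namely that the $\tau$-image of the current path is a downward path in the null forest of $\fnChase{\aprogram,\Dnter}$ with the bag order matching depth; this is precisely the structure that Lemmas~\ref{lemma_arboreous}, \ref{lemma_fac_null_forest}, and~\ref{lemma_local_rule_application} are in place to supply.
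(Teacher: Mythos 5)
Your reduction of the problem is exactly the paper's: injectivity can only fail through a collision $\tau(n)=\tau(t)$ between a freshly created null $n\in\vec{v}\sigma^+$ and a surviving term $t$ on the current path; since $\tau(n)\in\vec{v}\stepmatchex{s}$, the term $t$ must itself be a null assigned by an earlier application corresponding to the \emph{same} chase step $s$. From there the paper closes the case with what you call the ``equivalent route'': the old fresh nulls $\vec{v}\theta^+$ all live in a single bag of $\mathcal{T}$, so if one of them survives they all do, the old head atoms $\rulehead(\steprule{s})\theta^+$ are still in $\Inter^j_i$, and hence $\sigma$ is a \emph{satisfied} match — contradicting its applicability at \lcRef{l_chase_choose}. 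That is precisely the paper's argument, and it is sound.

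The route you actually lead with, however, does not go through as stated. You claim that the fresh null $u$ produced by the earlier application ``lies strictly below all those frontier terms in the bag order,'' so that any pruning retaining the frontier bag must already have discarded $u$'s bag. This is only true when $\steprule{s}$ is an $\hat{E}$-tgd (line \lcRef{l_chase_add_to_new_bag} pushes a genuinely new bag). When $\steprule{s}$ is not an $\hat{E}$-tgd, line \lcRef{l_chase_add_to_last_bag} merges the fresh nulls into the \emph{same} bag that already contains a frontier term of that application; since the current application's frontier terms coincide with the earlier one's (both equal $\tau^-(\vec{y}\stepmatch{s})$), the pruning that keeps $\sigma(y)$'s bag necessarily keeps $u$ as well, and no contradiction arises from bag order alone. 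So the geometric argument should be dropped (or restricted to the $\hat{E}$ case), and the satisfied-match contradiction promoted from an aside to the actual proof — which then also removes any need for the stronger ``image is a downward path in the null forest'' invariant you speculate about at the end; the paper does not need it for this lemma. The homomorphism half of your write-up is fine and matches the construction in Lemma~\ref{lemma_algo_pg_chase_correct}.
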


For completeness of the algorithm, we are interested in runs where all tgd applications correspond to chase steps.
Such runs can be specified as a list of $|q|$ sequences of chase steps,
where repetitions of steps are allowed (and sometimes necessary).

We obtain a suitable choice sequence by scheduling \emph{tasks} $\tuple{d,i}$, to be read as
``perform chase step $i$ under the assumption that $\Inter$ already contains (isomorphic copies of) all
atoms that can be expressed using terms from the first $d-1$ elements of the path of Lemma~\ref{lemma_local_rule_application} \eqref{item_loc_rule_body}.''
Such tasks may require other tasks to be completed first, since $\Inter$ may not yet contain the whole
premise of $\steprule{i}$:

\begin{definition}\label{def_task_tree}
For chase step $i$, let $\fLineagePath_B(i)$ denote the path of Lemma~\ref{lemma_local_rule_application} \eqref{item_loc_rule_body}.
For an atom $\alpha\in\fnChase{\aprogram, \Dnter}$, let $\fLineagePath(\alpha)$ be the smallest path in the term tree
that contains all terms of $\alpha$ (which are on a path by induction over Lemma~\ref{lemma_local_rule_application} \eqref{item_loc_rule_head}).

The \emph{subtasks} of a task $\tuple{d,i}$ are all tasks $\tuple{e,j}$
with $e\geq d$ a depth, and $j<i$ the largest chase step that produced an atom
$\alpha\in\Dnter^{j+1}\setminus\Dnter^j$ with $|\fLineagePath(\alpha)|=e$ and $\fLineagePath(\alpha)\subseteq\fLineagePath_B(i)$.
The \emph{task tree} for $\tuple{d,i}$ has a root node with label $\tuple{d,i}$ and the task trees for all subtasks of $\tuple{d,i}$
as its children.
Children of a single parent are ordered by the depth in their label: $\tuple{e,j}<\tuple{e',j'}$ if $e<e'$.%
\end{definition}

The atom $\alpha$ in Definition~\ref{def_task_tree} ensures that the application of $\steprule{j}$ in Algorithm~\ref{algo_pg_chase}
will not delete any previous inferences up to depth $e$ (through \lcRef{l_chase_prunet} and \lcRef{l_chase_update_i}).
The order of subtasks ensures that inferences at smaller depths are computed first.
Now the required sequence to successfully perform chase step $i$ in the inner loop of Algorithm~\ref{algo_pg_chase}
is obtained by traversing the task tree for $\tuple{1,i}$ in a topological, order-respecting way (children before parents, smaller sibling nodes before larger ones),
extracting the sequence of chase steps from the second component of the sequence of tasks.

\begin{restatable}{lemma}{lemmaNChaseShortSequences}
\label{lemma_nchase_short_sequences}
If $\aprogram$ is arboreous and path-guarded, $\vec{s}$ is the sequence of chase steps obtained from the task tree with root $\tuple{1,i}$,
then the length $|\vec{s}|$ of $\vec{s}$ is bounded by a $\rank{\aprogram}$-exponential function.
\end{restatable}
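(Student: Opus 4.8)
The plan is to bound the number of nodes of the task tree with root $\tuple{1,i}$ — which equals $|\vec{s}|$, since the extracted sequence visits every task once — by $b^{h+1}$, where $b$ bounds the branching degree and $h$ the depth, and to show that $b$ and $h$ are both $(\rank{\aprogram}-1)$-exponential in $|\Dnter|$; a standard tower estimate then makes $b^{h+1}$ a $\rank{\aprogram}$-exponential function. (If $\rank{\aprogram}=0$ then $\ledgraph(\aprogram)$ is acyclic, the chase has polynomial size, and so does the task tree, so assume $\rank{\aprogram}\geq 1$ and write $k=\rank{\aprogram}$.) Throughout write $L$ for the maximal length of a root-to-leaf path in the term tree of $\fnChase{\aprogram,\Dnter}$.

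First I would establish the structural fact that $L$, and more generally the total number of terms on any single path of the term tree, is $(k-1)$-exponential; this is where the earlier results carry the load. Since $\hat{C}$ is the unique SCC of maximal rank, every other SCC has rank ${<}\,k$, so by Theorem~\ref{thm_bounded_nulls}: (i) the root bag $\locdec_0$, which holds all constants and all non-$\hat{C}$ nulls, has at most $(k-1)$-exponentially many terms; (ii) the number of $\hat{C}$-inputs (Definition~\ref{def_C_input_depth}) is $(k-1)$-exponential, since the incoming edges of $\hat{C}$ originate in SCCs of rank ${<}\,k$; and (iii) the context terms fed to any $\hat{E}$-tgd originate in the context SCCs of Definition~\ref{def_rank_comp}, which have rank $\leq k-1$, so there are only $(k-1)$-exponentially many possible contexts. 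Now a path of the term tree is a $\fanufoedge$-chain of bags, and each step from one bag to the next corresponds to a fresh application of an $\hat{E}$-tgd within a chain of nulls of $\hat{C}$, with at most $|\hat{C}|-1$ non-$\hat{E}$ steps in between (as $\hat{C}$ without $\hat{E}$ is acyclic). By Lemma~\ref{lemma_contextsat} each $\hat{E}$-tgd is applied at most once per context along such a chain, so by (iii) a path has at most $(k-1)$-exponentially many bags; and a single bag has at most $(k-1)$-exponentially many terms by a count analogous to Lemma~\ref{lemma_nulls_at_depth} at the constant depth $|\hat{C}|$, seeded by the $O(1)$ roots together with the $\hat{C}$-inputs. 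Their product is again $(k-1)$-exponential, which yields the claim and in particular bounds $L$.

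Next I would bound the branching degree $b$. By Definition~\ref{def_task_tree} the subtasks of $\tuple{d,i}$ are indexed by a depth $e\in\{d,\dots,|\fLineagePath_B(i)|\}$ with at most one subtask per $e$, and $|\fLineagePath_B(i)|\leq L$, so $b\leq L$. For the depth, consider a root-to-leaf branch $\tuple{d_0,i_0},\tuple{d_1,i_1},\dots$: the $d_s$ are non-decreasing and the $i_s$ strictly decreasing, so the branch splits into at most $L$ maximal blocks of constant depth. Within a block of depth $d^\ast$, moving from $\tuple{d^\ast,i_s}$ to its depth-$d^\ast$ child $\tuple{d^\ast,i_{s+1}}$ selects a fresh step $i_{s+1}<i_s$ that first produced an atom $\alpha_{s+1}$ with $|\fLineagePath(\alpha_{s+1})|=d^\ast$ and $\fLineagePath(\alpha_{s+1})\subseteq\fLineagePath_B(i_s)$. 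Using Lemma~\ref{lemma_local_rule_application} and path-guardedness I would show that $\fLineagePath_B$ of a subtask is a root-initial segment of the head path on which its producing atom sits, and hence that all of $\alpha_{a+1},\alpha_{a+2},\dots$ within one block lie on the same length-$d^\ast$ prefix $\mathfrak{q}$ of the term tree. The $\alpha_{s+1}$ are first derived at pairwise distinct steps, hence pairwise distinct, and there are at most polynomially-in-(number of terms of $\mathfrak{q}$) many atoms whose $\fLineagePath$ equals $\mathfrak{q}$; so each block has $(k-1)$-exponential length, and $h\leq L\cdot(k-1)\text{-exponential}$ is still $(k-1)$-exponential.

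Combining, $|\vec{s}|\leq b^{h+1}$ with $b$ and $h$ both $(k-1)$-exponential; since $\exp_{k-1}(m)^{\exp_{k-1}(m)}$ is $k$-exponential, $b^{h+1}$ is $\rank{\aprogram}$-exponential, as required. I expect the main obstacle to be the bookkeeping of the third paragraph: making precise that the consecutive equal-depth tasks along a branch reference pairwise distinct atoms that are all confined to one short, fixed path prefix. This rests on the stability of $\fLineagePath_B$ under passing to subtasks, which in turn needs path-guardedness via Lemma~\ref{lemma_local_rule_application} and the observation that $\fLineagePath$-values are always root-initial segments of paths. A secondary, more routine point is the self-contained derivation of the ``terms per path is $(k-1)$-exponential'' bound in the second paragraph, which essentially repackages the per-SCC, per-depth counting already behind Theorem~\ref{thm_bounded_nulls} and Lemma~\ref{lemma_nulls_at_depth}.
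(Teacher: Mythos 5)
Your overall strategy is the same as the paper's: bound the task-tree size by $b^{h+1}$, bound the branching degree $b$ by the ($(\rank{\aprogram}-1)$-exponential) length of term-tree paths, and bound the depth $h$ by confining the atoms $\alpha$ that license subtasks to a single short region of the term tree, so that the strictly decreasing chase steps along a branch can take only $(\rank{\aprogram}-1)$-exponentially many values. The paper performs the depth bound in one pass over the whole branch (its ``Claim $\ddagger$'') rather than block-by-block, but that is only a difference in bookkeeping, and your preliminary bound on terms-per-path is the same counting that the paper delegates to the proofs of Theorem~\ref{thm_bounded_nulls} and Lemma~\ref{lemma_algo_pg_chase_complexity}.

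The one step that fails as stated is the assertion that all atoms $\alpha_{a+1},\alpha_{a+2},\dots$ within a constant-depth block lie on the \emph{same} length-$d^\ast$ prefix $\mathfrak{q}$. Each $\fLineagePath(\alpha_{s+1})$ equals $\fLineagePath_B(i_s)\vert_{d^\ast}$, but when the step $i_{s+1}$ that produces $\alpha_{s+1}$ is an $\hat{E}$-tgd application, $\alpha_{s+1}$ may contain a fresh null, so its path ends in the \emph{new} child node $F[i_{s+1}]$, whereas the next atom's path $\fLineagePath_B(i_{s+1})\vert_{d^\ast}$ ends in some other, older child of the same depth-$(d^\ast{-}1)$ node. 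Hence the prefixes agree only up to the penultimate node, and the final nodes are merely siblings. The paper proves exactly this weaker invariant ($\fLineagePath(\alpha_i)\vert_{d_i-1}=\fLineagePath(\alpha_\ell)\vert_{d_i-1}$) and absorbs the varying last node by passing to a ``knobbly term tree'' in which every node also contains the terms of its direct children; the union of terms on such a path is still $(\rank{\aprogram}-1)$-exponentially bounded, so your count of candidate atoms per block, and therefore your final $b^{h+1}$ bound, survives once the claim is weakened in this way.
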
%
\begin{restatable}{lemma}{lemmaNChaseSingleAtom}
\label{lemma_nchase_single_atom}
If $\aprogram$ is arboreous and path-guarded, $\vec{s}$ is the sequence of chase steps obtained from the task tree for $\tuple{1,i}$,
and $M\geq |\vec{s}|$, then Algorithm~\ref{algo_pg_chase} can choose tgd applications according to $\vec{s}$.
\end{restatable}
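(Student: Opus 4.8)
The plan is to run Algorithm~\ref{algo_pg_chase} in lockstep with the chase while maintaining the homomorphism $\tau$, defined as the step-by-step \emph{canonical extension} of the identity on the constants (as in the paragraph before Lemma~\ref{lemma_algo_pg_chase_corresponding}). Concretely, I would show that, when line~\lcRef{l_chase_choose} is made to pick, in the order prescribed by $\vec{s}$, the tgd $\steprule{s}$ and the match $\sigma_s$ with $\sigma_s(z)=\tau^{-1}(\stepmatch{s}(z))$, then (i) each such choice is legal, i.e.\ $\sigma_s$ is an unsatisfied match of $\steprule{s}$ in the current $\Inter$ and, if $\steprule{s}$ has existentials, $\Inter\models\adatalogprogram$; (ii) the subsequent pruning of $\mathcal{T}$ at line~\lcRef{l_chase_prunet} never pops below a bag containing a frontier term, so the loop body is well defined; and (iii) the application \emph{corresponds to} chase step $s$ as required by Lemma~\ref{lemma_algo_pg_chase_corresponding}, so $\tau$ may be extended canonically to the fresh nulls. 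Given this, Lemma~\ref{lemma_algo_pg_chase_corresponding} keeps $\tau$ a homomorphism injective on every $\bigcup\mathcal{T}$, the run consists of exactly $|\vec{s}|$ tgd-application iterations, and since $M\geq|\vec{s}|$ we can afford all of them and then select the \textbf{break} alternative of line~\lcRef{l_chase_choose}; hence the algorithm can choose its tgd applications according to $\vec{s}$.

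The core is an induction over the task tree of Definition~\ref{def_task_tree}. Call an atom \emph{supported on} a path $P$ if all its terms lie in $\bigcup P$. I would prove: \emph{for every task $\tuple{d,i}$, if the traversal of its subtree is entered in a configuration where $\mathcal{T}$ is the $\tau$-preimage of the initial segment of $\fLineagePath_B(i)$ that the reading of $\tuple{d,i}$ assumes materialised, and $\Inter$ contains a $\tau$-preimage of every atom of $\fnChase{\aprogram,\Dnter}$ supported on that segment, then the traversal is legal, its extracted chase steps are executed one by one in the above lockstep fashion, and it terminates in a configuration where $\mathcal{T}$ is the $\tau$-preimage of $\fLineagePath(\alpha_i)$ (with $\alpha_i$ the atom derived in step $i$) and $\Inter$ contains $\tau$-preimages of all atoms supported on $\fLineagePath(\alpha_i)$.} In the inductive step the subtasks $\tuple{e,j_e}$ are processed in increasing $e$; by Lemma~\ref{lemma_local_rule_application} and the subtask definition, $\fLineagePath_B(j_e)$ is an initial segment of $\fLineagePath(\alpha_{j_e})$, which by construction is an initial segment of $\fLineagePath_B(i)$, so the induction hypothesis applies to each subtask in turn, and after all of them $\Inter$ contains $\tau$-preimages of \emph{all} atoms supported on $\fLineagePath_B(i)$. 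In particular the body atoms of step $i$ have $\tau$-preimages in $\Inter$, so $\sigma_i$ is a match of $\steprule{i}$ there; it is \emph{unsatisfied} because $\tau$ maps $\Inter$ into $\Dnter^{i}$ (all contributing steps are $<i$) and is injective on $\bigcup\mathcal{T}$, so a witness of satisfaction in $\Inter$ would transport to one in $\Dnter^{i}$, contradicting that $\stepmatch{i}$ is unsatisfied in $\Dnter^{i}$ (Definition~\ref{def_chase}); and if $\steprule{i}$ has existentials, every Datalog rule whose body matches in $\Inter$ has its head atom supported on $\fLineagePath_B(i)$ and, by Datalog-first in Definition~\ref{def_chase}, was fired at some step $<i$, hence is among the subtasks and already present, so $\Inter\models\adatalogprogram$. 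The frontier terms of step $i$ lie on the current $\mathcal{T}$, so the pruning at \lcRef{l_chase_prunet} only discards bags strictly deeper than their deepest frontier term; lines~\lcRef{l_chase_add_to_new_bag}/\lcRef{l_chase_add_to_last_bag} then rebuild the correct next bag (an $\hat{E}$-tgd opening a fresh one, as in Definition~\ref{def_fac_null_forest}); and line~\lcRef{l_chase_update_i} adds the $\tau$-preimage of $\alpha_i$ while retaining every already-reconstructed atom on the new path. This is exactly the ``corresponds to'' situation, so $\tau$ is extended canonically and the invariant is reestablished.

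Applying this to the root task $\tuple{1,i}$ finishes the proof: Algorithm~\ref{algo_pg_chase} starts with $\Inter=\Dnter$ and $\mathcal{T}=\tuple{\sigCons_0}$, which is precisely the $\tau$-preimage of the initial segment of $\fLineagePath_B(i)$ assumed by the reading of $\tuple{1,i}$ (the root node $\locdec_0$, carrying exactly the atoms of $\Dnter$); the chase-step sequence extracted from the task tree of $\tuple{1,i}$ is $\vec{s}$ by definition; hence the lockstep run is legal, has $|\vec{s}|\leq M$ iterations, and can be ended by \textbf{break}.

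The step I expect to be the main obstacle is (ii), and more broadly controlling the \emph{forgetting} performed at lines~\lcRef{l_chase_prunet} and~\lcRef{l_chase_update_i}: one must show that processing subtasks in increasing depth and always taking the \emph{largest} chase step producing an atom at a given depth guarantees that, whenever $\mathcal{T}$ is pruned to perform a later step of $\vec{s}$, no bag whose terms are still read by that step or a subsequent step of $\vec{s}$ is discarded before being re-derived. This hinges on the nesting of the paths $\fLineagePath_B(\cdot)$ and $\fLineagePath(\cdot)$ inside the term tree (Lemma~\ref{lemma_fac_null_forest}, Lemma~\ref{lemma_local_rule_application}, Lemma~\ref{lemma_constraints_new}) and on this nesting being compatible with the depth order on siblings in the task tree --- precisely the property the peculiar definition of subtasks in Definition~\ref{def_task_tree} is built to provide; making this alignment rigorous, including the treatment of terms that the algorithm places in a deeper bag than the term tree would, is where most of the work lies.
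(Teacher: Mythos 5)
Your overall strategy is the paper's: a task-tree-guided lockstep simulation maintaining the canonical homomorphism $\tau$, with the invariant that, before a prescribed chase step is replayed, $\Inter$ contains $\tau$-preimages of all (previously derived) atoms supported on the relevant prefix of $\fLineagePath_B(\cdot)$. (The paper phrases this as an induction along the linear sequence $\vec{s}$ with a case analysis locating the responsible task node, rather than structural induction on the tree, but the content is the same.) However, two points do not go through as written. First, your justification that each prescribed match is \emph{unsatisfied} --- ``$\tau$ maps $\Inter$ into $\Dnter^{i}$ (all contributing steps are $<i$)'' --- is false. Siblings in the task tree are ordered by depth, not by chase step: a sibling at smaller depth can carry a \emph{larger} chase step (e.g.\ a shallow atom derived late), and steps may repeat in $\vec{s}$. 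Hence when step $s$ is about to be replayed, $\Inter$ can already contain preimages of atoms derived at chase steps $>s$, including head atoms of $\steprule{s}$ itself; the match can then be satisfied in $\Inter$, and by Definition~\ref{def_chase} (2.a) the algorithm \emph{cannot} perform that application, so your strict lockstep stalls. The paper's proof explicitly allows such redundant steps to be skipped (setting $\Inter_i=\Inter_{i-1}$), arguing this only happens when the needed atoms are already present; your proof needs the same escape hatch and the accompanying argument.

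Second, and more importantly, the claim you defer as ``where most of the work lies'' --- that the pruning in \lcRef{l_chase_prunet} and \lcRef{l_chase_update_i} never discards a bag or atom that a later step of $\vec{s}$ still needs --- is not a finishing touch but the core of the proof, and it is genuinely nontrivial. The paper's argument (Case~(ii) of its induction) hinges on a structural observation you do not make: among all chase steps producing an atom $\alpha$ with $|\fLineagePath(\alpha)|=e$ and $\fLineagePath(\alpha)\subseteq\fLineagePath_B(i)$, the earliest is the one that creates the node at depth $e$, and every later one has a frontier term inside that node; consequently these steps form a single descending chain of depth-$e$ tasks below $\tuple{d,i}$, and every task executed between a depth-$e$ re-derivation and the step that consumes it has a frontier term at depth $\geq e$, so \lcRef{l_chase_prunet} never pops $\mathcal{T}$ below length $e$ in the interim. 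Without this, your invariant (``the traversal terminates with $\Inter$ containing preimages of all atoms supported on $\fLineagePath(\alpha_i)$'') is not established and the induction does not close. Two smaller imprecisions: that invariant must be restricted to atoms derived at chase steps up to the one being replayed --- as stated, ``all atoms of $\fnChase{\aprogram,\Dnter}$ supported on that segment'' includes atoms derived arbitrarily late, which a subtree whose steps are all $<i$ cannot reproduce --- and ``$\mathcal{T}$ is the $\tau$-preimage of $\fLineagePath(\alpha_i)$'' should only assert containment of $\tau(\mathcal{T}[d])$ in the $d$-th node, since \lcRef{l_chase_add_to_last_bag} can place nulls in a deeper bag than the term tree does.
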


Combining these results, we obtain the completeness of our tree-based chase.
Indeed, whenever $q\theta\subseteq \fnChase{\aprogram, \Dnter}$ for some match $\theta$, there are chase steps
$s_1,\ldots,s_{|q|}$ that produce the atoms of $q\theta$.
A suitable strategy then executes Algorithm~\ref{algo_pg_chase} for the choice sequences $\vec{s_i}$
obtained from the task trees for $\tuple{1,s_i}$ with $i=1,\ldots,|q|$.
Lemma~\ref{lemma_nchase_short_sequences} ensures that we can use a suitably small bound $M$ to use the complexity results of
Lemma~\ref{lemma_algo_pg_chase_complexity}, whereas Lemma~\ref{lemma_nchase_single_atom} ensures that the final
atom set $\Inter^{|q|}_{|\vec{s_{|q|}}|}$ contains all atoms of $\tau^-(q\theta)$.

\begin{restatable}{theorem}{theoremExpSpaceCompleteness}
\label{thm_expspace_complete}
	BCQ entailment for arboreous and path-guarded tgd sets $\aprogram$ with $\rank{\aprogram} = \kappa \geq 1$ is $\kExpSpace{(\kappa - 1)}$-complete for data complexity, where $\kExpSpace{0}=\PSpace$.
\end{restatable}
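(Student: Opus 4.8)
The statement has two halves: membership in $\kExpSpace{(\kappa-1)}$ and $\kExpSpace{(\kappa-1)}$-hardness. For membership, the plan is to instantiate Algorithm~\ref{algo_pg_chase} with $M$ equal to the $\kappa$-exponential bound supplied by Lemma~\ref{lemma_nchase_short_sequences}, and argue that this yields a nondeterministic procedure deciding BCQ entailment in space $(\kappa-1)$-exponential. Soundness is immediate from Lemma~\ref{lemma_algo_pg_chase_correct}: any accepting run gives a homomorphism $\Inter^\ast\to\fnChase{\aprogram,\Dnter}$, so $\Inter\models q$ implies $\fnChase{\aprogram,\Dnter}\models q$, which by the characterization in Section~\ref{sec_prelims} means $\aprogram\cup\Dnter\models q$. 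Completeness is the heart of the argument: given $\aprogram\cup\Dnter\models q$, pick a match $\theta$ with $q\theta\subseteq\fnChase{\aprogram,\Dnter}$ and let $s_1,\dots,s_{|q|}$ be chase steps producing the atoms of $q\theta$. For each $i$, build the task tree for $\tuple{1,s_i}$ and extract the choice sequence $\vec{s}_i$ by the order-respecting topological traversal; Lemma~\ref{lemma_nchase_single_atom} guarantees Algorithm~\ref{algo_pg_chase} can follow $\vec{s}_i$ (provided $M\geq|\vec{s}_i|$, which Lemma~\ref{lemma_nchase_short_sequences} secures), and Lemma~\ref{lemma_algo_pg_chase_corresponding} guarantees $\tau$ stays a locally injective homomorphism so that the algorithm never prunes away the needed atoms before reaching step $s_i$. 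Concatenating these runs over $i=1,\dots,|q|$ (the outer loop \lcRef{l_chase_qloop} with the root-merge \lcRef{l_chase_troot} in between), the final atom set contains an isomorphic copy of $q\theta$, so the check at \lcRef{l_chase_return} succeeds. Then Lemma~\ref{lemma_algo_pg_chase_complexity} bounds the space by a $(\kappa-1)$-exponential function $g(|\Dnter|)$, and Savitch's theorem turns the $\kNExpSpace{(\kappa-1)}$ bound into $\kExpSpace{(\kappa-1)}$; for $\kappa=1$ this is $\PSpace$ (and $\NPSpace=\PSpace$ classically). A small technical point to verify is that $M$ itself, being $\kappa$-exponential, can be represented and counted within $(\kappa-1)$-exponential space — the inner-loop counter $j$ ranges up to $M$, so it needs $\log M$ bits, i.e.\ $(\kappa-1)$-exponentially many; this is exactly why the bound lands one level below the rank.

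For hardness, the plan is to reduce from an appropriate $\kExpSpace{(\kappa-1)}$-complete problem — acceptance of a fixed deterministic Turing machine working in $(\kappa-1)$-exponential space — by building, for each $\kappa\geq 1$, a fixed arboreous and path-guarded tgd set $\aprogram_\kappa$ of rank $\kappa$ whose chase, on a database encoding the input word, simulates the machine. The running example of Section~\ref{sec_dep}–\ref{sec_sat} already constructs, with a single saturating SCC, a doubly-exponential supply of distinct nulls indexing $F$-$T$-sequences; the natural approach is to iterate this gadget $\kappa$ times (chaining SCCs so that a rank-$(\kappa-1)$ component feeds $C$-inputs into a rank-$\kappa$ component, as in Definition~\ref{def_rank_comp}) to obtain an address space of $(\kappa-1)$-exponential size, then lay out a single tape configuration of that length over those addresses and encode the transition function by Datalog rules that compute the successor configuration. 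Because the machine is deterministic and uses only $(\kappa-1)$-exponential space, it runs for at most $\kappa$-exponentially many steps before looping or halting; a further layer of the counting gadget (or reuse of the existing level order) times out the simulation, and a designated $\pnAccept$ atom is derived iff the machine accepts. One sets $q$ to be the Boolean query asking for $\pnAccept$. The reduction is clearly polynomial (indeed the tgd set is fixed, only the database depends on the input), so this establishes hardness for \emph{data} complexity as required.

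The main obstacle I expect is on the hardness side: ensuring that the machine-simulating tgd set is simultaneously \emph{saturating} (so Theorem~\ref{thm_satur_term} applies and the chase is finite), has \emph{exactly} rank $\kappa$ (not more, not less), and is \emph{path-guarded} (Definition~\ref{def_path_guarded}) — the last being delicate because path-guardedness is an anti-monotone condition (more tgds give fewer $\trianglelefteq$-constraints), so adding the transition rules can easily destroy the chain condition on $\hat C$-affected body variables. The trick, following the pattern visible in Examples~\ref{ex_dexp_propagation}–\ref{ex_dexp_saturating}, is to route all propagation through carefully chosen predicate positions that keep the $\preceq$ relation (hence $\trianglelefteq$) as rich as possible, and to make sure every body atom touching two $\hat C$-affected variables places them at $\preceq$-comparable positions. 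A secondary obstacle is bookkeeping in the completeness proof: one must check that the task-tree traversals for successive query atoms compose correctly across the outer-loop boundary \lcRef{l_chase_troot}, i.e.\ that merging the current path into the root node does not invalidate the correspondence $\tau$ needed for the next atom's run — this follows because $\tau$ is a genuine homomorphism into the (fixed) full chase and the merged root simply retains a subset of already-correct terms, but it needs to be stated carefully. Everything else (soundness, the space accounting, Savitch) is routine given the lemmas already in place.
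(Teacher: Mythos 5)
Your membership argument coincides with the paper's: the same four lemmas, the same use of $M$ from Lemma~\ref{lemma_nchase_short_sequences}, and Savitch's theorem; the technical point about storing the counter $j\leq M$ in binary is exactly how the paper's proof of Lemma~\ref{lemma_algo_pg_chase_complexity} handles it. That half is fine.

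The hardness half has a genuine gap. You propose to simulate a \emph{deterministic} $(\kappa-1)$-exponentially \emph{space}-bounded machine by materialising its entire run as a sequence of configurations. Such a machine runs for up to $\kappa$-exponentially many steps, so the configurations would form a chain of $\kappa$-exponential length in which each configuration-null is generated from its predecessor. But in an arboreous tgd set of rank $\kappa$, the $\hat{C}$-depth of nulls --- i.e.\ the length of any chain in the null forest --- is bounded by a $(\kappa-1)$-exponential function (this is established in the proof of Theorem~\ref{thm_bounded_nulls} and is precisely what Lemma~\ref{lemma_algo_pg_chase_complexity} relies on to bound the path length in $\mathcal{T}$). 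So the linear run simply does not fit. Your fallback of adding ``a further layer of the counting gadget'' to index time would introduce a component of rank $\kappa+1$ (or at least a second maximal-rank component), violating $\rank{\aprogram}=\kappa$ and the arboreous condition; reusing the existing $(\kappa-1)$-exponential order only times out $(\kappa-1)$-exponentially many steps, which yields at most $\kExpTime{(\kappa-1)}$-hardness, strictly weaker than what is needed. The paper avoids this by reducing instead from $(\kappa-1)$-exponentially \emph{time}-bounded \emph{alternating} Turing machines (equivalent to $\kExpSpace{(\kappa-1)}$ by Chandra--Kozen--Stockmeyer): the computation is then a \emph{tree} of configurations of only $(\kappa-1)$-exponential depth, with branching over transitions at existential/universal states, which is exactly the shape the null forest of an arboreous program supports; acceptance is propagated bottom-up by Datalog rules that are deliberately split (one $\pnAccept_{\delta_i}$ predicate per transition in a fixed order) to keep the tgds path-guarded. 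Your instinct that saturation, exact rank, and path-guardedness are the delicate constraints is right, but the decisive obstacle is the depth bound on configuration chains, and resolving it requires switching to alternation (or some other space-to-shallow-tree conversion), not just careful choice of predicate positions.
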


Hardness is shown by reduction from the word problem of $\kappa$-exponentially time-bounded alternating Turing machines.
\PSpace-hardness is illustrated with a simpler reduction from TrueQBF:

\begin{example}
	Let $\phi = \quantify_1 p_1, \ldots, \quantify_\ell p_\ell \ldotp \psi$ be a quantified Boolean formula with
	$\quantify_i \in \{ \exists, \forall \}$ for $1 \leq i \leq \ell$ and
	$\psi = \bigwedge_{j = 1}^k C^j$ in CNF.
	Let $\tilde{p}_i$ denote $p_i$ or its negation $\bar{p}_i$; the clauses $C^i$ are sets of such literals.

	The facts $\Dnter_0 =  \{ \pnEmptySet(\emptyset), \pnLatest(\vdash, \emptyset), \pnNextVar(\vdash, p_1), \pnNextVar(\vdash, \bar{p}_1) \} \cup \{ \pnNextVar(\tilde{p}_{i}, \tilde{p}_{i+1}) \mid 1\leq i<\ell\}$ encode an order over the literals. The tgds $\aprogram_0$ construct literal sets (truth assignments) similar to Example~\ref{ex_sets_saturating}:
	\begin{align} 
		\pnGetSU(x,S) &\to \exists v.\, \pnSU(x, S, v) \wedge \pnSU(x, v, v)\\
		\pnSU(x, S, T) \wedge \pnSU(y, S, S) & \to \pnSU(y, T, T) \\
		\pnLatest(x,S) \wedge \pnNextVar(x,y) & \to \pnGetSU(y,S) \label{eq_qbf_set_trigger}\\
		\pnLatest(x,S) \wedge \pnNextVar(x,y) \wedge \pnSU(y,S,T) &\to \pnLatest(y,T) \label{eq_qbf_set_mark_new}
	\end{align}
	Facts $\pnLatest(l,S)$ mark the latest literal $l$ added to a set $S$.
	The tgd~\eqref{eq_qbf_set_trigger} triggers the creation of a suitable set, and tgd~\eqref{eq_qbf_set_mark_new} marks the newly added literal.
	Note that we never have $\pnSU(p_i,s,s)$ and $\pnSU(\bar{p}_i,s,s)$.
	
	Now $\Dnter_1 = \{ \pnOccurs(\tilde{p}_i,C) \mid \tilde{p}_i \in C^j, 1\leq j\leq k\} \cup
	\{ \pnNext(C^{j-1}, C^j) \mid 1 < j \leq k \} \cup
	\{ \pnFirst(C^1), \pnLast(C^k) \}$ encodes the clauses. We use tgds $\aprogram_1$ to evaluate $\psi$ under a given truth assignment:
	\begin{align}
		\pnSU(x,S,S) \land \pnOccurs(x,c) \land \pnFirst(c) &\to \pnAccumulator(c,S) \label{eq_qbf_conj_first}\\
		\pnAccumulator(c,S) \land \pnNext(c,d) \land \pnSU(x,S,S) \land \pnOccurs(x,d) &\to \pnAccumulator(d,S) \label{eq_qbf_conj_next} \\
		\pnAccumulator(c,S) \land \pnLast(c) &\to \pnTrue(S) \label{eq_qbf_conj_last}
	\end{align}
	Here, $\pnAccumulator(c,s)$ means ``set $s$ satisfies clause $c$''.
	The tgds $\aprogram_1$ then propagate satisfaction along the order of $\Dnter_1$.
	Therefore, $\fnChase{\aprogram_0 \cup \aprogram_1, \Dnter_0 \cup \Dnter_1} \models \exists v \ldotp \pnTrue(v)$ if and only if $\psi$ is satisfiable.
	
	Finally, we use $\Dnter_2 = \{ \pnExVar(p_i) \mid \quantify_i = \exists, 1 \leq i \leq \ell \} \cup \{ \pnPositive(p_i),\allowbreak \pnNegated(\bar{p}_i) \mid 1 \leq i \leq \ell \}$
	and the following tgds $\aprogram_2$ to evaluate $\phi$:
	\begin{align}
		\pnSU(x,S,T) \land \pnExVar(x) \land \pnTrue(T) &\to \pnTrue(S) \label{eq_qbf_eval_ex}\\
		\pnSU(x,S,T) \land \pnPositive(x) \land \pnTrue(T) &\to \pnTruePosSucc(S) \label{eq_qbf_eval_uni_pos} \\
		\pnSU(x,S,T) \land \pnNegated(x) \land \pnTrue(T) &\to \pnTrueNegSucc(S) \label{eq_qbf_eval_uni_neg}\\
		\pnTruePosSucc(S) \land \pnTrueNegSucc(S) &\to \pnTrue(S) \label{eq_qbf_eval_uni_combine}
	\end{align}
	We check satisfaction by evaluating the tree of literal sets by propagating satisfaction from leafs towards the root.
	The tgd~\eqref{eq_qbf_eval_ex} handles existential quantification, and tgds~\eqref{eq_qbf_eval_uni_pos}--\eqref{eq_qbf_eval_uni_combine} handle universal quantification.
	Handling the successors of universal states separately ensures that the tgds are path-guarded.
	Indeed, $\preceq$ of Definition~\ref{def_constraints_new} for the used tgds contains $\ppos{\pnSU}{2} \preceq \ppos{\pnSU}{3}$.
	Note that $\fnChase{\aprogram_0 \cup \aprogram_1 \cup \aprogram_2, \Dnter_0 \cup \Dnter_1 \cup \Dnter_2} \models \exists v \ldotp \pnEmptySet(v) \land \pnTrue(v)$ if and only if $\phi$ is true.
\end{example}

\section{Conclusions and Outlook}\label{sec_conc}

We have established new criteria for chase termination, which advance the state of the art in 
two important ways: (1) they can take advantage of the standard chase,
and (2) they yield new decidable tgd classes with data complexities that
are complete for $\kExpTime{\kappa}$ and $\kExpSpace{\kappa}$, for any $\kappa\geq 0$.
This is obviously too high for transactional DBMS loads, but it allows us to
address a much larger range of complex computational tasks over databases with the chase.
Practical problems of this kind include ontology reasoning \cite{Carral+19:ChasingSets}, database provenance computation \cite{EKM:exruleprov:rr2022},
and querying databases with complex values \cite{MarxKroetzsch:complexValues:ICDT2022}.
We also note that checking our criteria is always dominated by
Datalog reasoning, which is practically feasible and of lower complexity than some established criteria \cite{CG+13:acyclicity}.

Our work brings up many follow-up questions.
First, the new tgd classes are candidates for capturing their respective complexity classes
(at least from $\PSpace$ upwards), but known proof techniques rely on non-saturating
tgds \cite{Bourgaux+:exruleexp:kr21}.
Second, our techniques require a Datalog-first chase strategy, which is avoidable 
for sets and complex values \cite{MarxKroetzsch:complexValues:ICDT2022}.
It is open if similar approaches could apply in our setting.
Third, our criteria can be broadened, e.g., the restriction to a single maximal-rank component in Section~\ref{sec_optchase}
can be relaxed.

Taking a wider view, a central methodological contribution of our work is the labelled dependency graph and its extensive
use for analysing the internal structure of the standard chase. It can be seen as a surrogate for the
more syntactic ``lineage'' of nulls that is available in the semi-oblivious chase -- best exposed through the use of
skolem terms \cite{Marnette09:superWA} --, which has been extremely useful in studying that chase variant.
It is exciting to ask how our method can be similarly useful in further studying the standard chase, e.g., to detect non-termination or to
decide termination in new cases, and whether it can be refined in the style of termination checks based on materialisation
or control flow analysis.

\paragraph*{Acknowledgements} This work was partly supported
by Deutsche Forschungsgemeinschaft (DFG, German Research Foundation) in projects 389792660 (TRR 248, \href{https://www.perspicuous-computing.science/}{Center for Perspicuous Systems}) and 390696704 (CeTI Cluster of Excellence);
by the Bundesministerium für Bildung und Forschung (BMBF) in the \href{https://www.scads.de}{Center for Scalable Data Analytics and Artificial Intelligence} (ScaDS.AI);
by BMBF and DAAD (German Academic Exchange Service) in project 57616814 (\href{https://secai.org/}{SECAI}, \href{https://secai.org/}{School of Embedded and Composite AI});
and by the \href{https://cfaed.tu-dresden.de}{Center for Advancing Electronics Dresden} (cfaed).
 
\bibliographystyle{ACM-Reference-Format}

\clearpage
\appendix
\section{Proofs for Section~\ref{sec_dep}}

\lemmaInfinitePath*
\begin{proof}
If $\fnChase{\aprogram, \Dnter}$ is infinite, it must contain infinitely many nulls.
Let $\prec$ be some total order on nulls of $\fnChase{\aprogram, \Dnter}$ such that
$n\prec m$ holds whenever $n$ was introduced at an earlier chase step than $m$.
Let $G$ be the directed graph that has all such nulls as its vertices and that contains an edge $n\to m$
if $n$ is the $\prec$-largest null with $n\lchaseedgeto{y}m$ in $\fnChase{\aprogram, \Dnter}$ for some $y$.
Then $G$ is acyclic (since $n\lchaseedgeto{y}m$ implies $n\prec m$) and a forest (since each vertex has a unique predecessor by definition).
Root vertices in $G$ correspond to nulls introduced by tgd applications to non-null frontier variables;
since there are only finitely many such root nulls, $G$ (being infinite) contains an infinite tree $T$.
Moreover, $T$ is finitely branching: indeed, $n\to m\in T$ implies that $m$ was produced by a tgd application to frontier terms
that were introduced in $\fnChase{\aprogram, \Dnter}$ no later than $n$; there are only finitely many such terms; hence there are only finitely many such
tgd applications.
As a finitely branching, infinite tree, $T$ must contain an infinite path by Kőnig's lemma, and this path corresponds
to the required chain.
\end{proof}

\section{Proofs for Section~\ref{sec_sat}}

\lemmaPathQuery*
\begin{proof}
The existence of $\mathfrak{p}$ is guaranteed by Lemma~\ref{lemma_ledgraph}.
$\fnApplyPath(\mathfrak{p})\theta\subseteq\Dnter^m$
follows from Definitions~\ref{def_chase} and \ref{def_pathquery}, and the definition of $\theta$.
\end{proof}

\lemmaPropCompl*
\begin{proof}
To check the relevant entailments of the form $\adatalogprogram\models B\to H$ as in \eqref{eq_propbase} and \eqref{eq_propstep}, we can check the entailment $\adatalogprogram, B'\models H'$, where 
$B'$ and $H'$ are sets of atoms obtained by uniformly replacing variables
in $B$ and $H$ with fresh constants.
The claimed complexities are that of Datalog entailment \cite{D+:datalogcomp}.
\end{proof}

\thmSatCompl*
\begin{proof}
Hardness follows from Lemma~\ref{lemma_prop_compl}.
For inclusion, note that Definition~\ref{def_esaturating} \eqref{item_saturating_acyc} can be checked 
in polynomial time for a given $E$. If it holds true, there are at most exponentially many $\bar{E}$-paths
in $C$, leading to exponentially many checks for \eqref{item_saturating_base} and \eqref{item_saturating_step},
which are each in \ExpTime by Lemma~\ref{lemma_prop_compl}.
The last part of the claim follows since there are a polynomial number of strongly connected components,
each with at most exponentially many candidate sets for $E$.
\end{proof}

\lemmaContextsat*
\begin{proof}
Let $e\defeq u\ledgeto{y}v$.
By Lemma~\ref{lemma_ledgraph}, the given chain $\mathfrak{c}$ corresponds to a path $\mathfrak{p}$ in $\ledgraph(\aprogram)$.
The first and last edge of $\mathfrak{p}$ is $e$, so all edges of $\mathfrak{p}$ are in $C$.
We can view $\mathfrak{p}$ as a path of the form
$e_1\mathfrak{c}_1 e_2\cdots e_{k-1}\mathfrak{c}_{k-1}e_k$ with $e_1=e_k=e$; $e_2,\ldots,e_k\in E$;
and $\mathfrak{c}_i$ an $\bar{E}$-path in the sense of Definition~\ref{def_esaturating} for $1\leq i<k$.
For $1\leq i\leq k$, let $n_s^i\lchaseedgeto{x(i)} n_t^i$ be the part of $\mathfrak{c}$ that corresponds to $e_i$,
and let $s(i)$ be the respective chase step (in particular, $s(1)=a$ and $s(k)=b$).

We show by induction that, for all $2\leq i\leq k$, we have
$\Dnter^{s(i)}\models H[n_s^i,\vec{z}\stepmatch{a},n_t^{i-1},\vec{w}\stepmatchex{a}]$.
In particular, for $i=k$ we get 
$\Dnter^b\models H[n_{\ell-1},\vec{z}\stepmatch{a},n_t^{k-1},\vec{w}\stepmatchex{a}]$.
This shows the claim, since it means that Definition~\ref{def_chase} (2.a) would not be satisfied if
$\vec{z}\stepmatch{a}=\vec{z}\stepmatch{b}$.

The base case $i=2$ follows since $e_1\mathfrak{c}_1$ is base-propagating by Definition~\ref{def_esaturating} \eqref{item_saturating_base}.
For the induction step, suppose the claim holds for $i$. 
The claim follows for $i+1$ since $e_i\mathfrak{c}_i e_{i+1}\mathfrak{c}_{i+1}$ is step-propagating for $e$
by Definition~\ref{def_esaturating} \eqref{item_saturating_step}.
\end{proof}

\thmSaturTerm*
\begin{proof}
Suppose for a contradiction that $\fnChase{\aprogram, \Dnter}$ is infinite.
Then there are infinitely many nulls, and a tgd that is applied
infinitely often.
The \emph{tgd of} an edge $\ledge{w'}{w}{x}$ in $\ledgraph(\aprogram)$ is $\arule_w$, the unique tgd with variable $w$.
Now let $C$ be a strongly connected component of $\ledgraph(\aprogram)$ such that 
(1) $C$ contains an edge whose tgd is applied infinitely often, and 
(2) the tgd of every edge $\ledge{w'}{w}{x}$ with $w'\notin C$ and $w\in C$ is applied only finitely often.

Using a similar argument as for the proof of Lemma~\ref{lemma_infinite_path}, we find that there is 
an infinite chain $n_0\lchaseedgeto{y_1} n_1\lchaseedgeto{y_2} \cdots$ in $\fnChase{\aprogram, \Dnter}$
such that $\nullvar{n_i}\in C$ for all $i\geq 0$.
By Lemma~\ref{lemma_ledgraph}, this chain corresponds to an infinite path $\mathfrak{p}$ in $C$.
By assumption, $C$ is $E$-saturating for some set $E$, so, by Definition~\ref{def_esaturating}~\eqref{item_saturating_acyc},
some edge $u \ledgeto{y} v \in E$ occurs infinitely often in $\mathfrak{p}$.
Let $\arule_v$ be its tgd, and let $\rulehead(\arule_v)=\exists v,\vec{w}. H[y,\vec{z},v,\vec{w}]$.

By Definition~\ref{def_esaturating}~\eqref{item_saturating_confluent} and Lemma~\ref{lemma_ledgraph},
applications of $\arule_v$ can only involve null values $n$ for variables of $\vec{z}$
if (i) $\nullvar{n} \notin C$, and (ii) there is an edge from $\nullvar{n}$ to $C$ in $\ledgraph(\aprogram)$.
By our choice of $C$, the number of such nulls $n$ is finite, as is the number of constants
in $\aprogram$ and $\Dnter$, so there are only finitely many possible instantiations of $\vec{z}$
in applications of $\arule_v$.
Together with Lemma~\ref{lemma_contextsat}, this implies that $\arule_v$ is applied only finitely many 
times -- a contradiction.
\end{proof}

\section{Proofs for Section~\ref{sec_rank}}

\lemmaNullsAtDepth*
\begin{proof}
Let $\alpha = \max \{ \card{\vec{y}} \colon B[\vec{x},\vec{y}] \to\exists\vec{v}. H[\vec{y},\vec{v}]\in\aprogram, \vec{v} \cap C \neq \emptyset \}$,
and let $\beta = \confluence{C}$. %
The claim is easy to see for trivial components (case $\beta=0$), 
so we only consider the case $\beta\geq 1$.
We define upper bounds $g(d)$ for the number of nulls of $C$-depth $\leq d$.

Nulls of $C$-depth $d+1$ are created by applying $\arule_v$ with $v \in C$ to a frontier
of at most $\alpha$ terms that are $C$-inputs and nulls of $C$-depth $\leq d$,
where at most $\beta$ of the terms are nulls of $C$-depth $\leq d$.
An upper bound for nulls in $C$ produced by such tgd applications therefore
is $\card{C} i^\alpha g(d)^\beta$.
The total number of nulls at $C$-depth $\leq d+1$ is therefore bounded by
$\card{C} i^\alpha g(d)^\beta + g(d) \leq (\card{C} i^\alpha +1) g(d)^\beta$ (where we use $\beta\geq 1$).
We can therefore define $g(0)\defeq\card{C} i^\alpha +1$ and
$g(d+1)\defeq (\card{C} i^\alpha +1) g(d)^\beta = g(0) g(d)^\beta$ 
to obtain an upper bound for the number of nulls of $C$-depth $\leq d$.

If $\beta=1$, then $g(d)=g(0)^{d+1}$ is exponential in $d$, as claimed.
If $\beta>1$, we use a relaxed upper bound $f(d)\defeq g(0)^{(\beta+1)^d}$.
Indeed, $g(d)\leq f(d)$ follows by induction:
the base case follows from $f(0)=g(0)$; and the step follows from
\begin{align*}
g(d+1) & = g(0) g(d)^\beta \leq g(d) g(d)^\beta = 
g(d)^{\beta+1} \\
  & \leq f(d)^{\beta+1} 
    = \big(g(0)^{(\beta+1)^d}\big)^{\beta+1}
   =      g(0)^{(\beta+1)^{d+1}} 
  = f(d+1)
\end{align*}
where the second ``$\leq$'' uses the induction hypothesis.
Hence, $f$ is the claimed doubly exponential bound.
\end{proof}

\thmBoundedNulls*
\begin{proof}
Let $C_0, \ldots, C_k$ be a topological order as in Definition~\ref{def_rank_comp},
and let $\iota(v)$ denote the number of nulls $n$ with $\nullvar{n}=v$ in $\fnChase{\aprogram, \Dnter}$.
Moreover, let $\kappa$ denote the number of constants in $\Dnter$ and $\aprogram$.
We show the claim for all existential variables $v$ with $\scc{v}=C_i$ by induction over $i=0,\ldots,k$.

Consider $C_i$ and assume that the claim holds for all $C_j$ with $j<i$.
Let $C^-_1, \ldots, C^-_\ell$ with $C^-_j \prec C_i$ be the direct $\prec$-predecessors of $C_i$.
Hence, for all $1\leq j\leq\ell$, there is $m<i$ with $C^-_j=C_m$, and,
for every $v\in C^-_j$, $\iota(v)$ is at most $\rank{C^-_j}$-exponential by induction ($\ast$).
Therefore, the number of $C_i$-inputs is at most $\opfont{in}_i\defeq \kappa + \sum_{j=1}^\ell \sum_{v\in C^-_j}\iota(v)$,
which (by ($\ast$)) is $r_{\opfont{in}}^i$-exponential for $r_{\opfont{in}}^i$ as in Definition~\ref{def_rank_comp}.
Analogously, the number $\opfont{cxt}_i$ of $C_i$-inputs that are nulls $n$ with a $C_i$-incoming edge
$\nullvar{n}\ledgeto{y}v$ such that there is an edge $w\ledgeto{x}v\in E_{C_i}$ with $x\neq y$ is
$r_{\opfont{cxt}}^i$-exponential.

Case $\confluence{C_i} = 0$.
Then $C_i = \{ v \}$ and $\arule_v = B[\vec{x},\vec{y}] \to\exists\vec{v}. H[\vec{y},\vec{v}]$ with $v \in \vec{v}$.
The number of instantiations of $\vec{y}$ and nulls $n$ with $\nullvar{n}=v$ is
bounded by $(\opfont{in}_i)^{\card{\vec{y}}}$. This bound is $r_{\opfont{in}}^i$-exponential and
 $\rank{C_i}=r_{\opfont{in}}^i$, so the claim holds.

Case $\confluence{C_i} \geq 1$.
Let $E$ denote the set $E_{C_i}$.
The $C_i$-depth of nulls has an upper bound that is polynomial in
$\opfont{cxt}_i$.
Indeed, consider an arbitrary chain $\mathfrak{c}$ of nulls $m_j$ with $\nullvar{m_j}\in C_i$ as
in Definition~\ref{def_C_input_depth}.
By Lemma~\ref{lemma_contextsat}, for every $u \ledgeto{y} v \in E$ 
with head $\exists v,\vec{w}. H[y,\vec{z},v,\vec{w}]$ of the corresponding tgd $\arule_v$,
all applications of $\arule_v$ in $\mathfrak{c}$ use a different instantiation of $\vec{z}$.
By Definition~\ref{def_esaturating}~\eqref{item_saturating_confluent}, variables in $\vec{z}$ can only be instantiated
with values from the $\leq\kappa+\opfont{cxt}_i$ many $C_i$-inputs for which $C_i$ has an incoming edge to $v$.
Hence, there are at most $(\kappa+\opfont{cxt}_i)^{|\vec{z}|}$ applications of $\arule_v$ in $\mathfrak{c}$.
Using $\alpha$ to denote the maximal number of frontier variables $\vec{z}$ in any tgd of $E$, there
are at most $|E|\cdot (\kappa+\opfont{cxt}_i)^\alpha$ applications of $E$-tgds in $\mathfrak{c}$.

Now by Definition~\ref{def_esaturating}~\eqref{item_saturating_acyc}, the number of consecutive
tgd applications in $\mathfrak{c}$ that correspond to $\bar{E}$-edges is at most $|\bar{E}|$.
Hence, the overall length of $\mathfrak{c}$ is bounded by
$|\bar{E}|\cdot (|E|\cdot (\kappa+\opfont{cxt}_i)^\alpha+1)$. This bound is polynomial in $\opfont{cxt}_i$, since
$|\bar{E}|$, $|E|$ and $\alpha$ are fixed by $\aprogram$.
Therefore, the $C_i$-depth of nulls is at most polynomial in $\opfont{cxt}_i$, and hence bounded by
an $r_{\opfont{cxt}}^i$-exponential function.

The claim now follows from Lemma~\ref{lemma_nulls_at_depth}, using that
the number of $C_i$-inputs $\opfont{in}_i$ is $r_{\opfont{in}}^i$-exponential as noted above,
where the single and double exponential dependency on the  $C_i$-depth corresponds to the use of
$r_{\opfont{cxt}}^i+1$ and $r_{\opfont{cxt}}^i+2$ in Definition~\ref{def_rank_comp}.
\end{proof}

\thmRankCompl*
\begin{proof}\phantomsection\label{proof_thmRankCompl} %
Theorem~\ref{thm_bounded_nulls} yields a $\rank{\aprogram}$-exponential bound on the number of terms in $\fnChase{\aprogram, \Dnter}$.
The number of atoms in $\fnChase{\aprogram, \Dnter}$, for fixed $\aprogram$, is polynomial in the number of terms.
Since BCQ entailment can be decided over $\fnChase{\aprogram, \Dnter}$, the claimed \kExpTime{\rank{\aprogram}} upper bound follows.

The lower bound can be shown by reduction from the word problem of $k$-exponentially time bounded Turing machines (TMs).
The simulation of TMs with Datalog rules is standard \cite{D+:datalogcomp},
using a strict total order for time steps and tape cells. 
To construct such an order of the required length, we expand the construction of Example~\ref{ex_dexp_saturating} (i.e., the combined tgds from
Examples~\ref{ex_dexp} and \ref{ex_dexp_propagation}) with the following additional tgds:
\begin{align}
	\predname{first}(z)
		\to \predname{min}(\bot,z)\wedge\predname{max}(\top,z) &\wedge\predname{succ}(\bot,\top,z) \label{eq_succ_init}\\
	\predname{cat}(\bar{x}_1,\bar{x}_2,z,x) \wedge \predname{next}(z,z_+) \wedge \predname{up}(x,z_+,\bar{x}) &\to \predname{cnu}(\bar{x}_1,\bar{x}_2,\bar{x},z,z_+)\label{eq_succ_cnu}\\
	\predname{cnu}(\bar{x}_1,\bar{x}_2,\bar{x},z,z_+) \wedge
	\predname{cnu}(\bar{x}_1,\bar{x}'_2,\bar{x}'\!\!,z,z_+) \wedge
	\predname{succ}(\bar{x}_2,\bar{x}'_2,z)
		&\to \predname{succ}(\bar{x},\bar{x}'\!\!,z_+)\label{eq_succ_add_low}\\
	\begin{split}
	\predname{cnu}(\bar{x}_1,\bar{x}_2,\bar{x},z,z_+) \wedge
	\predname{cnu}(\bar{x}'_1,\bar{x}'_2,\bar{x}'\!\!,z,z_+) \wedge{} \\
	\predname{succ}(\bar{x}_1,\bar{x}'_1,z) \wedge
	\predname{max}(\bar{x}_2,z)\wedge\predname{min}(\bar{x}'_2,z)
		&\to \predname{succ}(\bar{x},\bar{x}'\!\!,z_+)
	\end{split}\label{eq_succ_add_high}\\
	\predname{cnu}(\bar{x}_1,\bar{x}_1,\bar{x},z,z_+)\wedge\predname{min}(\bar{x}_1,z)
		&\to \predname{min}(\bar{x},z_+)\label{eq_succ_min}\\
	\predname{cnu}(\bar{x}_1,\bar{x}_1,\bar{x},z,z_+)\wedge\predname{max}(\bar{x}_1,z)
		&\to \predname{max}(\bar{x},z_+)\label{eq_succ_max}
\end{align}
Reading $\bot$-$\top$-sequences as binary numbers, these tgds compute the usual
numeric successor order. In particular, for each $\ell$, facts of the form
$\predname{succ}(\bar{x},\bar{x}'\!,\ell)$, $\predname{min}(\bar{x},\ell)$,
and $\predname{max}(\bar{x},\ell)$ describe a total order of length $2^{2^\ell}$.
The approach is similar to a technique first introduced by \citeauthor{gottlobrr} \cite{gottlobrr},
generalised to make the depth of the construction data-dependent.

We can iterate this construction by using another instance of the above tgds with each predicate
$p$ replaced with a fresh name $p'$, and using tgds like $\predname{last}(z)\wedge \predname{succ}(\bar{x},\bar{x}'\!,z)\to \predname{next}'(\bar{x},\bar{x}')$ to derive an initial order of levels.
The new set of tgds leads to another strongly connected component whose rank is increased by $2$,
since $\predname{next}'$ provides context terms for the renamed version of tgd \eqref{eq_dexp_up}.
This allows us to construct $k$-exponential total orders for all even numbers $k$.

To cover odd ranks as well, we can use the known simulation of nested finite sets 
with tgds \cite{MarxKroetzsch:complexValues:ICDT2022}. 
The construction of single exponentially long chains that has been given for data complexity
in this case \cite[Theorem~6]{MarxKroetzsch:complexValues:ICDT2022}.
Using this construction instead of the above, we obtain exponential chains for
strongly connected components $C$ with $\confluence{C} = 1$.

In summary, we can therefore find, for any number $r\geq 0$, tgds $\aprogram$ with $\rank{\aprogram}=r$
that gives rise to an $r$-exponential chain, and for which the \kExpTime{r}-hard word problem for $k$-exponential time TMs reduces to BCQ entailment.
The special case $r=0$, where $\kExpTime{0}$ denotes $\complclass{PTime}$, agrees with the known data complexity for Datalog \cite{D+:datalogcomp}.
\end{proof}
\section{Proofs for Section~\ref{sec_optchase}}

\lemmaArboreous*
\begin{proof}
We use the notation as in Definition~\ref{def_arboreous}.
By definition, the graph of all edges $t\lchaseedgeto{x} n$ is acyclic, so the null forest $\tuple{\hat{N},\nufoedge}$ is too.
Suppose for contradiction that there is $m\in \hat{N}$ with in-degree $\geq 2$, i.e., 
that there are $n_1,n_2\in\hat{N}$ with $n_i\lchaseedgeto{y_i} m$ for $i\in\{1,2\}$.
Then $y_1 \neq y_2$, since different nulls $n_1,n_2$ must match different variables
to be used in one tgd application.
By definition, $\nullvar{n_1},\nullvar{n_2}\in\hat{C}$, and therefore $\confluence{\hat{C}}\geq2$.
A contradiction.
\end{proof}

\lemmaFacNullForest*
\begin{proof}
We use the notation as in Definition~\ref{def_fac_null_forest}.
By definition, the sets $R[i]$ are disjoint, since each null is created by a unique tgd application.
The sets $F[i]$ are disjoint from all $R[j]$ with $j\neq i$ by construction.
Since $\tuple{\hat{N},\nufoedge}$ is a tree, the sets $F[i]$ are therefore mutually disjoint.
Since $\locdec_0$ contains all remaining terms by construction, $N_\sim$ is indeed a partition of the terms of the chase.

The relation $\fanufoedge$ is acyclic on $\mathcal{F}$. Indeed, suppose that there were a $\fanufoedge$-cycle $C$ in 
$\mathcal{F}$. Each $F[i]\fanufoedge F[j]$ in $C$ corresponds to a relation $n_1\nufoedge n_2$ for nulls $n_1,n_2$
as in Definition~\ref{def_fac_null_forest}. Since $F[i]\neq F[j]$, $n_2\in R[j]$.
Stemming from a single tgd application, all elements in $R[j]$ have the same predecessors, hence 
there is a $\nufoedge$-path from $n_1$ to every $n\in F[j]$. Since we obtain such paths for every
edge in the $\fanufoedge$-cycle $C$, we find a $\nufoedge$-cycle in the null forest. This contradicts the
acyclicity of the null forest (Lemma~\ref{lemma_arboreous}).

With the additional root element $\locdec_0$, $\mathcal{F}$ therefore becomes a tree as required.
\end{proof}

We make another small observation that was not included in the main text of the paper, but
that is relevant to avoid Definition~\ref{def_fac_null_forest} from requiring further special cases.

\begin{applemma}\label{applemma_erule_to_scc}
Let $\aprogram$ be arboreous with related notation as in Definition~\ref{def_fac_null_forest}.
Then $V_{\hat{E}}\subseteq\hat{C}$, i.e., all existential variables of all $\hat{E}$-tgds
are contained in $\hat{C}$.
\end{applemma}
\begin{proof}
Consider some edge $w\ledgeto{y} v$ in $\hat{E}$. Then $v\in\hat{C}$.
Let $\vec{v}$ be the set of existential variables in the tgd that contains $v$.
Then every $v'\in\vec{v}$ also satisfies $v'\in\hat{C}$, since otherwise $v'$ would be
part of a distinct strongly connected component that would have the same or a greater rank than $\hat{C}$,
contradicting the requirement that $\hat{C}$ is the unique SCC of maximal rank (Definition~\ref{def_arboreous}).
\end{proof}

The previous result clarifies possible uncertainty about the sets $R[i]$ in Definition~\ref{def_fac_null_forest}:
even when applied to a match that only includes terms that are not in the null forest $\hat{N}$, the resulting
set of fresh nulls is fully contained in $\hat{N}$.

\lemmaConstraintsNew*
\begin{proof}
Let $\Dnter^0, \Dnter^1, \ldots$ be the chase sequence of $\fnChase{\aprogram, \Dnter}$.
We show the claim for $p(\vec{t})\in\Dnter^c$ by strong induction on $c>0$.
For $\Dnter^0 = \Dnter$, $t_i\notin\hat{N}$ so $t_i\in \locdec_0$, and the claim holds since $\locdec_0$ is
the root of the term tree.

For the induction step $\Dnter^{c+1}$, we only consider the case $t_i\in\hat{N}$ (the case $t_i\notin\hat{N}$ works as before),
and we show that $\locdec(t_i)\fanufoedgeStar\locdec(t_j)$.
Note that this implies $t_j\in\hat{N}$, since there is no edge from $\locdec(t_i)$ to $\locdec_0$.

Therefore, let $\arule\defeq\steprule{c} = B[\vec{x},\vec{y}] \to\exists\vec{v}.H[\vec{y},\vec{v}]$ and let $\sigma^+ \defeq \stepmatchex{c}$.
Moreover, assume that $\ppos{p}{i}\preceq\ppos{p}{j}$ and $p(t_1,\ldots,t_k) = p(z_1\sigma^+,\ldots,z_k\sigma^+) \in H\sigma^+$
with $t_i\in\hat{N}$. We have $z_i,z_j\in\vec{y}\cup\vec{v}$. We distinguish the possible cases:

\begin{enumerate}
\item Case $z_i, z_j\in\vec{v}$. Since $t_i\in\hat{N}$, $z_i\in\hat{C}$. By Definition~\ref{def_constraints_new} \eqref{item_constraints_root}, $z_j\in\hat{C}$
and therefore $t_j\in\hat{N}$. By Definition~\ref{def_fac_null_forest}, $t_i,t_j\in R[c]$, so $\locdec(t_i)=\locdec(t_j)$ and $\locdec(t_i)\fanufoedgeStar\locdec(t_j)$.
\item Case $z_i,z_j\in\vec{y}$. Since $\ppos{p}{i}\preceq\ppos{p}{j}$, Definition~\ref{def_constraints_new} \eqref{item_constraints_propagation} implies
that there is a chain of body variables $z_i=x_1\trianglelefteq\ldots\trianglelefteq x_\ell=z_j$ in $\arule$. By definition of $\trianglelefteq$,
each pair of adjacent variables $x_m,x_{m+1}$ in that chain occurs at $\preceq$-comparable body positions $\ppos{q}{i_m}\preceq\ppos{q}{i_{m+1}}$ in a
body atom $q(\vec{s})$ such that $q(\vec{s})\sigma^+ \in \bigcup_{d\leq c}\Dnter^d$. By our induction hypothesis, the claim holds for $x_m\sigma^+$ and $x_{m+1}\sigma^+$.
Since $\fanufoedgeStar$ is transitive, this shows the claim for $t_i$ and $t_j=x_\ell\sigma^+$.
\item Case $z_i\in\vec{y}$ and $z_j\in\vec{v}$. Then $t_i\lchaseedgeto{z_i}t_j$ in $\fnChase{\aprogram, \Dnter}$
and $\nullvar{t_i}\ledgeto{z_i}z_j$ in $\ledgraph(\aprogram)$ by Lemma~\ref{lemma_ledgraph}.
$\ledgraph(\aprogram)$ has a unique rank-maximal strongly connected component $\hat{C}$ by Definition~\ref{def_fac_null_forest},
and $\nullvar{t_i}\in\hat{C}$ by $t_i\in\hat{N}$. Therefore $\nullvar{t_i}\ledgeto{z_i}z_j$ implies $z_j=\nullvar{t_j}\in\hat{C}$,
hence $t_j\in\hat{N}$.
By Definition~\ref{def_fac_null_forest}, $\locdec(t_i)=\locdec(t_j)$ or $\locdec(t_i)\fanufoedge\locdec(t_j)$, so $\locdec(t_i)\fanufoedgeStar\locdec(t_j)$.
\item Case $z_i\in\vec{v}$ and $z_j\in\vec{y}$. By Definition~\ref{def_constraints_new} \eqref{item_constraints_newbag}, $z_i\notin V_{\hat{E}}$,
and by Definition~\ref{def_constraints_new} \eqref{item_constraints_newstart} $z_j\in\lambda(z_i)$.
The latter implies that $\nullvar{t_j}\in\hat{C}$, and therefore $t_j\in\hat{N}$.
Since $z_i\notin V_{\hat{E}}$, $\arule$ is not an $\hat{E}$-tgd as in Definition~\ref{def_fac_null_forest}, and therefore $t_i\in\locdec(t_j)$,
i.e., $\locdec(t_i)=\locdec(t_j)$ and $\locdec(t_i)\fanufoedgeStar\locdec(t_j)$. \qedhere
\end{enumerate}
\end{proof}

\lemmaLocalRuleApplication*
\begin{proof}
Item \eqref{item_loc_rule_body}.
Since $\steprule{i}$ is path-guarded, the $\hat{C}$-affected variables $\vec{x}'\cup\vec{y}'\subseteq\vec{x}\cup\vec{y}$ form
a chain in $\trianglelefteq$.
By definition of $\trianglelefteq$, we can order $\vec{x}'\cup\vec{y}'$ as a sequence $z_1,\ldots,z_\ell$
such that, for every $j\in\{1,\ldots,\ell-1\}$, there is a body atom $p(\vec{s})$ in $\steprule{i}$, 
such that $z_j$ and $z_{j+1}$ occur at positions $\ppos{p}{a}\preceq\ppos{p}{b}$.

Applying Lemma~\ref{lemma_constraints_new} inductively to the sequence $z_1,\ldots,z_\ell$, we find
that, for all $j\in\{1,\ell-1\}$, $\locdec(z_{j})\fanufoedgeStar\locdec(z_{j+1})$.
For remaining variables $z\in\vec{x}\cup\vec{y}$ that are not $\hat{C}$-affected, we have that $z\stepmatch{i}\notin\hat{N}$
by Lemma~\ref{lemma_ledgraph} and the definition of $\hat{N}$; hence $\locdec(z)=\locdec_0$ in these cases.
In summary, if $T_B\neq\emptyset$, then $T_B\subseteq\bigcup\fLineagePath(z_\ell\stepmatch{i})$.

Item \eqref{item_loc_rule_head}.
The claim is obvious for tgds with $\vec{v}=\emptyset$, since it follows from item \eqref{item_loc_rule_body} in this case.
For $\vec{v}\neq\emptyset$ and $\vec{v}\cap\hat{C}=\emptyset$, the claim is also obvious since in this case
$T_H\subseteq\locdec_0$.
Next, consider $\vec{v}\cap\hat{C}\neq\emptyset$.
Since $\confluence{\hat{C}}=1$, there is at most one $y\in\vec{y}$ such that $y\stepmatch{i}\in\hat{N}$.
If there is no such $y$, then $\vec{y}\stepmatch{i}\subseteq\locdec_0$.
\begin{itemize}
\item Case 1: $\steprule{i}$ is an $\hat{E}$-tgd. Then $\vec{v}\cap\hat{C}\neq\emptyset$ implies
$\vec{v}\cap V_{\hat{E}}\neq\emptyset$, and therefore $\vec{v}\subseteq V_{\hat{E}}$ by Lemma~\ref{applemma_erule_to_scc}.
Hence, $R[i]=\vec{v}\stepmatchex{i}$ and $F[i]$ has no predecessors in $\mathcal{F}$, so $\locdec_0\fanufoedge F[i]$.
This shows that $T_H\subseteq\bigcup\fLineagePath(n)$ for any $n\in\vec{v}\stepmatchex{i}$.
\item Case 2: $\steprule{i}$ is not an $\hat{E}$-tgd. Then $T_H\subseteq\locdec_0$, so the claim holds too.
\end{itemize}
Alternatively, assume that there is $y$ with $y\stepmatch{i}\in\hat{N}$.
\begin{itemize}
\item Case 1: $\steprule{i}$ is an $\hat{E}$-tgd. Then $R[i]=\vec{v}\stepmatchex{i}$ follows as in Case~1 above.
Since $y\stepmatch{i} \nufoedge v\stepmatchex{i}$ holds for any $v\in\vec{v}$, we have
$\locdec(y\stepmatch{i})\fanufoedge\locdec(v\stepmatchex{i})=F[i]$. We obtain $T_H\subseteq\bigcup\fLineagePath(v\stepmatchex{i})$
since all $y'\in\vec{y}$ with $y'\neq y$ are such that $y'\stepmatch{i}\in\locdec_0$ (since $\confluence{\hat{C}}=1$).
\item Case 2: $\steprule{i}$ is not an $\hat{E}$-tgd. Then $\locdec(y\stepmatchex{i})=\locdec(v\stepmatchex{i})$
for every $v\in\vec{v}$, and hence $T_H\subseteq\bigcup\fLineagePath(y\stepmatchex{i})$ using the same argument as in the previous case for $y'\neq y$.\qedhere
\end{itemize}
\end{proof}

\lemmaAlgoPgChaseCorrect*
\begin{proof}
We iteratively define $\tau$ for fresh nulls introduced during a run of Algorithm~\ref{algo_pg_chase},
and verify the claimed homomorphism property for each step.
Since the outer loop does not matter here, we use $\Inter_0,\ldots,\Inter_\ell$ to denote the entire sequence of
values for $\Inter$ as they occur throughout the algorithm.

Initially, $\tau$ is the identity function on constants in $\aprogram$ and $\Dnter$, which is a
homomorphism from the initially empty set $\Inter_0$ to $\fnChase{\aprogram, \Dnter}$.

Now by way of induction, assume that $\tau$ has been defined so that it is a homomorphism $\bigcup_{i=0}^n\Inter_i\to\fnChase{\aprogram, \Dnter}$
for some $n\geq 0$, and that a further tgd application with tgd $\arule$ and match $\sigma$ is chosen in \lcRef{l_chase_choose}.
Since $\sigma$ is a match for $\arule$ on $\Inter_n$, we find a corresponding match
$\sigma_\bot$ of $\arule$ on $\fnChase{\aprogram, \Dnter}$ where $\sigma_\bot(z)=\tau(\sigma(z))$ for all variables $z$ in the body of $\arule$.
Since this match $\sigma_\bot$ is satisfied in $\fnChase{\aprogram, \Dnter}$, it can be extended to a match $\sigma_\bot^+$ such that 
$\rulehead(\arule)\sigma_\bot^+\subseteq\fnChase{\aprogram, \Dnter}$. 
Therefore, given the extended match $\sigma^+$ that is used in Algorithm~\ref{algo_pg_chase} to apply $\arule$,
we define $\tau(v\sigma^+)$ for all existential variables $v$ in $\arule$ as $\tau(v\sigma^+)=v\sigma_\bot^+$.
Then $\tau$ is a homomorphism $\bigcup_{i=0}^{n+1}\Inter_i\to\fnChase{\aprogram, \Dnter}$ as required.
\end{proof}

\lemmaAlgoPgChaseComplexity*
\begin{proof}
Using binary encoding, the numbers $i\leq M \leq f(|\Dnter|)$ can be stored in $(\rank{\aprogram}-1)$-exponential space.
To show that $\Inter$ can be stored in $(\rank{\aprogram}-1)$-exponential space, note that the sets of $\mathcal{T}$,
other than the root, correspond to nodes $\locdec(n)$ of the term tree in the following sense:
if $\mathcal{T}[d]$ ($d\in\{1,\ldots,|\mathcal{T}|\}$) is the $(d+1)$-th element in $\mathcal{T}$ (the first $\mathcal{T}[0]$ being the root),
then there is a node $\locdec\in N_\sim$ that is $d$ steps away from the root such that
$\tau(\mathcal{T}[d])\subseteq\locdec$, with $\tau$ as in Lemma~\ref{lemma_algo_pg_chase_correct}.

The terms in $\bigcup_{d=1}^{|\mathcal{T}|}\mathcal{T}[d]$ are therefore always contained in a single path of the term tree.
The length of paths of the null forest (and analogously in the term tree) are bounded by a $(\rank{\aprogram}-1)$-exponential function,
as shown in the proof of Theorem~\ref{thm_bounded_nulls}. Indeed, that proof establishes the $C_i$-depth of nulls in a strongly connected
component $C_i$ is exponentially bounded in $r_{\opfont{cxt}}^i$ and polynomially bounded in $r_{\opfont{in}}^i$.
For $\rank{C_i}>0$, the latter corresponds to a $(r_{\opfont{in}}^i - 1)$-exponential bound.
The claim about $\hat{C}$ follows since path lengths in the null forest corresponds to the $\hat{C}$-depth of nulls,
and since $\rank{\aprogram}=\rank{\hat{C}}\geq\max\{r_{\opfont{in}},r_{\opfont{cxt}}+1\}$ by Definition~\ref{def_rank_comp}.

The size of the sets $\locdec(n)$ is polynomial in $|\Dnter|$, since $\locdec(n)$ only contains nulls from
applications of non-$\hat{E}$-tgds, which do not have a dependency cycle, so that the polynomial data complexity of
jointly acyclic tgds applies \cite{KR11:jointacyc}. Therefore, the size of $\bigcup_{d=1}^{|\mathcal{T}|}\mathcal{T}[d]$ is
$(\rank{\aprogram}-1)$-exponentially bounded in $|\Dnter|$.

This bound also applies to the initial value of $\mathcal{T}$ from \lcRef{l_chase_tinit}, so that after $|q|$ executions of loop
\lcRef{l_chase_qloop}, the bound remains $(\rank{\aprogram}-1)$-exponential (note that $|q|$ is constant with respect to $\Dnter$). 

With the overall set of available terms restricted by a $(\rank{\aprogram}-1)$-exponential bound in $|\Dnter|$, this bound carries over 
to the possible atoms in $\Inter$ throughout the computation. The final check in \lcRef{l_chase_return} can also be performed
in this space bound, e.g., by iterating over all possible variable bindings with respect to $\mathcal{T}$.
\end{proof}

\lemmaAlgoPgChaseCorresponding*
\begin{proof}
Note that our requirements for ``corresponding to a chase step'' already include the injectivity of $\tau$
on the terms used in the premise. Nevertheless, the claim is still non-trivial, since the final iteration of each run
of the inner loop in Algorithm~\ref{algo_pg_chase} is not covered by the requirements, and since the algorithm, by virtue of being able to
non-deterministically break the computation at any time, can certainly perform runs that satisfy the preconditions.
In particular, the required injectivity holds for the initial term set $C_0$ for which $\tau$ was defined as the identity.

We proceed by induction. Consider the tgd application that produces
$\Inter^{j+1}_{i}$ from $\Inter^j_i$. By assumption, it corresponds to a chase step $s$.
Let $\vec{v}$ be the set of existential variables in $\steprule{s}$.

Now suppose for a contradiction that the canonical extension of $\tau$ in this step is not injective.
By our definition, $\tau$ induces a bijection $\vec{v}\sigma^+ \to \vec{v}\stepmatchex{s}$, and 
it is injective on $\mathcal{T}^j_i$ (a precondition for the application corresponding to step $s$).
Hence, the supposed violation of injectivity requires that there is a null $n\in\vec{v}\sigma^+$
and a term $t\in\bigcup\mathcal{T}^j_i$ such that $\tau(n)=\tau(t)$. Since $\tau(n)\in\vec{v}\stepmatchex{s}$,
$t$ must also be a null (constants are always mapped to themselves in $\tau$).
By the assumption, $\tau$ has been defined through a series of canonical extensions, so
the value $\tau(t)$ was assigned in a previous tgd application that also corresponded to step $s$
(since $\tau(t)\in\vec{v}\stepmatchex{s}$ can be a fresh null only for this one step).
Let $\theta^+$ be the extended match used in this tgd application (it has to agree with $\sigma$ on 
universal variables, but must use different nulls), hence $t\in\vec{v}\theta^+$.
But then $t$ was added to $\mathcal{T}$ in \lcRef{l_chase_add_to_new_bag} or \lcRef{l_chase_add_to_last_bag}.
In either case, the whole set $\vec{v}\theta^+$ occurs in the same set of $\mathcal{T}$ that also contains $t$,
so that $\vec{v}\theta^+\subseteq\bigcup\mathcal{T}^j_i$. In this case, however, $\steprule{s}\theta^+\subseteq\Inter^j_i$,
so $\steprule{s}$ is not applicable to obtain $\Inter^{j+1}_{i}$. A contradiction.
\end{proof}

\lemmaNChaseShortSequences*
\begin{proof}
Let the \emph{knobbly term tree} be obtained from the term tree by simultaneously replacing each node set $\locdec\in\ N_\sim$ with
the union $\locdec \cup\bigcup_{\locdec\fanufoedge\locdec_c} \locdec_c$ that also includes all terms in the node's direct children.
The tree structure otherwise remains the same, i.e., the term tree and the knobbly term tree are isomorphic.
In particular, the length of paths in the knobbly term tree is bounded by a $(\rank{\aprogram}-1)$-exponential function, as observed
for the term tree in the proof of Lemma~\ref{lemma_algo_pg_chase_complexity}.

Now consider any path $\tuple{d_1,s_1}\cdots\tuple{d_\ell,s_\ell}$ in the task tree, and let 
$\alpha_i$ denote the atoms $\alpha$ of Definition~\ref{def_task_tree} for every subtask $\tuple{d_i,s_i}$
with $1<i\leq\ell$.

For a path $\mathfrak{p}$, let $\mathfrak{p}|_d$ denote the path of the initial $d$ nodes in $\mathfrak{p}$.
We claim that for all $i\in\{2,\ldots,\ell\}$, $\fLineagePath(\alpha_i)|_{d_i-1}=\fLineagePath(\alpha_\ell)|_{d_i-1}$,
i.e., the paths of atoms $\alpha_i$ agree with the path of $\alpha_\ell$, except possibly for the lowest node (Claim~$\ddagger$).
This is trivial for $i=\ell$.
For a task $\tuple{d_i,s_i}$ with $1<i<\ell$, assume by way of induction that the claim was shown for $\tuple{d_{i+1},s_{i+1}}$.
Then $\fLineagePath(\alpha_{i+1})|_{d_{i+1}-1}=\fLineagePath(\alpha_\ell)|_{d_{i+1}-1}$ by induction hypothesis,
and $\fLineagePath(\alpha_{i+1})|_{d_{i+1}-1}\subseteq\fLineagePath(\alpha_{i+1})\subseteq\fLineagePath_B(s_i)$ by Definition~\ref{def_task_tree}.
Thus, $\fLineagePath_B(s_i)|_{d_{i+1}-1}\subseteq\fLineagePath(\alpha_\ell)|_{d_{i+1}-1}$, and, as $d_{i+1}\geq d_i$,
$\fLineagePath_B(s_i)|_{d_i-1}\subseteq\fLineagePath(\alpha_\ell)|_{d_i-1}$ ($\ast$).
Since $|\fLineagePath(\alpha_i)|=d_i$, there are two cases:
\begin{itemize}
\item $\fLineagePath(\alpha_i)=\fLineagePath_B(s_i)|_{d_i}$ (then $\alpha_i$ contains no new nulls, or $\steprule{s_i}$ is not an $\hat{E}$-tgd).
\item $\fLineagePath(\alpha_i)$ extends the path $\fLineagePath_B(s_i)|_{d_i-1}$ by one additional child node (then $\alpha_i$ contains new nulls from the $\hat{E}$-tgd $\steprule{s_i}$).
\end{itemize}
In either case, $\fLineagePath(\alpha_i)|_{d_i-1}\subseteq\fLineagePath_B(s_i)|_{d_i-1}\subseteq\fLineagePath(\alpha_\ell)|_{d_i-1}$ as required,
where the second $\subseteq$ is ($\ast$).

Now let $\fLineagePath_\bullet(\alpha_\ell)$ denote the path in the knobbly term tree that corresponds to $\fLineagePath(\alpha_\ell)$
by the isomorphism. By Claim~$\ddagger$, for every $i\in\{2,\ldots,\ell\}$, $\fLineagePath(\alpha_i)\subseteq\fLineagePath_\bullet(\alpha_\ell)$.
Containment is clear for $\fLineagePath(\alpha_i)|_{d_i-1}$ by Claim~$\ddagger$. Since $|\fLineagePath(\alpha_i)|=d_i$, the path has at most
one additional final node not in $\fLineagePath(\alpha_\ell)$, and the terms of this node, being a direct child, are included in $\fLineagePath_\bullet(\alpha_\ell)$.

Since $|\fLineagePath_\bullet(\alpha_\ell)|$ is bounded by a $(\rank{\aprogram}-1)$-exponential function, and since the term sets that constitute
the nodes are still of constant size (being unions of terms generated by jointly-acyclic sets of tgds, cf.\ proof of Lemma~\ref{lemma_algo_pg_chase_complexity}),
the cardinality of $\bigcup\fLineagePath_\bullet(\alpha_\ell)$ is also bounded by a $(\rank{\aprogram}-1)$-exponential function.
But then, given the fixed signature of $\aprogram$, there are at most $(\rank{\aprogram}-1)$-exponentially many atoms that can play the
role of $\alpha_i$ ($2\leq i\leq\ell$) in the above path, and since each atom is produced in just one chase step, the path corresponds to a
(strictly decreasing) sequence of at most $(\rank{\aprogram}-1)$-exponentially many chase steps.

This shows that the depth of the task tree is bounded by a $(\rank{\aprogram}-1)$-exponential function, so the size of
the task tree (and of the induced sequence of steps) is bounded by a $\rank{\aprogram}$-exponential function.
\end{proof}

\lemmaNChaseSingleAtom*
\begin{proof}
The claim refers to a single execution of the inner loop of Algorithm~\ref{algo_pg_chase}, starting from $\Inter_0=\Dnter$.
Let $\ell=|\vec{s}|$ be the length of $\vec{s}$, let $\Inter_i$ for $1\leq i\leq\ell$ denote the value of $\Inter$ after 
$i$ iterations, and let $\vec{s}[i]$ be the $i$th element of $\vec{s}$.
Moreover, let $\opfont{task}[i]$ be the task with label $\tuple{d,\vec{s}[i]}$
that gave rise to $\vec{s}[i]$ in $\vec{s}$, and let $\opfont{depth}[i]=d$ be its depth. 
If index $i$ corresponds to a subtask in the term tree, let $\alpha[i]$ be
the atom $\alpha$ of Definition~\ref{def_task_tree} that justified its inclusion as a child node.
As before, given a path $\mathfrak{p}$, we write $\mathfrak{p}|_d$ for the path of the initial $d$ nodes in $\mathfrak{p}$.

We show by induction over $i\in\{1,\ldots,\ell\}$ that for all atoms $\alpha\in\fnChase{\aprogram, \Dnter}$
with $\alpha\in\Dnter^{\vec{s}[i]-1}$ and $\fLineagePath(\alpha)\subseteq\fLineagePath_B(\vec{s}[i])$, there is a corresponding
atom $\beta=\tau^-(\alpha)\in\Inter_{i-1}$, where $\tau^-$ is well-defined by Lemma~\ref{lemma_algo_pg_chase_corresponding}.
In particular, this shows that
\begin{enumerate}
\item the match $\stepmatch{\vec{s}[i]}$ of $\steprule{\vec{s}[i]}$ has a corresponding match on $\Inter_{i-1}$ via $\tau^-$,
\item if $\steprule{\vec{s}[i]}$ contains an existential variable, then no Datalog rule in $\aprogram$ is applicable to $\Inter_{i-1}$, and
\item Algorithm~\ref{algo_pg_chase} can execute all non-redundant tgd applications in the given sequence, and $\Inter_i$ contains
an instance of the head of $\steprule{\vec{s}[i]}$.
\end{enumerate}
Item (1) is clear. Item (2) follows since the sequence of chase steps in $\fnChase{\aprogram, \Dnter}$ also respects the Datalog-first condition,
and since the conclusions of Datalog rules over $\Inter_{i-1}$ can only use the terms in $\Inter_{i-1}$, and in particular are in
$\fLineagePath(\beta)\subseteq\fLineagePath_B(\vec{s}[i])$. Together, (1) and (2) ensure that $\steprule{\vec{s}[i]}$ is applicable
at step $i$ of Algorithm~\ref{algo_pg_chase} if its head is not already satisfied in $\Inter_{i-1}$.
Note that the latter case can only occur if the same tgd application has been performed before, since earlier
chase steps $<\vec{s}[i]$ have not prevented the application of $\steprule{\vec{s}[i]}$ in $\fnChase{\aprogram, \Dnter}$.
In this situation, we ignore step $i$ and continue immediately with the next choice $i+1$ (if any), and we let $\Inter_i=\Inter_{i-1}$.
Hence we also obtain (3).

Now let $i\in\{1,\ldots,\ell\}$ and assume that the induction claim holds true for all $i'<i$.
Consider an arbitrary atom $\beta$ as in the claim.
Let $d_\beta\defeq|\fLineagePath(\beta)|$ be the depth of $\beta$, and let
$s_\beta$ be the chase step that produced $\beta\in\fnChase{\aprogram, \Dnter}$. We claim that $\beta\in\Inter_{i-1}$.

Case (i). If $d_\beta<\opfont{depth}[i]$, then $\opfont{depth}[i]>1$.
Let $k$ be the ancestor node of $i$ that is closest to $i$ (i.e., lowest in the task tree),
such that $\opfont{depth}[k]\leq d_\beta$.
By Definition~\ref{def_task_tree}, $\vec{s}[k]>\vec{s}[i]>s_\beta$, so
$k$ has a child node $j$ with label $\tuple{d_\beta,a}$ where $a\geq s_\beta$.
Then $\beta\in\Inter_{j}$ by the induction hypothesis. Moreover, due to the traversal order
of children of $k$, all nodes between position $j$ and $i$ have depth $>d_\beta$.
This ensures that $\beta\in\Inter_{i-1}$ as required (we give a more detailed account of this argument for a slightly more general
situation in Case (ii)).

Case (ii). If $d_\beta\geq\opfont{depth}[i]$, then $\opfont{task}[i]$ has a descendant node $j$ the task tree with label
$\tuple{d_\beta,s_\beta}$.
This is easy to see for $d_\beta=1$, since the path of depth $1$ is unique, so that
the condition $\fLineagePath(\alpha)\subseteq\fLineagePath_B(i)$ in Definition~\ref{def_task_tree} is tautological if
$|\fLineagePath(\alpha)|=1$. Hence the chase steps for atoms at depth $1$ appear within a single path below $i$ (with chase steps 
of such atoms in decreasing order, largest first).

For $d_\beta>1$, we find a similar path below task $i$. Care is needed since the condition
in Definition~\ref{def_task_tree} refers to the body path of the immediate parent node, which may not be
the body path of $i$, since only the nodes up to depth $d_\beta-1$ are stable yet.
We therefore make the following observation: The earliest chase step $c$ that produces an atom $\beta$ such that 
$|\fLineagePath(\beta)|=d_\beta$ and $\fLineagePath(\beta)\subseteq\fLineagePath_B(i)$ must be the step that
introduced the set of nulls denoted $R[c]$ in Definition~\ref{def_fac_null_forest}, i.e., that initialised the
node $F[i]$ of $\fLineagePath_B(i)$ at depth $d_\beta$. All other chase steps $a>c$ that infer an atom $\alpha$ with
$|\fLineagePath(\alpha)|=d_\beta$ and $\fLineagePath(\alpha)\subseteq\fLineagePath_B(i)$ have a frontier
variable that is matched to a term in $F[i]$. Therefore, $F[i]$ is a node in $\fLineagePath_B(a)$, and every
atom $\gamma$ with $\fLineagePath(\gamma)\subseteq\fLineagePath_B(i)$ also satisfies $\fLineagePath(\gamma)\subseteq\fLineagePath_B(a)$.
The chase steps that produce such atoms therefore form a sequence $c<a_1<\ldots<a_m$, and we find an according path of tasks
$\tuple{d_\beta,a_m}\to\cdots\to\tuple{d_\beta,a_1}\to\tuple{d_\beta,c}$ in the task tree.
This finishes the argument that we find the claimed descendant node $j$ of $i$.

Then $j<i$ since the task tree is traversed in topological order.
Let $\arule_\beta\defeq\steprule{s_\beta}$, $\sigma_\beta\defeq\stepmatch{s_\beta}$, and $\sigma^+_\beta\defeq\stepmatchex{s_\beta}$.
By the induction hypothesis, the tgd application for this step $s_\beta=\vec{s}[j]$ succeeded
with a match $\theta=\tau^-(\sigma_\beta)$%
\footnote{We write $\tau^-(\sigma_\beta)$ for the function that maps $z$ to $\tau^-(z\sigma_\beta)$.
This is sometimes denoted as $\tau^-\circ\sigma_\beta$ but sometimes also as $\sigma_\beta\circ\tau^-$; we avoid this confusion.},
and was performed with an extended match $\theta^+=\tau^-(\sigma^+_\beta)$.
However, it is possible that the deletions in $\mathcal{T}$ between step $j$ and step $i$ were such that
$\tau^-$ (which is only defined locally) is not the same at both steps, hence we cannot yet conclude
$\beta\in\Inter_j$ but merely that $\beta'\in\Inter_j$ for some variant of $\beta$ that might use different fresh nulls.

We therefore show by induction that all intermediate steps $k$ with $j<k<i$ are such that, after Algorithm~\ref{algo_pg_chase}
has executed \lcRef{l_chase_prunet}, $\mathcal{T}$ has length $\geq d_\beta$. This shows that any fresh nulls of step $j$
are still in $\mathcal{T}$ at step $i$, and this implies $\beta'=\beta\in\Inter_{i+1}$ as required.
Since $j$ is part of the path $\tuple{d_\beta,a_m}\to\cdots\to\tuple{d_\beta,a_1}\to\tuple{d_\beta,c}$ as defined above,
there are two options for nodes $k$:
(i) $\opfont{task}[k]=\tuple{d_\beta,a}$ for some $a\in\{a_1,\ldots,a_m\}$, or 
(ii) $\opfont{task}[k]=\tuple{e,a}$ for some $e>d_\beta$.
In case (i), as noted above, $\steprule{a}$ has a frontier variable that is matched to a term in $F[i]$,
which shows the claim about $k$ since $F[i]$ is the element at position $d_\beta$ in $\mathcal{T}$ by induction hypothesis.
In case (ii), the claim likewise follows since $\steprule{a}$, in order for $\alpha[k]$ to be at depth $e$,
has a frontier variable at depth $\geq e-1\geq d_\beta$.

This concludes the proof that $\Inter_{i-1}$ contains all atoms 
that were  inferred at a chase step before $\vec{s}[i]$ and use terms in $\fLineagePath_B(\vec{s}[i])$.
By (1)--(3) above, this completes the proof.
\end{proof}

\theoremExpSpaceCompleteness*

\begin{proof}
	Membership follows from the correctness (Lemma~\ref{lemma_algo_pg_chase_correct}) and completeness (Lemmas~\ref{lemma_algo_pg_chase_complexity},
	\ref{lemma_nchase_short_sequences}, and \ref{lemma_nchase_single_atom}) of the tree-based chase, where the algorithm follows
	$|q|$ step sequences based on the $|q|$ atoms in a particular query match to materialise the match in the $|q|$ iterations of the outer loop in \lcRef{l_chase_qloop}.
	Algorithm~\ref{algo_pg_chase} therefore decides query entailment in $\kNExpSpace{(\kappa-1)}$, and we get membership in $\kExpSpace{(\kappa-1)}$ by Savitch's Theorem.
	
	We show hardness via reduction from the word problem for alternating Turing Machines (ATMs) with a $(\kappa-1)$-exponential time bound, which is $\kExpSpace{(\kappa-1)}$-complete \cite{ATM}.
	
	Let $\mathcal{M} = \tuple{Q, \Gamma, \Delta, q_0, t}$ be an ATM with a finite set of states $Q$, a finite tape alphabet $\Gamma$ consisting of an input alphabet and a special symbol $\blank$ (blank), a transition relation $\Delta \subseteq (Q\times \Gamma) \times (Q\times \Gamma\times \{l, r\})$, an initial state $q_0 \in Q$, and a function $t: Q \to \{ \exists, \forall, \opfont{acc} \}$ that marks states as existential, universal, or accepting.
	As usual, we write configurations as $w_l q w_r$ where $w_l$ denotes the tape symbols left of the read-write head,
	$q\in Q$ is the current state, and $w_r$ is a sequence of symbols to the right of the read-write head (the first symbol of which is underneath the head).
	Tape symbols right of $w_r$ are assumed to be $\blank$.
	For a configuration $C = w_l q \sigma w_r$ and a transition $\tuple{\tuple{q, \sigma}, \tuple{q^+, \sigma^+, d}}\in\Delta$, there is a successor configuration
	\begin{align*}
		C^+ \defeq
		\begin{cases} 
			w_l \sigma^+ q^+ w_r\blank & \text{if $d = r$,} \\
			\hat{w}_l q^+ \sigma_l \sigma^+ w_r & \text{if $d = l$ and $w_l = \hat{w}_l \sigma_l$ for some $\sigma_l \in \Gamma$.}
		\end{cases}
	\end{align*}
	A configuration $w_l q w_r$ is accepting if
	\begin{enumerate*}
		\item $t(q) = \opfont{acc}$,
		\item $t(q) = \exists$ and there is an accepting successor configuration, or
		\item $t(q) = \forall$ and all successor configurations are accepting.
	\end{enumerate*}
	$\mathcal{M}$ accepts a word $w$ if the initial configuration $q_0w$ is accepting.
	
	For a word $w = \sigma_1\ldots\sigma_n$ with $n>0$ (the case of the empty word is irrelevant for hardness),
	let $\Dnter_w$ denote the database with the facts
	$\pnFirst_0(1)$, $\pnNext_0(i, i+1)$ for $i\in\{1,\ldots,n-1\}$, $\pnLast_0(n)$,
	and $\pnSymb(i,\sigma_i)$ for $i\in\{1,\ldots,n\}$.
	Let $\aprogram_\leq$ denote a set of tgds constructed as in the \hyperref[proof_thmRankCompl]{proof of Theorem~\ref*{theo_rank_compl}}
	to entail a $(\kappa-1)$-exponentially long chain over the initial chain encoded in predicates $\pnFirst_0$, $\pnNext_0$, and $\pnLast_0$.
	We assume that the constructed $(\kappa-1)$-exponential chain is encoded predicates $\pnFirst$, $\pnNext$, and $\pnLast$.
	Moreover, we extend $\aprogram_\leq$ with the following rules to encode the transitive (non-reflexive) closure of $\pnNext$:
	\begin{align}
	\pnNext(x,y) & \to \pnNext^+(x,y)\\
	\pnNext(x,y)\wedge \pnNext^+(y,z)& \to \pnNext^+(x,z)
	\end{align}

	We use facts $\pnStep(c,i)$ to encode that $c$ is an $i$th configuration in a run, i.e., there is a sequence $c_0, \ldots, c_i = c$ with $c_0$ being the initial configuration,
	facts $\pnState(c,q)$ to encode the state $q$ of configuration $c$,
	facts $\pnHPos(c,i)$ to encode that the head of $\mathcal{M}$ is at the $i$th position of the tape in configuration $w$, and
	facts $\pnTape(c,i,s)$ to encode that the $i$th position of the tape of configuration $c$ is symbol $s$.
	The following tgds $\aprogram_{\opfont{init}}$ then establish the initial configuration $c_0$ (with constants $c_0$, $q_0$, and $\blank$):
	\begin{align}
		\pnFirst(i) &\to \pnStep(c_0,i) \land \pnState(c_0, q_0) \land \pnHPos(c_0,i)\label{eq_atm_init_conf}\\
		\pnFirst(i) \land \pnFirst_0(j) &\to \predname{samepos}(i,j)  \label{eq_atm_match_chains_base}\\
		\predname{samepos}(i,j)\land\pnNext(i,i^+)\land\pnNext_0(j,j^+) &\to\predname{samepos}(i^+,j^+) \label{eq_atm_match_chains_step}\\
		\predname{samepos}(i,j)\land\pnSymb(j,s) &\to\pnTape(c_0,i,s) \label{eq_atm_init_word}\\
		\predname{samepos}(i,j)\land\pnLast_0(j)\land\pnNext^+(i,i^+) & \to \pnTape(c_0,i,\blank) \label{eq_atm_init_blanks}
	\end{align}
	The tgd \eqref{eq_atm_init_conf} initialises step, state, and head position of the initial configuration $c_0$.
	The tgds \eqref{eq_atm_match_chains_base} and \eqref{eq_atm_match_chains_step} match elements of the initial chain
	to the first elements of the $(\kappa-1)$-exponential chain. This is then used to transcribe the input word to the
	initial tape \eqref{eq_atm_init_word}, with the remaining tape filled with blanks \eqref{eq_atm_init_blanks}.

	Next, we encode how to construct a null for the successor of a configuration.
	For this purpose, we introduce predicates $\pnNextConf_\delta$ for every $\delta\in\Delta$, each encoding successor configurations 
	for this transition in a certain step.
	Concretely, for all $\delta,\delta'\in\Delta$ with $\delta = \tuple{\tuple{q, \sigma}, \tuple{q^+, \sigma^+, d}}$, the tgds $\aprogram_{+}$ contain
	the following tgds (with constants $q$ and $\sigma$):
	\begin{align}
		\pnStep(C,i) \land \pnNext(i,i^+) \land \pnHPos(C,p) \land
			\pnTape(C,p,\sigma) \land \pnState(C,q) &\to \exists v_\delta \ldotp \pnNextConf_\delta(i^+,C,v_\delta)\label{eq_atm_conf_get}\\
		\pnNextConf_\delta(i,C^-,C) &\to \pnNextConf_{\delta'}(i,C,C) \label{eq_atm_conf_base_prop}\\
		\pnNextConf_\delta(i,C^-,C) \land \pnNextConf_{\delta'}(j,C^-,C^-) &\to \pnNextConf_{\delta'}(j,C,C) \label{eq_atm_conf_step_prop}
	\end{align}
	The tgd~\eqref{eq_atm_conf_get} creates a new null for each valid transition $\delta$ from a configuration $C$ with state $q$ and $s$ at the head position.
	Facts of the form $\pnNextConf_{\delta}(i^+,C,C^+)$ encode that $C^+$ is a $\delta$-successor of $C$ at depth $i^+$.
	The idea is that each $i$ of the linear order encoded by $\pnNext$ can be used at most once per configuration path $c_1, \ldots, c_n$ to create a successor.
	
	$\aprogram_+$ will induces a saturating strongly connected component in the overall labeled dependency graph.
	Indeed, Definition~\ref{def_esaturating} is satisfied for the set $E = \{ v_\delta \ledgeto{C(\delta')} v_\delta' \mid \delta, \delta' \in \Delta \}$,
	where we use $C(\delta')$ to disambiguate the universal variables $C$ in tgds of the form \eqref{eq_atm_conf_get}.
	Note that the empty path is the only $\bar{E}$-path in this case.
	Now, for tgd $\arule_{\delta'}$ with existential variable $v_{\delta'}$, tgds~\eqref{eq_atm_conf_base_prop} ensure base-propagation for $e = v_\delta' \ledgeto{C(\delta')} v_{\delta'} \in E$ and tgds~\eqref{eq_atm_conf_step_prop} ensure step-propagation for $e_1, e_2 = v_\delta' \ledgeto{C(\delta')} v_{\delta'}  \in E$.

	Next, the tgds $\aprogram_{\to}$ encode the $\delta$-successor $C^+$ of a configuration $C$ based on the encoding of $C$.
	For $\delta = \tuple{\tuple{q, \sigma}, \tuple{q^+, \sigma^+, d}}$, $\aprogram_{\to}$ contains the following tgds:
	\begin{align}
		\begin{split}
		\pnStep(C,i) \land{} \pnNext(i,i^+) \land \pnNextConf_\delta(i^+,C,C^+)
			& \to \pnStep(C^+, i^+) \land \pnNextConfReal_\delta(C,C^+) \land \pnState(C^+,q^+)
		\end{split}\label{eq_atm_trans_step}\\
		\begin{split}
		\pnHPos(C,p) \land \pnNext^+(p,p') \land{} &\pnTape(C,p',s') \land \pnNextConfReal_\delta(C,C^+) \\
			&\to \pnTape(C^+,p',s')
		\end{split}\label{eq_atm_trans_tape_left}\\
		\begin{split}
		\pnHPos(C,p) \land \pnNext^+(p',p) \land{} &\pnTape(C,p',s') \land \pnNextConfReal_\delta(C,C^+) \\
			&\to \pnTape(C^+,p',s')
		\end{split}\label{eq_atm_trans_tape_right}\\
		\pnHPos(C,p) \land{} \pnNextConfReal_\delta(C,C^+) &\to \pnTape(C^+,p,\sigma^+)\label{eq_atm_trans_tape_head}
	\end{align}
	Here, the tgd~\eqref{eq_atm_trans_step} writes the step and state of $C^+$ and introduces a notion $\pnNextConfReal_{\delta}(C,C^+)$, which states that $C^+$ is the `real' $\delta$-successor of $C$ (tgds~\eqref{eq_atm_conf_base_prop} and \eqref{eq_atm_conf_step_prop} makes $C^+$ a pseudo-successors of itself for all $i \leq i^+$).
	The tgds~\eqref{eq_atm_trans_tape_left}--\eqref{eq_atm_trans_tape_head} write the tape of configuration $C^+$.
	Moreover, $\aprogram_{\to}$ requires tgds to update the head position, and we make a case distinction based on the direction $d$ to which the head moves.
	If $d = l$, $\aprogram_{\to}$ contains the tgd:
	\begin{align}
		\pnHPos(C,p) \land \pnNext(p',p) \land \pnNextConfReal_\delta(C,C^+) &\to \pnHPos(C^+,p') %
	\end{align}
	and, otherwise, if $d = r$, $\aprogram_{\to}$ contains the tgd:
		\begin{align}
		\pnHPos(C,p) \land{} &\pnNext(p,p') \land \pnNextConf_\delta(C,C^+) \to \pnHPos(C^+,p')
	\end{align}

	Finally, we define tgds to evaluate which configurations are accepting.
	The set of tgds $\aprogram_{\opfont{eval}}$ contains the following tgds, where we introduce further 
	predicates $\pnAccept$ and $\pnAccept_\delta$ for each $\delta\in\Delta$. %
	\begin{align}
	\intertext{1. For every state $q_a\in Q$ with $t(q_a) = \opfont{acc}$:}
		\pnState(C, q_a) &\to \pnAccept(C) \label{eq_atm_eval_q_accept}\\
	\intertext{2. For every state $q_\exists\in Q$ with $t(q_\exists) = \exists$, and every transition $\delta = \tuple{\tuple{q_\exists,\sigma}, \tuple{q^+,\sigma^+,d}} \in \Delta$:}
		\pnNextConfReal_\delta(C,C^+) \land \pnAccept(C^+) &\to \pnAccept(C) \label{eq_atm_eval_q_exists}\\
	\intertext{3. For every state $q_\forall\in Q$ with $t(q_\forall) = \forall$, and every symbol $\sigma\in\Gamma$ with $\delta_1,\ldots,\delta_n$ a list of all transitions
	of form $\tuple{\tuple{q_\forall,\sigma}, \tuple{q^+,\sigma^+,d}} \in \Delta$:}
		\pnNextConfReal_{\delta_1}(C,C^+) \land \pnAccept(C^+) &\to \pnAccept_{\delta_1}(C) \label{eq_atm_eval_q_forall_first}\\
		\pnAccept_{\delta_i}(C) \land \pnNextConfReal_{\delta_{i+1}}(C,C^+) \land \pnAccept(C^+) &\to \pnAccept_{\delta_{i+1}}(C) \label{eq_atm_eval_q_forall_step}\\
		\pnAccept_{\delta_i}(C) &\to \pnAccept(C)\label{eq_atm_eval_q_forall_last}
	\end{align}
	The tgds recursively mark configurations as accepting.
	Concretely, \eqref{eq_atm_eval_q_accept} directly marks all configuration with an accepting state, and
	\eqref{eq_atm_eval_q_exists} marks configurations with an existential state if they have an accepting successor.
	The tgds~\eqref{eq_atm_eval_q_forall_first}--\eqref{eq_atm_eval_q_forall_last} mark a configuration with a universal state if all of its successors are accepting.
	To achieve this, the successors are traversed in an arbitrary but fixed order $\delta_1, \ldots, \delta_n$.
	Semantically, one could just combine these rules into one, but the above splitting ensures path-guardedness.
	Indeed, the relation $\preceq$ of Definition~\ref{def_constraints_new} for the above sets of tgds contains
	$\ppos{\pnNextConf_{\delta}}{2} \preceq \ppos{\pnNextConf_{\delta}}{3}$ and $\ppos{\pnNextConfReal_{\delta}}{1} \preceq \ppos{\pnNextConfReal_{\delta}}{2}$.
	Therefore, all tgds are path-guarded.

	To conclude, let $\aprogram = \aprogram_\leq \cup \aprogram_{\opfont{init}} \cup \aprogram_{+} \cup \aprogram_{\to} \cup \aprogram_{\opfont{eval}}$.
	As $\aprogram$ constructs and evaluates the configuration tree of $\mathcal{M}$, we obtain that $\mathcal{M}$ accepts $w$ if and only if $\fnChase{\aprogram, \Dnter_w} \models \pnAccept(c_0)$.
	Since $\aprogram$ is saturating with $\rank{\aprogram} = \kappa$, arboreous, and path-guarded, this shows the claim.
\end{proof}
 
\end{document}